\newif\ifllncs
\llncsfalse

\documentclass{article}
\usepackage{mft}

\usepackage{enumitem} 

\newcommand{\changed}[1]{{\textcolor{blue}{#1}}}

\newcommand{\chlspace}{\mathscr{C}}
\newcommand{\keyspace}{\mathscr{K}}

\newcommand{\Reveal}{\mathsf{Reveal}}
\newcommand{\Exp}{\mathsf{Exp}}

\newcommand{\prover}{\ensuremath{\mathcal{P}}}
\newcommand{\verify}{\ensuremath{\mathcal{V}}}

\newcommand{\starxi}{\ensuremath{\xi}}
\newcommand{\starprover}{\ensuremath{\prover}}
\newcommand{\starverify}{\ensuremath{\verify}}

\newcommand{\tvproj}{\ensuremath{\Pi^{T_V}}}
\newcommand{\accpovm}{\ensuremath{M^\mathrm{acc}}}
\newcommand{\rejpovm}{\ensuremath{M^\mathrm{rej}}}

\title{On Removing Interaction from Quantum Proofs}
\author{Nicholas Spooner \\ Cornell University \and Max Tromanhauser \\ Cornell University}
\date{}

\begin{document}

\maketitle
\thispagestyle{plain}

\begin{abstract}
	An important open question in quantum cryptography is the construction of publicly-verifiable NIZKs for QMA. Classically, one can construct NIZKs for NP in the random oracle model (and sometimes in the standard model) by compiling an honest-verifier ZK (HVZK) $\Sigma$-protocol for NP using the Fiat--Shamir transformation. Broadbent and Grilo introduced a quantum analog of a $\Sigma$-protocol (which they call a $\Xi$-protocol) in which the prover's first message is quantum, and show that HVZK $\Xi$-protocols exist for QMA. However, it is not clear how to compile such protocols into NIZKs in the (Q)ROM, because the Fiat--Shamir transformation seems to be incompatible with quantum messages. In this work we give formal evidence that this is indeed the case: we show that if generic ``Fiat--Shamir-like'' compilers for quantum protocols exist in the QROM (with small completeness and soundness error) then $\QMA = \BQP$.
\end{abstract}

\section{Introduction}
\label{sec:intro}

This work is motivated by the problem of constructing \emph{nontrivial publicly-verifiable non-interactive (quantum) proof systems} for $\QMA$.
A proof system for $\QMA$ is a protocol in which a prover aims to convince a verifier that a given input $x$ belongs to a language $\lang \in \QMA$.
By {\em nontrivial} we mean that the system satisfies some useful property, like zero knowledge or succinctness, which is not achieved by simply sending the $\QMA$ witness.
A proof system is \emph{non-interactive} if the communication consists of a single message from the prover.

One common approach to designing non-interactive proof systems is to start from an \emph{interactive} proof system and then remove the interaction using a generic compilation.
This is most commonly achieved using the \emph{Fiat--Shamir (FS) transformation} \cite{FiatS86}, which applies to classical public-coin interactive proofs (i.e., where the verifier only sends uniformly random strings).
In a nutshell, the FS transformation has the prover simulate each verifier message by applying a cryptographic hash function $h$ to the protocol transcript so far.
Provided the hash function is sufficiently ``unpredictable,'' this forces the prover to commit to all of its prior messages before learning the verifier's next message.

The FS transformation can be proven secure if $h$ is modelled as a (quantum-accessible) random oracle.
By applying the transformation to public-coin zero knowledge proofs and succinct arguments, one can obtain non-interactive zero knowledge (NIZK) arguments and succinct non-interactive arguments (SNARGs) in the random oracle model (ROM).
While it is known that the Fiat--Shamir transformation cannot be proven secure in the standard model in general \cite{GoldwasserK03}, for certain protocols one can instantiate $h$ with a well-chosen cryptographic hash function to obtain a secure scheme in the standard model \cite{CanettiCHLRR18}.\footnote{This is a very substantial line of work and so we do not attempt to provide a full list of citations.}

Unfortunately we do not know how to leverage the FS transformation to obtain publicly-verifiable NIZKs or SNARGs for $\QMA$ even in the ROM.
The reason is that known interactive proof systems for $\QMA$ are either based on protocols for classical verification of quantum computations, which have classical communication but require the verifier to hold private state, or are public-coin but with quantum prover messages \cite{BroadbentG22,GunnJMZ22}.
In the former case, one can obtain \emph{designated-verifier} NIZKs/SNARGs for $\QMA$ in the ROM by leveraging fully-homomorphic encryption \cite{AlagicCGH20,BartusekKLMMVVY22}; it is unclear whether this approach can lead to publicly verifiable NIZKs/SNARGs.

For protocols with quantum communication, one possible adaptation of classical Fiat--Shamir might be to coherently compute a classical hash of the first message and measure the output. This is problematic for two reasons.
First, the resulting protocol may not be \emph{complete}, because the measurement may substantially disturb the quantum message and cause the verifier to reject.
Second, the resulting protocol may not be \emph{sound}, because (e.g.) if the verifier checks the hash in the $Z$ basis, the adversary can apply arbitrary $Z$ operations without affecting that check.
(These issues are related to the impossibility of signing quantum states \cite{AlagicGM21}.)

In this work we seek to determine whether this difficulty represents a barrier to \emph{any} generic technique for removing interaction from quantum protocols.
More precisely, we ask the question:
\begin{center}
  {\em Is there a generic method for removing interaction from protocols with quantum communication?}
\end{center}

\subsection{Our Results}
\label{sec:results}

This work initiates the study of that question by providing a partial negative answer.

First, we formalize what we mean by a ``generic method for removing interaction'' by defining an object we call a (Quantum) Non-Interactivity Compiler ((Q)NIC).
Informally, a QNIC is a black-box compilation that converts an interactive $\Sigma$-protocol with quantum communication (also known as a $\Xi$-protocol \cite{BroadbentG22}) into a publicly-verifiable non-interactive quantum argument.
We additionally define a subclass of {\em straightline} QNICs which capture compilation strategies that require only a single black-box interaction with the interactive proof.
The FS transformation is an example of a straightline classical NIC.

Our main contribution is to rule out the existence of straightline QNICs with small completeness and soundness error assuming the existence of quantum commitments\footnote{
  Note that quantum zero knowledge for any hard-on-average language implies quantum commitments \cite{BrakerskiCQ23}.
}.

\begin{theorem}[Informal]
  Let $\oracledist$ be an efficiently simulatable distribution over unitary oracles. Assume that
  \begin{itemize}
    \item there exist quantum commitments with respect to $\oracledist$, and
    \item there exist $(1 - \varepsilon)$-complete, $\delta$-sound, straightline QNICs relative to $\oracledist$ where
      $\varepsilon(n) + 2\delta(n) \leq \frac{3}{5}$.
  \end{itemize}
  Then $\QMA = \BQP$.
  \label{thm:main-result}
\end{theorem}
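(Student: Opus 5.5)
The plan is to give a $\BQP$ algorithm for every $\QMA$ language $\lang$, built from the assumed straightline QNIC. We start from an HVZK $\Xi$-protocol for $\lang$: the Broadbent--Grilo construction, instantiated with the assumed quantum commitments relative to $\oracledist$, so that hiding and binding hold relative to the oracle. Compiling it with the straightline QNIC $C$ yields a non-interactive argument in which the compiled prover $C^{\prover,O}$ makes a single black-box interaction with the interactive prover $\prover$, and the compiled verifier $C^{\verify,O}$ makes black-box use of the interactive verifier. On input $x$, the $\BQP$ decider will run the compiled prover, but will answer its single interaction with the \emph{honest-verifier simulator} $\mathsf{Sim}$ of the $\Xi$-protocol instead of the real prover $\prover(w)$. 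Because $\oracledist$ is efficiently simulatable, the oracle can also be simulated. The decider then runs the compiled verifier on the resulting proof and accepts if and only if it accepts.

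For yes-instances, the argument is as follows. By completeness, the real interaction with $\prover(w)$ produces an accepting proof with probability at least $1-\varepsilon$. The straightline compiler queries $\prover$ once: it sends the first message, then a challenge it has chosen, and receives the response. The key claim is that replacing this interaction with the simulator changes the acceptance probability by at most a small amount. Here we must handle the fact that the compiler picks its challenge \emph{after} seeing the first message, which may depend on the oracle, whereas HVZK only covers uniformly random challenges. We therefore intend to use the specific structure of Broadbent--Grilo, where the simulator can answer any fixed challenge by committing to a state consistent with that challenge. Hiding of the commitments then means the compiler's challenge choice is essentially independent of the committed state. A hybrid argument based on hiding would bound the loss, and we aim for a guarantee of acceptance probability at least $1-\varepsilon-\delta$ or so.

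For no-instances, the simulator-driven compiled prover is an efficient cheating prover for the non-interactive argument, so by $\delta$-soundness it is accepted with probability at most $\delta$. The two cases are separated by a constant gap whenever $1-\varepsilon-\delta-\delta>2/5$, which is exactly the condition $\varepsilon+2\delta\le 3/5$ up to the slack needed for amplification. Standard repetition with a majority vote then gives a $\BQP$ algorithm, so $\QMA=\BQP$.

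The main obstacle is the yes-instance hybrid. The compiler's challenge choice is adaptive, and it may rewind or inspect the first message in ways a straightline definition permits. If the simulator cannot be made to work against such adaptively chosen challenges, the fallback is to guess the challenge uniformly. With a two-valued (or polynomially bounded) challenge space, the guess succeeds with constant probability, and this is where the $2\delta$ accounting would come from: loss from a wrong guess is charged against soundness on a random branch. Making this precise is the step I expect to be hardest, and the specific constant $3/5$ suggests the intended argument has this kind of guessing structure.
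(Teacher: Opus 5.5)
There is a genuine gap, and it is exactly the yes-instance hybrid you flag. $\Sim(x,c_0)$ must fix its first message before it knows the challenge, while the compiler chooses $c$ after seeing, and possibly as a function of, that message. Hiding does not help. It says $c$ is computationally independent of the simulator's guess $c_0$, which is precisely why $\Pr[c=c_0]\approx 1/|\chlspace|$, and statistical binding rules out reopening consistently with a different $c$.

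That probability is negligible in the setting of the theorem, because the $\Xi$-protocols here are negligibly sound. If $|\chlspace|$ were polynomial, special HVZK plus negligible soundness would already put $\lang$ in $\BQP$: guess $c_0$, run $\Sim(x,c_0)$, and run $\verify$. On a no-instance this is a cheating prover whose success probability is $1/|\chlspace|$ times the simulator's acceptance rate. So your small-challenge fallback is vacuous. Even where it applies, it gives yes-acceptance about $(1-\varepsilon)/|\chlspace|$, not $1-\varepsilon-\delta$.

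A sanity check confirms the approach cannot work as stated. Nothing in it uses that the first message is quantum. It would therefore apply verbatim to Fiat--Shamir on a parallel-repeated classical $\Sigma$-protocol in the QROM, which is sound \cite{DonFMS19,LiuZ19}; there the simulated first message visibly hashes to the guessed challenge only with probability about $1/|\chlspace|$. Your constants are also reverse-engineered: with your claimed bounds the gap would be $1-\varepsilon-2\delta$, and nothing in your argument produces the threshold $2/5$.

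The missing idea is to exploit the quantumness of the first message to \emph{defer} it until after the challenge is known. The paper modifies the protocol so that the first message is encrypted with a QCA-secure scheme. The key $k$ is part of the protocol description, so the black-box compiler cannot depend on it. The decider then proceeds as follows:
\begin{itemize}
\item It gives $\mathcal{T}_P$ an encryption of half of $\ket{\Phi^+}$.
\item It runs $\Sim(x,c)$ only once $c$ is known, and returns $\reg{Z}$.
\item After $\mathcal{T}_{V,1}$, it projects onto $\widetilde{\Pi}_{k,r}$ and swaps in $\Enc_{k;r}$ of the simulator's first message.
\item It applies $T_{V,1}^\dagger$ to obtain a forged proof.
\end{itemize}
This uncomputation is why the oracle must be unitary, a hypothesis your plan never uses.

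Showing that the forgery looks honest requires two further results. First, QCA security implies retrospective security, proved via the Stinespring continuity theorem of \cite{KretschmannSW08}. Second, valid ciphertexts preserve their encryption randomness.

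The constants also have a different source from the one you guessed. The $2\delta$ arises because two branches are each turned into a separate $\delta$-bounded forgery: the branch where the compiler destroys the ciphertext yet still accepts, and the valid-ciphertext branch. The threshold $3/5\approx 1-\frac{2}{3\sqrt{3}}$ is an alternating-projections loss, incurred because the validity measurement moves the state out of the image of $T_{V,1}$ before uncomputing. It is not a challenge-guessing loss.
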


Our full result (stated in \cref{sec:impossibility}) covers distributions over arbitrary unitary oracles, although the class collapsing implication is less straightforward in the general case.
It is important that the result relativizes because Fiat--Shamir is not secure in the plain model in general, so any meaningful impossibility should hold in, for example, the (quantum) random oracle model.
In particular, \cref{thm:main-result} implies corollaries in the following well-studied settings.
\begin{itemize}
  \item In the plain model (i.e., $\oracledist$ is an empty distribution) or the common reference string (CRS) model (i.e., where $\oracle_\mathsf{crs}\ket{x} = \ket{x \oplus \mathsf{crs}}$ for $\oracle_\mathsf{crs} \gets \oracledist$), assuming the existence of both quantum commitments and small-error straightline QNICs implies $\QMA = \BQP$.

  \item In the QROM or the quantum Haar random oracle model (QHROM), quantum commitments exist unconditionally. In these cases the implication $\QMA = \BQP$ follows from just the existence of small-error straightline QNICs relative to those models.
\end{itemize}

We require that $\oracledist$ be a distribution over unitary oracles because our decision algorithm requires access to the oracle's inverse.
Thankfully this is not an onerous requirement and it is satisfied by most relevant oracle models
(in particular, any quantum-accessible classical oracle).

\paragraph{Open Questions.} Our result represents a first step towards understanding whether interaction can be generically removed from quantum protocols. There remains plenty of scope for future work:
\begin{itemize}
    \item It remains open whether our impossibility result can be extended to a broader class of compilers which either are not restricted to straightline behavior or have larger completeness and soundness error. Note that while the completeness and soundness error of a QNIC can be generically reduced via parallel repetition, the resulting protocol is no longer straightline and so our result does not apply. Note also that there do exist classical NICs which are not straightline, and so could potentially be used in the quantum setting (e.g., the Fischlin transform~\cite{Fischlin05}, which requires multiple interactions with the prover to generate a proof).
    \item Our result applies only to QNICs which yield publicly-verifiable protocols; it remains an intriguing open question to design QNICs that yield designated-verifier protocols.
    \item Our result rules out QNICs that work for any $\Xi$-protocol; potentially there exist QNICs that work for a subclass of protocols satisfying some constraint.
\end{itemize}

There may also be other ways to obtain non-interactive proof systems for $\QMA$ that do not look like removing interaction (analogously to e.g. the hidden bits model in the classical setting).

\subsection{Related Work}

Classically, the FS transformation has been proven secure in the ROM \cite{BellareR93,PointchevalS96}, and a long line of work was required to upgrade this to the QROM.
Several works \cite{DagdelenFG13,AmbainisRU14} demonstrated barriers for standard proof techniques, while others demonstrated that particular modifications of the transformation were secure \cite{Unruh15,Unruh17,KiltzLS18}.
The standard Fiat--Shamir was eventually proven secure in the QROM in \cite{DonFMS19,LiuZ19}.
These results all concern the transformation applied to classical $\Sigma$-protocols which do not support quantum communication.
Another line of work has proven that the classical FS transformation is not secure in the plain model in general \cite{GoldwasserK03}.
In contrast, our result is agnostic to the oracle model and so covers (e.g.) the QROM.

Several strategies for developing nontrivial proof systems for $\QMA$ have appeared in the literature.
It has been shown that there exist public-coin, three-message succinct arguments \cite{GunnJMZ22} (assuming the quantum PCP conjecture) and zero knowledge proofs \cite{BroadbentG22} for $\QMA$ if we allow the prover to send quantum messages.
If we wish to avoid interaction, the main approaches have been to use a designated-verifier \cite{AlagicCGH20,BartusekKLMMVVY22,ColadangeloVZ20,Shmueli21,MorimaeY22} --- i.e., a verifier that receives a secret setup with which it can confirm the validity of the proof --- or to rely on heuristic security of post-quantum obfuscation \cite{BartusekM22,BartusekJRT25}.

\subsection{Technical Overview}
\label{sec:tech-overview}

Before we attempt to prove our main result, we must formalize the class of transformations which we are considering.
At a high level, we'd like to capture strategies which resemble the structure of \cref{fig:intro-compiler} --- i.e., those which, through a straightline interaction with the prover, create a non-interactive quantum proof that can be unfolded into a valid transcript.
This model consists of two algorithms: $\mathcal{T}_P$ which, through a single interaction with the prover, generates a quantum proof; and $\mathcal{T}_V$ which confirms the validity of the proof and outputs a transcript consistent with the compiled protocol.
The proof is accepted if $\mathcal{T}_V$'s output is accepted by the verifier\footnote{
  Our model actually captures strategies which are slightly more general.
  Rather than having the non-interactive verifier forward $\verify$'s output, we allow $\mathcal{T}_V$ to make a final decision after receiving $\verify$'s verdict.
  See \cref{sec:model} for details.
}.
Together $\mathcal{T}_P$ and $\mathcal{T}_V$ define a straightline QNIC, and we require them to be ``universal'' in the sense that, for any secure $\Xi$-protocol, the resulting non-interactive argument is also secure.

\begin{figure}
  \centering
  \begin{tikzpicture}[
    box/.style={draw, thick, minimum width=3em, minimum height=8em, align=center},
    msg/.style={->, thick, shorten <=0.5em, shorten >=0.5em},
    label/.style={midway, fill=white, inner sep=3pt}
  ]
    \def\padding{0.5em}

    \node[box] (P) {$\prover$};
    \node[box, right=5em of P] (TP) {$\mathcal{T}_P$};
    \node[box, right=5em of TP] (TV) {$\mathcal{T}_V$};
    \node[box, right=5em of TV] (V) {$\verify$};

    \draw[msg] ($(P.east)+(0, 3em)$) -- ($(TP.west)+(0, 3em)$) node[label, above] {$\reg{A}$};
    \draw[msg] ($(TP.west)+(0, 0mm)$) -- ($(P.east)+(0, 0mm)$) node[label, above] {$c$};
    \draw[msg] ($(P.east)+(0,-3em)$) -- ($(TP.west)+(0,-3em)$) node[label, above] {$\reg{Z}$};

    \draw[msg] (TP.east) -- (TV.west) node[label, above] {$\pi$};

    \draw[msg] ($(TV.east)+(0, 3em)$) -- ($(V.west)+(0, 3em)$) node[label, above] {$\reg{A'}$};
    \draw[msg] ($(TV.east)+(0, 0mm)$) -- ($(V.west)+(0, 0mm)$) node[label, above] {$c'$};
    \draw[msg] ($(TV.east)+(0,-3em)$) -- ($(V.west)+(0,-3em)$) node[label, above] {$\reg{Z'}$};

    \draw[thick] (V.south)  -- ++(0,-1.2em) node[below] {$0/1$};

    \draw[dashed,semithick] ($(P.west)+(-1em,5em)$)
      -- ($(TP.east)+(1em,5em)$)
      -- ($(TP.east)+(1em,-5em)$)
      -- ($(P.west)+(-1em,-5em)$)
      -- cycle;
    \node[anchor=west] at ($(P.west)+(-1em,6em)$) {Non-Interactive Prover};

    \draw[dashed,semithick] ($(TV.west)+(-1em,5em)$)
      -- ($(V.east)+(1em,5em)$)
      -- ($(V.east)+(1em,-5em)$)
      -- ($(TV.west)+(-1em,-5em)$)
      -- cycle;
    \node[anchor=west] at ($(TV.west)+(-1em,6em)$) {Non-Interactive Verifier};
  \end{tikzpicture}
  \caption{Straightline ``Fiat--Shamir-like'' compiler (where $\reg{A}$, $\reg{Z}$, $\reg{A'}$, and $\reg{Z'}$ are quantum registers).}
  \label{fig:intro-compiler}
\end{figure}

Our strategy towards proving \cref{thm:main-result} is to assume that such a $(\mathcal{T}_P, \mathcal{T}_V)$ exist and leverage them to construct a decision algorithm for a $\QMA$-complete language $\lang$.
As an intermediate objective, we construct a family of $\Xi$-protocols, each of which securely proves membership in $\lang$, but if compiled with $(\mathcal{T}_P, \mathcal{T}_V)$, the resulting NIZK will be vulnerable to a forgery attack.
That attack is then used directly to construct the decision algorithm for $\lang$.

The key insight for this counterexample family comes from understanding the security of the classical FS transformation.
At its heart, the transform enforces a dependence between the initial message $a$ and the challenge $c$.
That dependence prevents a cheating prover from correlating their message with a favorable challenge.
Any prover that succeeds in the non-interactive argument must behave as though it had committed to $a$ before learning $c$, thereby inheriting the soundness of the interactive protocol.

As we expand our ambitions to include quantum communication, introducing a correlation between a quantum message and a classical challenge introduces a complication: it requires a measurement be performed on the message state.
Measurements are necessarily destructive and may irreversibly disturb the message in a way which compromises its completeness.
This is the tension which we exploit to build our counterexample.

Our construction starts with an existing $\Xi$-protocol $(\prover, \verify)$ (e.g., from \cite{BroadbentG22}) which we will modify. 
The QNIC algorithms $(\mathcal{T}_P, \mathcal{T}_V)$ together can be understood as a man-in-the-middle attempting to make some undetectable measurement on the first message.
Given this framing, our goal is to modify $\prover$ and $\verify$ such that any man-in-the-middle adversary must either pass the first message through unaltered or else $\verify$ will reject.

\paragraph{Quantum Ciphertext Authentication.}
Introduced in \cite{AlagicGM18}, quantum ciphertext authentication (QCA) provides a natural tool for enforcing this behavior.
Any eavesdropper listening in on a QCA-secure quantum encryption scheme will be unable to alter or forge a ciphertext without being caught.
The security property is defined through a simulation-based definition where an unbounded adversary is given a single copy of either a `real' or `simulated' ciphertext and it attempts to distinguish between the cases.

Formally, the `real world' is modeled as a quantum channel $\Gamma = \E_{k}[\Dec_k \circ \mathcal{A} \circ \Enc_k]$ (sometimes referred to as the attacker's `effective attack map') which composes encryption, the fixed man-in-the-middle attack $\mathcal{A}$, and then decryption.
The channel's input message can be adversarially chosen by $\mathcal{A}$ and entangled with an outside register $\mathcal{A}$ controls.
In the simulated world, we have a corresponding channel $\Lambda$ where $\Enc$ and $\Dec$ are replaced with `ideal' versions:
\begin{itemize}
\item The ciphertext provided by the simulated $\Enc$ does not contain the adversary's message but instead encrypts a state which is maximally entangled with a register private to $\Enc$ and $\Dec$.
\item The simulated $\Dec$ then validates the ciphertext and confirms that the entanglement between the message register and the private register has been preserved.
\end{itemize}
If both conditions are met, $\Lambda$ outputs the original message.
An encryption scheme is QCA-secure if $\Gamma$ and $\Lambda$ are negligibly close for all $\mathcal{A}$.

Using a QCA scheme as a building block, our counterexample family is constructed by modifying the $\Xi$-protocol as follows.
The prover's initial message is encrypted under a key $k$ which is part of the protocol description.
The verifier sends a uniform challenge as before, and the prover's final message is similarly unchanged.
The verifier attempts to decrypt the first message register $\reg{A}$ using $k$.
If this fails, then the verifier rejects; otherwise, it verifies the transcript incorporating the decrypted message.

Note that this construction yields a \emph{family} of protocols: $\prover_k$ and $\verify_k$ for each key $k$. Our eventual decision algorithm will compile a randomly-chosen protocol from this family. Importantly, we are not modifying the model to have the prover and verifier share a key: $k$ is part of the \emph{description} of the protocol. The role of $k$ is to exploit the black-box nature of $(\mathcal{T}_P,\mathcal{T}_V)$: the behavior of these algorithms must be independent of $k$, and so we will be able to obtain security guarantees against them.

\paragraph{Attacking the NIZK.}
Notice that in the compilation of these protocols the QNIC algorithms $(\mathcal{T}_P, \mathcal{T}_V)$ never see the prover's first message in the clear.
Instead, $\mathcal{T}_P$ queries $\prover_k$ which provides a message which has already been encrypted.
If we switch to a simulated $\Enc$ (as in the QCA definition), the message is placed aside and remains untouched until it is decrypted.
In this way, we should be able to use the simulated $\Enc$ and $\Dec$ to `defer' the choice of message until the moment of decryption, which in our setting is when the unrolled transcript is given to $\verify_k$.
At that point, we would like to swap the simulated ciphertext for a real ciphertext with some freshly chosen message which depends on the challenge $c$.

\ifllncs
\begin{figure}
  \centering
  \fbox{%
    \begin{minipage}[t]{\dimexpr\textwidth-2\fboxsep-2\fboxrule\relax}
      $\mathsf{Real}(\rho_\reg{A} \otimes \rho_\reg{B})$
      \begin{enumerate}
        \item Sample key $k \gets \Gen(1^n)$ and encryption randomness $r$.
        \item Prepare $\sigma =  \mathcal{A} \circ (\Enc_{k;r} \otimes \Id_\reg{B})(\rho_\reg{A} \otimes \rho_\reg{B})$.
        \item Measure if $\reg{A}$ is a valid ciphertext.
        \item If yes, output the residual state.
        \item Otherwise, output $\bot \otimes \sigma_\reg{B}$.
      \end{enumerate}
    \end{minipage}
  }

  \vspace{2mm}

  \fbox{%
    \begin{minipage}[t]{\dimexpr\textwidth-2\fboxsep-2\fboxrule\relax}
      $\mathsf{Defer}(\rho_\reg{A} \otimes \rho_\reg{B})$
      \begin{enumerate}
        \item Sample key $k \gets \Gen(1^n)$ and encryption randomness $r$.
        \item Prepare $\sigma = \mathcal{A} \circ (\Enc_{k;r} \otimes \Id_\reg{B})(\Phi^+_\reg{ZA} \otimes \rho_\reg{B})$.
        \item Measure if $\reg{A}$ is a valid ciphertext encrypting half of $\ket{\Phi^+}$.
        \item If yes, output $\Enc_{k;r}(\rho_\reg{A}) \otimes \sigma_\reg{B}$. \label{item:intro-exp-defer-yes}
        \item Otherwise, output $\bot \otimes \sigma_\reg{B}$.
      \end{enumerate}
    \end{minipage}
  }
  \caption{`Real' and `deferring' channels (where $\bot$ is some failure state).}
  \label{fig:intro-experiments}
\end{figure}
\else
\begin{figure}
  \centering
  \fbox{%
    \begin{minipage}[t][11em]{0.4\linewidth}
      $\mathsf{Real}(\rho_\reg{A} \otimes \rho_\reg{B})$
      \begin{enumerate}
        \item Sample $k \gets \Gen(1^n)$.
        \item Sample encryption randomness $r$.
        \item Prepare
          {
            \setlength{\abovedisplayskip}{2pt}
            \setlength{\belowdisplayskip}{2pt}
            \[
              \sigma =  \mathcal{A} \circ (\Enc_{k;r} \otimes \Id_\reg{B})(\rho_\reg{A} \otimes \rho_\reg{B}).
            \]
          }
        \item Measure if $\reg{A}$ is a valid ciphertext.
        \item If yes, output the residual state.
        \item Otherwise, output $\bot \otimes \sigma_\reg{B}$.
      \end{enumerate}
    \end{minipage}
  }
  \fbox{%
    \begin{minipage}[t][11em]{0.4\linewidth}
      $\mathsf{Defer}(\rho_\reg{A} \otimes \rho_\reg{B})$
      \begin{enumerate}
        \item Sample $k \gets \Gen(1^n)$.
        \item Sample encryption randomness $r$.
        \item Prepare
          {
            \setlength{\abovedisplayskip}{2pt}
            \setlength{\belowdisplayskip}{2pt}
            \[
              \sigma = \mathcal{A} \circ (\Enc_{k;r} \otimes \Id_\reg{B})(\Phi^+_\reg{ZA} \otimes \rho_\reg{B}).
            \]
          }
        \item Measure if $\reg{A}$ is a valid ciphertext encrypting half of $\ket{\Phi^+}$.
        \item If yes, output $\Enc_{k;r}(\rho_\reg{A}) \otimes \sigma_\reg{B}$. \label{item:intro-exp-defer-yes}
        \item Otherwise, output $\bot \otimes \sigma_\reg{B}$.
      \end{enumerate}
    \end{minipage}
  }
  \caption{`Real' and `deferred' channels (where $\bot$ is some failure state).}
  \label{fig:intro-experiments}
\end{figure}
\fi

We formalize this idea by introducing two quantum channels (Figure \ref{fig:intro-experiments}): $\mathsf{Real}$, which, when $\mathcal{A} = \mathcal{T}_V \circ \mathcal{T}_P$, corresponds to the honest application of the QNIC; and $\mathsf{Defer}$, which we will use in our attack.
We will show that $\mathsf{Real}$ and $\mathsf{Defer}$ are indistinguishable for all $\mathcal{A}$ when $(\Enc,\Dec)$ is QCA-secure.
Given that, we construct an attack (depicted in \cref{fig:intro-attack}) which forges a proof for a randomly sampled counterexample protocol.
After switching from $\mathsf{Real}$ to $\mathsf{Defer}$ we can replace $\prover$ with its zero knowledge simulator $\Sim$.
Given a statement $x$, our forgery attack $\mathcal{F}$ proceeds as follows:
\begin{enumerate}
  \item Sample a random $k \gets \Gen(1^n)$ and run $\mathcal{T}_P$.
  \item Once $\mathcal{T}_P$ requests an initial prover message, send a ciphertext encrypting half of a maximally entangled state $\Phi^+$.
  \item After $\mathcal{T}_P$ sends a challenge $c$, run $\Sim(x, c)$ to generate $\reg{AZ}$ and respond with $\reg{Z}$ (but hold onto $\reg{A}$).
  \item After $\mathcal{T}_P$ produces a proof $\pi$, provide it to $\mathcal{T}_V$ to get a transcript state on $\reg{A'C'Z'}$.
    Measure if $\reg{A'}$ is a valid ciphertext. If not, then abort.
    Otherwise, replace $\reg{A'}$ with an encryption of $\reg{A}$ output by $\Sim$.
  \item Uncompute $\mathcal{T}_V$ to get a forged proof $\pi'$.
\end{enumerate}

Assuming $\mathsf{Real}$ and $\mathsf{Defer}$ are indistinguishable and that $\Sim$ outputs a transcript state which is indistinguishable from one which is honestly generated, the forged proof $\pi'$ should be indistinguishable from an honestly generated proof for $x$.
Because the QNIC is $(1 - \varepsilon)$-complete, the proof produced by $\mathcal{F}(x)$ for $x \in \lang$ will be accepted by the NIZK verifier with probability $\geq 1 - \varepsilon(n) - \negl(n)$.
In the same way, we know that $\pi' \gets \mathcal{F}(x)$ for $x \notin \lang$ must be accepted with probability $< \delta$ because the QNIC is $\delta$-sound.
Thus combining $\mathcal{F}$ with the NIZK verifier yields an efficient decider for the language $\lang$.

Observe that $\mathcal{F}$ relativizes quite straightforwardly because it makes no assumption on the inner workings of $\mathcal{T}_P$ or $\mathcal{T}_V$.
The only caveat is that $\mathcal{F}$ must be able to uncompute $\mathcal{T}_V$.
Therefore if $\mathcal{T}_V$ has access to some oracle $\oracle$, we require soundness to hold against adversaries with access to $(\oracle,\oracle^\dagger)$.

\begin{figure}[t]
  \centering
  \begin{tikzpicture}[
    font=\large,
    box/.style={draw, thick, minimum width=3em, minimum height=8em},
    msg/.style={->, thick, shorten <=0.5em, shorten >=0.5em},
    label/.style={midway, fill=white, inner sep=3pt}
  ]
    \def\padding{0.5em}

    \node[box, minimum height=5em, anchor=south] (P) {$\Sim$};
    \node[box, right=8em of P.south west, anchor=south west] (TP) {$\mathcal{T}_P$};
    \node[box, right=4em of TP] (TV) {$\mathcal{T}_V$};
    \node[box, right=10em of TV] (V) {${\mathcal{T}_V}^\dagger$};

    \node at ($(TP.west)+(-7.5em, 4.1em)$) {$\ket{\Phi^+}$};

    \draw[-, thick, rounded corners] ($(TP.west)+(-6.0em, 4em)$) -- ($(TP.west)+(-5.0em, 5em)$) -- ($(TV.east)+(2em, 5em)$) -- ($(TV.east)+(2em, 3.5em)$);
    \draw[->, thick, rounded corners, shorten >=0.5em] ($(TP.west)+(-6.0em, 4em)$) -- ($(TP.west)+(-5.0em, 3em)$) -- ($(TP.west)+(0, 3em)$);
    \node[draw,thick,fill=white] at ($(TP.west)+(-3em,3em)$) (Enc) {$\Enc_k$};
    \draw[msg] ($(TP.west)+(0, 0mm)$) -- ($(TP.west)+(-4.5em, 0mm)$) node[label, above] {$c$};
    \draw[msg] ($(TP.west)+(-4.5em,-2em)$) -- ($(TP.west)+(0,-2em)$) node[label, above] {$\reg{Z}$};

    \draw[->, thick,rounded corners]
      ($(TP.west)       + (-4.0em,-3em)$) --
      ($(TP.west)       + (-2.5em,-3em)$) --
      ($(TP.south west) + (-2.5em,-1em)$) --
      ($(P.south)       + (0,-1em)$)      --
      ($(P.south west)  + (-1em,-1em)$)   --
      ($(TP.north west) + (-9em,2em)$)    --
      ($(TV.north east) + (+5em,2em)$)    --
      ($(TV.east)       + (+5em,3em)$)   --
      ($(V.west)        + (-0.5em, 3em)$);
    \node at ($(TP.south west) + (-3.3em,-0.24em)$) {$\reg{A}$};
    \node[draw,thick,fill=white] at ($(V.west)+(-3em,3em)$) (Enc) {$\Enc_k$};

    \draw[msg] (TP.east) -- (TV.west) node[label, above] {$\pi$};

    \draw[-|, thick] ($(TV.east)+(0em, 3em)$) -- ($(TV.east)+(4em, 3em)$);
    \node[draw,thick,fill=white,anchor=west] at ($(TV.east)+(0.5em,3em)$) (Dec) {$\Dec_k$};
    \draw[-, thick] ($(TV.east)+(0.2em, 0)$) -- ($(V.west)+(-0.2em, 0)$) node[label, above] {$c'$};
    \draw[-, thick] ($(TV.east)+(0.2em,-3em)$) -- ($(V.west)+(-0.2em,-3em)$) node[label, above] {$\reg{Z'}$};


    \draw[msg] (V.east) -- ($(V.east)+(5em,0)$) node[label, above] {$\pi'$};

  \end{tikzpicture}
  \caption{Attack on Straightline QNIC}
  \label{fig:intro-attack}
\end{figure}

\paragraph{Deferring the Message.}

It remains to be shown that $\mathsf{Real}$ and $\mathsf{Defer}$ are in fact indistinguishable.
At first glance, it might appear that this follows immediately from QCA security:
$\mathsf{Real}$ and $\mathsf{Defer}$ resemble the real and simulated channels $\Gamma$ and $\Lambda$, respectively.
However, there is a crucial difference: $\Gamma$ and $\Lambda$ output a decrypted plaintext, whereas $\mathsf{Real}$ and $\mathsf{Defer}$ output a ciphertext.
It is therefore not clear that a distance guarantee on $\Gamma$ and $\Lambda$ implies a distance guarantee on $\mathsf{Real}$ and $\mathsf{Defer}$.



In order to bridge this gap, we introduce a property for QCA schemes which we call {\em retrospective security}.
An encryption scheme is retrospectively secure if an adversary cannot distinguish between the real and simulated channels, even if both $k$ and $r$ are \emph{revealed} upon a successful decryption. This resolves our issue: a distinguisher for $\mathsf{Real}$ and $\mathsf{Defer}$ can be transformed into a retrospective QCA distinguisher by first reencrypting the plaintext under $k,r$.

Note that we only disclose the secret key if the adversary's ciphertext is valid.
Otherwise, an adversary could simply hold onto its given ciphertext, provide a junk state to $\Dec$, and decrypt the ciphertext itself once given the key.
By returning a valid ciphertext to $\Dec$, the adversary is demonstrating that it has `forgotten' any information it had about the message.
Although retrospective security appears to be stronger than plain QCA security, we prove that all QCA-secure schemes satisfy the property.

We prove this implication using a continuity theorem for Stinespring operators from \cite{KretschmannSW08}.
At a high level, the theorem provides the following guarantee.
Let $\Channel{X,Y}{A}{B}$ be channels and let $\Channel{X'}{A}{BC}$ be any expansion of $\mathcal{X}$ such that $\Tr_\reg{C} \circ \mathcal{X}' = \mathcal{X}$.
If $\| \mathcal{X} - \mathcal{Y} \|_\diamond$ is small, then there exists a corresponding expansion $\Channel{Y'}{A}{BC}$ such that $\| \mathcal{X}' - \mathcal{Y}' \|_\diamond$ is small.
In our setting, we have real and simulated channels $\Gamma$ and $\Lambda$ which we know by the QCA guarantee are negligibly close, and we have an expansion of $\Gamma$, which we denote $\Gamma_\mathrm{R}$, which also outputs $(k,r)$ on a new register $\reg{K}$.
Then by the continuity theorem there exists some expansion $\Lambda_\mathrm{R}$ of $\Lambda$ to include $\reg{K}$ such that $\Gamma_\mathrm{R}$ and $\Lambda_\reg{R}$ are negligibly close as well.
We then show that $\Lambda_\mathrm{R}$ is close to the expansion of $\Lambda$ which outputs $(k,r)$ on $\reg{K}$.

In addition, there is a subtle malleability issue that arises when showing $\mathsf{Real}$ and $\mathsf{Defer}$ are close.
If the attacker in $\mathsf{Real}$ is somehow able to map a ciphertext with randomness $r$ to a valid ciphertext with randomness $r' \neq r$, then the output of $\mathsf{Real}$ and $\mathsf{Defer}$ could be distinguished: $\mathsf{Real}$ will produce a ciphertext encrypted with $r'$, whereas $\mathsf{Defer}$ always produces a ciphertext encrypted with $r$.
This type of attack does not arise in either the original or retrospective QCA security games, since the encryption randomness of the adversary's ciphertext is discarded.
Nonetheless, in \cref{sec:randomness-preservation} we show via an operator-theoretic argument that ciphertexts in a QCA-secure scheme are not malleable in this way.

\begin{remark}
  Readers familiar with the quantum authentication literature may notice that retrospective security closely resembles definitions for key recycling.
  In particular, \cite{DulekS18} defines ``QCA with key recycling'' which is close to our definition but where the simulator outputs a freshly sampled key rather than the key used in the simulation.
  The authors note that this is a strictly stronger notion than QCA (and therefore our retrospective definition) and provide a simple separating construction where an additional key bit is added to the ciphertext.
  Another work \cite{GargYZ16} defines a security notion ``total authentication with key leakage'' which is closest to our retrospective definition.
  In fact, our definition could be understood as a strengthening of their definition from plaintext to ciphertext authentication, if not for our inclusion of the encryption randomness.
\end{remark}

\paragraph{Roadmap.}

The structure for the rest of the paper is as follows.
Notation and definitions are laid out in \cref{sec:prelims}.
In particular, we point readers towards \cref{sec:proof-systems} which describes our unconventional modeling and notational choices for representing $\Xi$-protocols.
In \cref{sec:model}, we formally define QNICs and discuss the black-box access that they are given to the $\Xi$-protocols they are compiling.
\cref{sec:retrospective} proves the indistinguishability of $\mathsf{Real}$ and $\mathsf{Defer}$ by tackling both gaps described above (in \cref{sec:retrospective-security} and \cref{sec:randomness-preservation}).
Finally, the formal statement and proof of \cref{thm:main-result} appear in \cref{sec:impossibility}.

\paragraph{Disclosure on Use of AI.}
The authors used ChatGPT (v5.6 Sol and v6 Astra) as a sounding board, for proofreading, and for some of the technical ideas.
Specifically, ChatGPT supplied some of the ideas used in the proof of \cref{lem:preserve-randomness}, and the full proof strategy for \cref{lem:qca-implies-retrospective}.
The paper itself was written and verified in its entirety by the authors, and they take full responsibility for the correctness and content of the work.


\section{Preliminaries}
\label{sec:prelims}

By $\negl(n)$ we denote any function which is asymptotically smaller than every inverse polynomial --- i.e., that is $o(n^{-c})$ for every $c \in \mathbb{N}$.
Given a set $S$, we write $s \gets S$ to signify that $s$ is uniformly sampled from $S$.
If $\mathbf{D}$ is a distribution, then $x \gets \mathbf{D}$ denotes that $x$ is chosen according to that distribution.

\begin{definition}
  Let $(S, \leq)$ be a partially ordered set and $T \subseteq S$. Then $T$ is {\em downward-closed} if, for all $b \in T$ and $a \leq b$, it holds that $a \in T$.
\end{definition}

\begin{definition}
  Let $X$ be a Hermitian operator such that $X = \sum_{k=1}^m \lambda_k \Pi_k$ for real $\lambda_k$ and non-zero projections $\Pi_k$.
  Then its {\em Jordan-Hahn decomposition} is given by
  \[
    X_{+} = \sum_{k : \lambda_k > 0} \lambda_k \Pi_k,
    \hspace{5mm}\text{and}\hspace{5mm}
    X_{-} = \sum_{k : \lambda_k < 0} -\lambda_k \Pi_k
  \]
  where $X_{+}, X_{-} \geq 0$, $X = X_{+} - X_{-}$, and $X_{+}X_{-} = 0$.
\end{definition}

\begin{fact} \label{fact:trace-pos-eigenvalues}
  Let $X$ be Hermitian. Then $\Tr[X_+] = \frac{1}{2}(\|X\|_1 + \Tr[X])$.
\end{fact}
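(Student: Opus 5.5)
Since $X$ is Hermitian, the plan is to work directly from the spectral form $X = \sum_{k} \lambda_k \Pi_k$ used in the definition of the Jordan--Hahn decomposition. We may take the $\Pi_k$ to be the spectral projections for the distinct eigenvalues $\lambda_k$; they are then mutually orthogonal and sum to the identity. Splitting the index set according to the sign of $\lambda_k$ gives the two parts $X_+$ and $X_-$. Eigenvalues equal to zero contribute to neither part, so $X = X_+ - X_-$ with $X_+, X_- \geq 0$. Because the projections belonging to positive and negative eigenvalues are orthogonal, we also have $X_+ X_- = 0$.

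Next we compute two quantities in terms of $X_+$ and $X_-$. By linearity of the trace, $\Tr[X] = \Tr[X_+] - \Tr[X_-]$. For the trace norm we use $|X| = \sum_k |\lambda_k| \Pi_k$, which by construction equals $X_+ + X_-$; hence $\|X\|_1 = \Tr|X| = \Tr[X_+] + \Tr[X_-]$. Adding the two identities gives $\|X\|_1 + \Tr[X] = 2\Tr[X_+]$, and dividing by two yields the fact.

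No step here is a real obstacle. The only point needing care is the identity $|X| = X_+ + X_-$, that is, that the trace norm of a Hermitian operator is the sum of the absolute values of its eigenvalues counted with multiplicity. This follows directly from the spectral theorem, because the singular values of a Hermitian operator are the absolute values of its eigenvalues.
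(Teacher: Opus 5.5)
Your proposal is correct and follows essentially the same route as the paper: both write $\|X\|_1 = \Tr[X_+] + \Tr[X_-]$ and $\Tr[X] = \Tr[X_+] - \Tr[X_-]$ and add the two identities. Your one addition is to justify $|X| = X_+ + X_-$ from the spectral theorem, a step the paper leaves implicit.
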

\begin{proof}
  Because $X$ is Hermitian, $\|X\|_1 = \Tr\sqrt{X^*X} = \Tr[X_{+}] + \Tr[X_{-}]$ whereas $\Tr[X] = \Tr[X_{+}] - \Tr[X_{-}]$.
  Adding these equations gives $2\Tr[X_{+}] = \|X\|_1 + \Tr[X]$.
\end{proof}

\subsection{Quantum Information}

A pure quantum state is a vector $\ket{\psi}$ in a complex Hilbert space $\mathcal{H}$ such that $\|\ket{\psi}\|_2 = 1$.
In this work, $\mathcal{H}$ is finite-dimensional and described as a register --- e.g., $\mathcal{H} = \reg{A} \otimes \reg{B}$ where $\reg{A}$ and $\reg{B}$ are subspaces.
Let $\LinearOps{A}$ denote the space of linear operators on $\reg{A}$ and $\CP{A}{B}$ denote the cone of completely-positive maps from $\LinearOps{A}$ to $\LinearOps{B}$.
A general (potentially mixed) quantum state is a density matrix $\rho \in \States{A}$ which is positive semidefinite and has unit trace.
The density matrix corresponding to the pure state $\ket{\psi}$ is $\ketbra{\psi}$ which we occasionally shorten to $\psi$.
Subnormalized quantum states are positive semidefinite operators $\rho$ where $0 \leq \Tr[\rho] \leq 1$.
Given two $n$-qubit registers $\reg{A}$ and $\reg{B}$, we denote by $\ket{\Phi^+}_\reg{AB}$ the maximally entangled state $\ket{\Phi^+} = \frac{1}{\sqrt{2^n}}\sum_{x \in \bits^n} \ket{x}_\reg{A} \ket{x}_\reg{B}$.

Binary measurement of a quantum state can be represented by an operator $0 \leq M \leq \Id$, corresponding to the POVM $\{M, \Id - M\}$.
On an input state $\rho$, the $M$ outcome occurs with probability $\Tr[M\rho]$ and leaves the residual state $\rho' = \sqrt{M}\rho\sqrt{M}/\Tr[M\rho]$.
If $\Tr[M\rho] \geq 1 - \varepsilon$, then $\| \rho - \rho' \|_1 \leq 2\sqrt{\varepsilon}$.
We will also use the following bound related to alternating projections.

\begin{lemma} \label{lem:alternating-projections}
  Let $\Pi_1, \Pi_2$ be orthogonal projectors, $0 \leq M \leq \Id$ commute with $\Pi_1$, and $\rho$ be a density matrix with $\Pi_2\rho\Pi_2 = \rho$.
  Then
  \[
    \Tr[M\Pi_1\rho\Pi_1] - \Tr[M\Pi_2\Pi_1\rho\Pi_1\Pi_2]
    \leq \frac{2}{3\sqrt{3}}.
  \]
\end{lemma}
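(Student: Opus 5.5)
The plan is to first reduce to the extremal two-dimensional configuration, where the constant $\frac{2}{3\sqrt3}$ becomes visible, and then show that every other configuration does no better.

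\textbf{Reduction to pure states.} Both terms are linear in $\rho$. Any $\rho$ with $\Pi_2\rho\Pi_2=\rho$ is a convex combination of pure states $\ketbra{\psi}$ with $\Pi_2\ket\psi=\ket\psi$. So it suffices to prove the bound for such a pure $\ket\psi$. Write $\ket\phi=\Pi_1\ket\psi$, $p=\|\phi\|^2$, and $\ket u=\Pi_2\ket\phi$. The quantity to bound is then
\[
  D \;=\; \bra\phi M\ket\phi-\bra u M\ket u .
\]

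\textbf{Extremal example.} This example tells me what intermediate inequality to aim for. Let $\Pi_1,\Pi_2$ be rank-one projections onto unit vectors $e_1,e_2$ at angle $\theta$, and take $M=\Pi_1$.
\begin{itemize}
  \item The first term is $\bra\phi M\ket\phi=\cos^2\theta$.
  \item For the second term, $\ket u=\cos^2\theta\, e_2$, so $\bra u M\ket u=\cos^6\theta$.
  \item Hence $D=x-x^3$ with $x=\cos^2\theta$. This is maximised at $x=1/\sqrt3$, where it equals $\frac{2}{3\sqrt3}$.
\end{itemize}
So the target is the general inequality
\[
  \bra u M\ket u\;\ge\; m^3, \qquad m:=\bra\phi M\ket\phi\le p\le 1,
\]
which gives $D\le m-m^3\le\frac{2}{3\sqrt3}$.

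\textbf{Proving $\bra uMu\rangle\ge m^3$.} I would use Cauchy--Schwarz in the semi-inner product $\langle a,b\rangle_M=\bra a M\ket b$:
\[
  \bra u M\ket u\;\ge\;\frac{|\bra\phi M\ket u|^2}{\bra\phi M\ket\phi}.
\]
This reduces the target to showing $|\bra\phi M\ket u|\ge m^2$. I would use the two structural hypotheses as follows.
\begin{itemize}
  \item Because $M$ commutes with $\Pi_1$, we have $M\ket\phi=\Pi_1 M\ket\psi$, and $\sqrt M$ also commutes with $\Pi_1$. This lets me move $M$ across $\Pi_1$ freely.
  \item Because $\Pi_2\ket\psi=\ket\psi$, any inner product against $\ket\psi$ can absorb a $\Pi_2$.
\end{itemize}
Concretely, I would rewrite $\bra\phi M\ket u=\bra{\psi}\Pi_1 M\Pi_2\Pi_1\ket\psi$ and compare it with $m=\bra\psi \Pi_1M\Pi_1\ket\psi$.

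If the direct route stalls, I would fall back on Jordan's lemma for the pair $(\Pi_1,\Pi_2)$. This decomposes the space into invariant blocks of dimension at most two, and on each block the computation is exactly the extremal example above. Within each block, $M$ is only constrained by commuting with $\Pi_1$. The pieces would then be recombined using concavity of $x\mapsto x-x^3$ on $[0,1]$.

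\textbf{Main obstacle.} The hard step is the recombination in the Jordan-lemma route, or equivalently the lower bound $|\bra\phi M\ket u|\ge m^2$ in the direct route. The difficulty is that $M$ need not respect the Jordan blocks of $(\Pi_1,\Pi_2)$: it commutes with $\Pi_1$ but not with $\Pi_2$, so cross-block terms appear. I expect to handle these by replacing $M$ with $\Pi_1 M\Pi_1+(\Id-\Pi_1)M(\Id-\Pi_1)$, which equals $M$ by the commutation, and checking that the off-diagonal contributions can only increase the subtracted term $\bra uM\ket u$.
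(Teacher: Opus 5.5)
There is a genuine gap: the intermediate inequality $\bra{u}M\ket{u}\ge m^3$ that your direct route aims for is false, and so is the reduced target $|\bra{\phi}M\ket{u}|\ge m^2$. Here is a counterexample. Take orthonormal $f_1,f_2,f_3$. Let $\Pi_1$ project onto $\mathrm{span}(f_1,f_2)$ and $\Pi_2$ onto $\mathrm{span}(f_1,g)$ with $g=\tfrac12 f_2+\tfrac{\sqrt3}{2}f_3$, and let $\psi=(f_1+g)/\sqrt2\in\mathrm{range}(\Pi_2)$. Then $\phi=(f_1+\tfrac12 f_2)/\sqrt2$, $u=\Pi_2\phi=(f_1+\tfrac14 g)/\sqrt2$, and $\Pi_1 u=(f_1+\tfrac18 f_2)/\sqrt2$. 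Choose $M=\ketbra{v}$ with $v=(f_1-8f_2)/\sqrt{65}\in\mathrm{range}(\Pi_1)$, so $M$ commutes with $\Pi_1$. Then $\bra{u}M\ket{u}=|\langle v|\Pi_1u\rangle|^2=0$, while $m=|\langle v|\phi\rangle|^2=9/130>0$. Even the consequence $D\le m-m^3$ fails here.

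The same example defeats your proposed repair of the Jordan-lemma route. $M$ is supported entirely on $\mathrm{range}(\Pi_1)$, so replacing it by $\Pi_1M\Pi_1+(\Id-\Pi_1)M(\Id-\Pi_1)$ changes nothing. It is exactly the off-diagonal entries of $M$ between the two Jordan blocks (the $f_1$ block and the $\{f_2,f_3\}$ block) that drive the subtracted term to zero. So these entries can decrease $\bra{u}M\ket{u}$, not only increase it. The lemma survives only because $m$ is small when this happens. Hence no lower bound on $\bra{u}M\ket{u}$ in terms of $m$ alone can work; you must control the difference jointly.

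The missing idea is to bound the difference directly at the operator level. Put $A=\Pi_2\Pi_1\Pi_2$ and $B=\Pi_2\Pi_1M\Pi_1\Pi_2$, so $0\le B\le A\le\Id$. Because $M$ commutes with $\Pi_1$, the cross terms vanish and $\Pi_2M\Pi_2\ge B$. Using $\rho=\Pi_2\rho\Pi_2$, the quantity in the lemma equals $\Tr[(B-A\Pi_2M\Pi_2A)\rho]\le\Tr[(B-ABA)\rho]$. Then complete the square:
\[
B-ABA=\tfrac32\bigl[(\Id-\tfrac{A}{\sqrt3})B(\Id-\tfrac{A}{\sqrt3})-(A-\tfrac{\Id}{\sqrt3})B(A-\tfrac{\Id}{\sqrt3})\bigr]\le\tfrac32A(\Id-\tfrac{A}{\sqrt3})^2\le\tfrac{2}{3\sqrt3}\Id .
\]
Here the negative term is dropped, and $B\le A$ is used only inside the positive one. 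Your extremal example is exactly the case $B=A$ with $A$ a scalar, where this reduces to $x-x^3$. In general one cannot lower-bound $\langle ABA\rangle$ by $\langle B\rangle^3$ (as shown above), but one can upper-bound the operator $B-ABA$ outright.
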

\begin{proof}
  First, for any operators $0 \leq B \leq A \leq \Id$, completing a square gives
  \ifllncs
    \begin{align*}
    B - ABA
    &= \frac{3}{2}\left[\left(\Id - \frac{A}{\sqrt{3}}\right)B\left(\Id - \frac{A}{\sqrt{3}}\right) - \left(A - \frac{\Id}{\sqrt{3}}\right)B\left(A - \frac{\Id}{\sqrt{3}}\right)\right] \\
    &\leq \frac{3}{2}A\left(\Id - \frac{A}{\sqrt{3}}\right)^2 \\
    &\leq \frac{2}{3\sqrt{3}}\Id
  \end{align*}
  \else
  \[
    B - ABA
    = \frac{3}{2}\left[\left(\Id - \frac{A}{\sqrt{3}}\right)B\left(\Id - \frac{A}{\sqrt{3}}\right) - \left(A - \frac{\Id}{\sqrt{3}}\right)B\left(A - \frac{\Id}{\sqrt{3}}\right)\right]
    \leq \frac{3}{2}A\left(\Id - \frac{A}{\sqrt{3}}\right)^2
    \leq \frac{2}{3\sqrt{3}}\Id.
  \]
  \fi
  where the final inequality follows by maximizing $\frac{3}{2} t (1-t / \sqrt{3})^2$ over $0 \leq t \leq 1$, whose maximum occurs at $t = 1/\sqrt{3}$.
  Apply this bound with $A = \Pi_2\Pi_1\Pi_2$ and $B = \Pi_2\Pi_1 M\Pi_1\Pi_2$, which satisfy $0 \leq B \leq A \leq \Id$.
  Since $M$ commutes with $\Pi_1$,
  \[
    \Pi_2 M\Pi_2
    = B + \Pi_2(\Id-\Pi_1)M(\Id-\Pi_1)\Pi_2
    \geq B.
  \]
  Using $\Pi_2\rho\Pi_2 = \rho$ and cyclicity of the trace, we conclude
  \ifllncs
  \begin{align*}
    \Tr[M\Pi_1\rho\Pi_1] - \Tr[M\Pi_2\Pi_1\rho\Pi_1\Pi_2]
    &= \Tr[(B-A\Pi_2 M\Pi_2 A)\rho] \\
    &\leq \Tr[(B-ABA)\rho] \\
    &\leq \frac{2}{3\sqrt{3}}.
  \end{align*}
  \else
  \[
    \Tr[M\Pi_1\rho\Pi_1] - \Tr[M\Pi_2\Pi_1\rho\Pi_1\Pi_2] = \Tr[(B-A\Pi_2 M\Pi_2 A)\rho] \leq \Tr[(B-ABA)\rho] \leq \frac{2}{3\sqrt{3}}. \qedhere
  \]
  \fi
\end{proof}

Pure quantum states evolve unitarily.
General evolution of a quantum state can be represented through completely-positive, trace-preserving maps (also called channels).
We describe a channel $\mathcal{N}$ as `efficient' if there exists a $\poly$-sized quantum circuit (potentially receiving a quantum advice state) which can realize $\mathcal{N}$.
If a linear map is completely-positive and trace-decreasing, then we say it is a subchannel.
A linear map which acts on maps is called a supermap.
In this work we are only concerned with ``physically realizable'' supermaps which admit a realization $\Theta(\mathcal{N}) = \mathcal{Y} \circ (\mathcal{N} \otimes \Id_\reg{E}) \circ \mathcal{X}$ where $\Channel{N}{A}{B}$, $\Channel{X}{C}{AE}$, and $\Channel{Y}{BE}{D}$ are (sub)channels.
In this work, we overload $\Id$ to denote the identity operator, the channel which acts trivially, and the supermap which acts trivially.

Given a linear map $\Map{N}{A}{B}$, its diamond norm is
\ifllncs
\[
  \| \mathcal{N} \|_\diamond = \sup_{0 \neq X \in \LinearOps{AE}} \|(\mathcal{N} \otimes \Id_\reg{E})(X)\|_1/\|X\|_1
\]
\else
  $\| \mathcal{N} \|_\diamond = \sup_{0 \neq X \in \LinearOps{AE}} \|(\mathcal{N} \otimes \Id_\reg{E})(X)\|_1/\|X\|_1$
\fi
where it suffices to take $\reg{E} \cong \reg{A}$.
If $\mathcal{N}$ is Hermiticity-preserving, the supremum may equivalently be taken over density matrices on $\reg{AE}$.
If $\mathcal{N}$ is completely positive, then $\|\mathcal{N}\|_\diamond = \|\mathcal{N}^\dagger(\Id)\|_\infty$.
The diamond distance induced by that norm obeys the data processing inequality --- i.e., it contracts under composition with channels (e.g., $\|\mathcal{N} \circ (\mathcal{X} - \mathcal{Y})\|_\diamond \leq \| \mathcal{X} - \mathcal{Y} \|_\diamond$).
More generally, the norm contracts under (physically realizable) supermaps.

\begin{fact} \label{fact:diamond-norm-partial-trace}
  Let $\Map{N}{A}{BC}$ be completely positive. Then $\|\mathcal{N}\|_\diamond = \|\mathrm{Tr}_\reg{C} \circ \mathcal{N}\|_\diamond$.
\end{fact}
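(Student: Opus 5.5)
The plan is to prove the two inequalities $\|\mathcal{N}\|_\diamond \geq \|\mathrm{Tr}_\reg{C} \circ \mathcal{N}\|_\diamond$ and $\|\mathcal{N}\|_\diamond \leq \|\mathrm{Tr}_\reg{C} \circ \mathcal{N}\|_\diamond$ separately. The first is a special case of data processing: $\mathrm{Tr}_\reg{C}$ is a channel, so $\|(\mathrm{Tr}_\reg{C}\otimes \Id_\reg{E})(Y)\|_1 \leq \|Y\|_1$ for every $Y$. Applying this to $Y = (\mathcal{N}\otimes\Id_\reg{E})(X)$ and taking the supremum over $X$ gives the bound.

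For the reverse inequality I will use the formula for completely positive maps already recorded above, namely $\|\mathcal{M}\|_\diamond = \|\mathcal{M}^\dagger(\Id)\|_\infty$. Since $\mathcal{N}$ is completely positive and $\mathrm{Tr}_\reg{C}$ is completely positive, the composition $\mathrm{Tr}_\reg{C}\circ\mathcal{N}$ is also completely positive. So
\[
  \|\mathrm{Tr}_\reg{C}\circ\mathcal{N}\|_\diamond = \|\mathcal{N}^\dagger(\mathrm{Tr}_\reg{C}^\dagger(\Id_\reg{B}))\|_\infty .
\]
The adjoint of the partial trace is $\mathrm{Tr}_\reg{C}^\dagger(Z) = Z\otimes \Id_\reg{C}$, so $\mathrm{Tr}_\reg{C}^\dagger(\Id_\reg{B}) = \Id_\reg{BC}$. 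The right-hand side is therefore $\|\mathcal{N}^\dagger(\Id_\reg{BC})\|_\infty = \|\mathcal{N}\|_\diamond$, which gives equality directly. This makes the first paragraph redundant, but it is worth keeping as a sanity check.

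The only point needing care is that the formula $\|\mathcal{M}\|_\diamond = \|\mathcal{M}^\dagger(\Id)\|_\infty$ requires complete positivity, which holds here. If one prefers not to rely on that formula, there is a direct route. For CP maps the supremum can be restricted to positive $X$, since the map is Hermiticity-preserving and an optimal input can be taken to be a pure density matrix. For positive $X$ the output $(\mathcal{N}\otimes\Id_\reg{E})(X)$ is positive, so its trace norm equals its trace. Partial trace over $\reg{C}$ preserves the trace, hence the two norms agree pointwise on positive inputs and the suprema coincide. I do not expect any step to be a real obstacle.
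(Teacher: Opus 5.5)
Your main argument is correct and is exactly the paper's proof. It uses $\|\mathcal{M}\|_\diamond = \|\mathcal{M}^\dagger(\Id)\|_\infty$ for completely positive maps together with $(\mathrm{Tr}_\reg{C}\circ\mathcal{N})^\dagger(\Id_\reg{B}) = \mathcal{N}^\dagger(\Id_\reg{B}\otimes\Id_\reg{C})$; your data-processing sanity check and the alternative argument via positive (pure-state) inputs are also valid but not needed.
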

\begin{proof}
  Since $\mathcal{N}$ is completely positive, $\|\mathcal{N}\|_\diamond = \|\mathcal{N}^\dagger(\Id_\reg{BC})\|_\infty$. Also, $(\mathrm{Tr}_\reg{C} \circ \mathcal{N})^\dagger(\Id_\reg{B}) = \mathcal{N}^\dagger(\Id_\reg{B} \otimes \Id_\reg{C})$.
\end{proof}

For all completely-positive maps $\Channel{N}{A}{B}$, there exists a linear operator $\Map*{V}{A}{BE}$ and an additional register $\reg{E}$ such that $\mathcal{N}(\rho) = \Tr_\reg{E}[V \rho V^\dagger]$.
If $\mathcal{N}$ is a channel, then $V^\dagger V = \Id$ and $V$ is an isometry.
If $\mathcal{N}$ is a subchannel, then $V^\dagger V \leq \Id$ and $V$ is a contraction.
The pair $(V, \reg{E})$ is called a Stinespring dilation of $\mathcal{N}$.
\cref{sec:retrospective} relies on the following two results on Stinespring dilations.

\begin{theorem}[\cite{KretschmannSW08}] \label{thm:ksw}
  Let $\mathcal{N}_1, \mathcal{N}_2 : \reg{A} \to \reg{B}$ be completely-positive, trace-non{\ifllncs-\fi}increasing maps with Stinespring operators $V_1, V_2 : \reg{A} \mapsto \reg{BE}$ with a common auxiliary register $\reg{E}$.
  Then
  \[
    \inf_U \|(\Id_\reg{B} \otimes U)V_1 - V_2\|_\infty^2 \leq \|\mathcal{N}_1 - \mathcal{N}_2\|_\diamond \leq 2 \inf_U \|(\Id_\reg{B} \otimes U)V_1 - V_2\|_\infty
  \]
  where the minimization is with respect to all unitaries $U$ on $\reg{E}$.
\end{theorem}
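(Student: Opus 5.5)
The two inequalities are proved separately. The upper bound comes from the triangle inequality, and the lower bound from a minimax argument combined with Uhlmann's theorem and a Fuchs--van de Graaf-type inequality for subnormalized states.

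\textbf{Upper bound.} Fix a unitary $U$ on $\reg{E}$ and set $V_1' = (\Id_\reg{B} \otimes U)V_1$. Since $U$ acts only on the traced-out register, $V_1'$ is still a Stinespring operator for $\mathcal{N}_1$. Take any density matrix $\rho$ on $\reg{AR}$ and abbreviate $W_1 = V_1' \otimes \Id_\reg{R}$ and $W_2 = V_2 \otimes \Id_\reg{R}$. Then
\[
  W_1\rho W_1^\dagger - W_2\rho W_2^\dagger = (W_1 - W_2)\rho W_1^\dagger + W_2\rho (W_1 - W_2)^\dagger .
\]
By H\"older's inequality and $\|V_i\|_\infty \leq 1$ (the maps are trace-nonincreasing, so each $V_i$ is a contraction), the trace norm of the right-hand side is at most $2\|V_1' - V_2\|_\infty$. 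The partial trace over $\reg{E}$ is a channel, so it does not increase the trace norm. Taking the supremum over $\rho$ and then the infimum over $U$ gives the right-hand inequality.

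\textbf{Lower bound.} Write $\beta^2 = \inf_U \|(\Id \otimes U)V_1 - V_2\|_\infty^2$. For any operator $W$ we have $\|W\|_\infty^2 = \sup_\rho \Tr[\rho W^\dagger W]$, with $\rho$ ranging over density matrices on $\reg{A}$. Expanding,
\[
  \beta^2 = \inf_U \sup_\rho \Bigl( \Tr[\rho(V_1^\dagger V_1 + V_2^\dagger V_2)] - 2\,\mathrm{Re}\Tr\bigl[\rho\, V_2^\dagger (\Id \otimes U) V_1\bigr] \Bigr).
\]
To swap $\inf$ and $\sup$, I relax $U$ to range over the unit ball of contractions on $\reg{E}$.
\begin{itemize}[leftmargin=*]
  \item The infimum does not decrease under this relaxation, because the unitaries are the extreme points of that ball and the objective is affine in $U$.
  \item Both sets are compact and convex, and the objective is affine in each variable.
\end{itemize}
Sion's minimax theorem therefore applies. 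After swapping, for each fixed $\rho$ the supremum of the affine term over contractions is attained at a unitary, so the relaxation costs nothing.

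Now fix $\rho$ and let $\ket{\psi}_\reg{AR}$ be a purification of it. Set $\ket{\phi_i} = (V_i \otimes \Id_\reg{R})\ket{\psi}$ and $\sigma_i = \Tr_\reg{E}\ketbra{\phi_i} = (\mathcal{N}_i \otimes \Id_\reg{R})(\psi)$. Then
\[
  \sup_U \mathrm{Re}\Tr\bigl[\rho\, V_2^\dagger(\Id \otimes U)V_1\bigr] = \sup_U \mathrm{Re}\bra{\phi_2}(\Id \otimes U)\ket{\phi_1} = F(\sigma_1,\sigma_2),
\]
where the last step is Uhlmann's theorem with the unnormalized fidelity $F(\sigma_1,\sigma_2) = \|\sqrt{\sigma_1}\sqrt{\sigma_2}\|_1$. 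This yields
\[
  \beta^2 = \sup_\psi \bigl(\Tr\sigma_1 + \Tr\sigma_2 - 2F(\sigma_1,\sigma_2)\bigr).
\]

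\textbf{Main obstacle.} It remains to show $\Tr\sigma_1 + \Tr\sigma_2 - 2F(\sigma_1,\sigma_2) \leq \|\sigma_1 - \sigma_2\|_1$ for subnormalized states $\sigma_1, \sigma_2$. I will take a measurement that achieves the fidelity, meaning the outcome distributions $p = (p_j)$ and $q = (q_j)$ satisfy $\sum_j \sqrt{p_j q_j} = F(\sigma_1,\sigma_2)$; such a measurement exists for subnormalized operators, and I expect verifying this to be the delicate point. The left-hand side then equals $\sum_j (\sqrt{p_j} - \sqrt{q_j})^2$. Each term satisfies
\[
  (\sqrt{p_j} - \sqrt{q_j})^2 \leq |\sqrt{p_j} - \sqrt{q_j}|\,(\sqrt{p_j} + \sqrt{q_j}) = |p_j - q_j| .
\]
Summing and applying data processing for the trace norm bounds the sum by $\|\sigma_1 - \sigma_2\|_1$. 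Taking the supremum over $\psi$ gives $\beta^2 \leq \|\mathcal{N}_1 - \mathcal{N}_2\|_\diamond$. Apart from the choice of measurement, the other place needing care is the minimax step: checking that passing to contractions neither changes the value nor breaks the Uhlmann identification.
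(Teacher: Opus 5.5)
The paper gives no proof of this statement; it is quoted from Kretschmann--Schlingemann--Werner, so there is no internal proof to compare against. Your proof of the right-hand inequality is correct and valid for any common $\reg{E}$. The left-hand inequality, however, has a genuine gap at the relaxation step.

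Put $f(U,\rho)=\Tr[\rho(V_1^\dagger V_1+V_2^\dagger V_2)]-2\,\mathrm{Re}\Tr[\rho V_2^\dagger(\Id\otimes U)V_1]$. You need $\inf_{U\ \mathrm{unitary}}\sup_\rho f(U,\rho)\le\inf_{C\ \mathrm{contraction}}\sup_\rho f(C,\rho)$; the reverse inequality is the trivial one. You justify it by affinity in $U$ plus the fact that unitaries are the extreme points of the contraction ball. But the function being minimized is $C\mapsto\sup_\rho f(C,\rho)$. That is a pointwise supremum of affine functions, hence only convex, and a convex function need not attain its minimum at an extreme point. Affinity is available only after the swap, for fixed $\rho$, which is why your later sentence is fine and this one is not.

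The step is in fact false for a small fixed $\reg{E}$. Take $\reg{A}=\reg{B}=\mathbb{C}^3$, $\reg{E}=\mathbb{C}$, $V_1=\Id$ and $V_2=W=\mathrm{diag}(1,\omega,\omega^2)$ with $\omega=e^{2\pi i/3}$. The unitaries on $\reg{E}$ are phases $u$, and $\inf_{|u|=1}\|u\Id-W\|_\infty^2=\inf_{|u|=1}\max_j|u-\omega^j|^2=3$. On the other hand $\|\mathcal{N}_1-\mathcal{N}_2\|_\diamond\le 2$, and your relaxed objective at $C=0$ is already $\sup_\rho f(0,\rho)=2$. So the left inequality, as literally stated for an arbitrary common $\reg{E}$, cannot be proved by any argument.

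What is true is the version in which $\reg{E}$ may be enlarged, and this is how the paper uses the result: it writes ``expanding $\reg{E}$ if necessary'' when invoking it. Replace $V_i$ by $V_i\otimes\ket{0}$ on $\reg{E}\otimes\mathbb{C}^2$, which leaves $\mathcal{N}_i$ unchanged. For a contraction $C$ on $\reg{E}$, use the unitary dilation
\[
  U_C = C\otimes\ketbra{0} + (\Id-CC^\dagger)^{1/2}\otimes\ket{0}\bra{1} + (\Id-C^\dagger C)^{1/2}\otimes\ket{1}\bra{0} - C^\dagger\otimes\ketbra{1}.
\]
Since $U_C$ is unitary and $(\Id\otimes\bra{0})U_C(\Id\otimes\ket{0})=C$, you get exactly $\|(\Id\otimes U_C)(V_1\otimes\ket{0})-V_2\otimes\ket{0}\|_\infty^2=\sup_\rho f(C,\rho)$. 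This is the missing inequality on the enlarged register. The rest of your chain then goes through: Sion, extreme points for fixed $\rho$, Uhlmann over unitaries on the original $\reg{E}$, and the final trace-norm bound.

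Alternatively, no enlargement is needed once $\dim\reg{E}$ is at least the sum of the Kraus ranks of $\mathcal{N}_1$ and $\mathcal{N}_2$. In that case $P_2CP_1$, with $P_i$ the support of $\Tr_{\reg{B}}[V_iV_i^\dagger]$, extends to a unitary on $\reg{E}$.

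The step you flagged as delicate is unproblematic. Rescale to normalized states and use a fidelity-achieving measurement. Or skip measurements entirely via $F(\sigma_1,\sigma_2)\ge\Tr[\sqrt{\sigma_1}\sqrt{\sigma_2}]$ and the Powers--St\o rmer inequality $\|\sqrt{\sigma_1}-\sqrt{\sigma_2}\|_2^2\le\|\sigma_1-\sigma_2\|_1$.
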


\begin{theorem}[Radon-Nikodym Theorem for Completely-Positive Maps \cite{Belavkin1986radon,Raginsky2003radon}] \label{thm:radon-nikodym}
  Let $\mathcal{N}_1, \mathcal{N}_2 : \reg{A} \to \reg{B}$ be completely-positive maps and let $V : \reg{A} \to \reg{BE}$ be an operator such that $(\reg{E}, V)$ is a Stinespring dilation of $\mathcal{N}_2$.
  Then $\mathcal{N}_1 \leq \mathcal{N}_2$ if and only if there exists an operator $0 \leq Q \leq \Id$ on $\reg{E}$ such that
  \[
    \mathcal{N}_1(X) = \Tr_\reg{E}\left[(\Id_\reg{B} \otimes Q^{\frac{1}{2}}) V X V^\dagger (\Id_\reg{B} \otimes Q^{\frac{1}{2}})\right]
  \]
\end{theorem}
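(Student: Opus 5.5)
The plan is to prove the two directions separately. The ``if'' direction is a direct check. The ``only if'' direction compares two Stinespring dilations of $\mathcal{N}_2$ through an environment partial isometry.

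\emph{If.} Suppose $0 \leq Q \leq \Id$ on $\reg{E}$ and $\mathcal{N}_1$ has the stated form. Then $\mathcal{N}_1$ is completely positive, since it is given by the Stinespring operator $(\Id_\reg{B}\otimes Q^{1/2})V$. By cyclicity of the partial trace over $\reg{E}$, $\mathcal{N}_1(X) = \Tr_\reg{E}[(\Id_\reg{B}\otimes Q)VXV^\dagger]$. Therefore
\[
(\mathcal{N}_2-\mathcal{N}_1)(X) = \Tr_\reg{E}\bigl[(\Id_\reg{B}\otimes (\Id-Q)^{1/2})VXV^\dagger(\Id_\reg{B}\otimes(\Id-Q)^{1/2})\bigr],
\]
which is completely positive because $\Id - Q \geq 0$. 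Hence $\mathcal{N}_1 \leq \mathcal{N}_2$.

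\emph{Only if.} Suppose $\mathcal{N}_2 - \mathcal{N}_1$ is completely positive.
\begin{enumerate}
\item Pick Stinespring operators $W : \reg{A}\to\reg{BF}$ for $\mathcal{N}_1$ and $W' : \reg{A}\to\reg{BF'}$ for $\mathcal{N}_2-\mathcal{N}_1$.
\item Set $V' = W \oplus W' : \reg{A}\to \reg{B}\otimes(\reg{F}\oplus\reg{F'})$. Since $\mathcal{N}_2 = \mathcal{N}_1 + (\mathcal{N}_2-\mathcal{N}_1)$, $V'$ is a second dilation of $\mathcal{N}_2$. Let $P_\reg{F}$ be the projector onto the $\reg{F}$ summand, so that $W = (\Id_\reg{B}\otimes P_\reg{F})V'$.
\item Construct a partial isometry $T : \reg{E}\to\reg{F}\oplus\reg{F'}$ with $V' = (\Id_\reg{B}\otimes T)V$ (next paragraph).
\item Then
\[
\mathcal{N}_1(X) = \Tr_\reg{F}[(\Id\otimes P_\reg{F}T)VXV^\dagger(\Id\otimes T^\dagger P_\reg{F})] = \Tr_\reg{E}[(\Id\otimes T^\dagger P_\reg{F}T)VXV^\dagger],
\]
using cyclicity of the partial trace over the environment.
\item Set $Q = T^\dagger P_\reg{F} T$. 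It satisfies $0\leq Q\leq T^\dagger T\leq\Id$. One more application of cyclicity rewrites the expression in the symmetric $Q^{1/2}$ form of the statement.
\end{enumerate}

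\emph{Main obstacle: constructing $T$.} The given $V$ need not be a minimal dilation, so there is no ready-made uniqueness statement to quote, and I will build $T$ by hand. Let
\[
\reg{E}_0 = \mathrm{span}\{(\bra{\beta}\otimes\Id_\reg{E})V\ket{\alpha}\}
\]
over all $\ket{\alpha}\in\reg{A}$ and $\ket{\beta}\in\reg{B}$. Then $V$ maps into $\reg{B}\otimes\reg{E}_0$. Define $T$ on these spanning vectors by
\[
T(\bra{\beta}\otimes\Id)V\ket{\alpha} = (\bra{\beta}\otimes\Id)V'\ket{\alpha},
\]
and set $T = 0$ on $\reg{E}_0^\perp$. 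To show $T$ is well defined and isometric on $\reg{E}_0$, compute Gram matrices. For $x_i = (\bra{\beta_i}\otimes\Id)V\ket{\alpha_i}$,
\[
\langle x_i, x_j\rangle = \bra{\alpha_i}V^\dagger(\ket{\beta_i}\bra{\beta_j}\otimes\Id)V\ket{\alpha_j} = \bra{\beta_j}\mathcal{N}_2(\ket{\alpha_j}\bra{\alpha_i})\ket{\beta_i}.
\]
The same expression holds with $V'$ in place of $V$. So any linear combination of the $x_i$ has the same norm as the corresponding combination of images, which gives well-definedness and isometry. Finally, $V' = (\Id\otimes T)V$ holds because the two sides have equal matrix elements $(\bra{\beta}\otimes\Id)\,\cdot\,\ket{\alpha}$ for all $\alpha,\beta$.
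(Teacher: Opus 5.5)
Your proof is correct in both directions. The Gram-matrix identity makes $T$ well defined and isometric on $\reg{E}_0$. The two cyclicity steps are also valid: one passes across $T:\reg{E}\to\reg{F}\oplus\reg{F'}$, the other between $Q$ and the symmetric $Q^{\frac{1}{2}}$ form.

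The paper gives no proof of this theorem; it imports it from Belavkin--Staszewski and Raginsky. The standard argument behind those references, going back to Arveson, works in the Heisenberg picture with the \emph{minimal} Stinespring representation. It represents a sesquilinear form, dominated by the one induced by $\mathcal{N}_2$, as a positive contraction. It then checks that this contraction commutes with the representation, i.e.\ has the form $\Id_\reg{B}\otimes Q$.

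Your route avoids minimality and commutants altogether. You exhibit a second dilation $W\oplus W'$ of $\mathcal{N}_2$ in which $\mathcal{N}_1$ is visibly the compression to the $\reg{F}$ summand. You relate it to $V$ by a partial isometry, exactly as one proves uniqueness of purifications, and pull back $P_\reg{F}$. Your $\reg{E}_0$ is precisely the environment of the minimal dilation, since the span of the vectors $(Y\otimes\Id_\reg{E})V\ket{\alpha}$ is $\reg{B}\otimes\reg{E}_0$. So your construction performs the minimal-dilation argument and the extension by zero in a single step.

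This is the form the paper actually needs. It applies the theorem to arbitrary, possibly non-minimal and non-isometric dilations, e.g.\ the contractions with a common environment from \cref{thm:ksw} in the proof of \cref{lem:close-operators}. What the operator-algebraic proof buys in exchange is validity for general C*-algebras and infinite dimensions, which the paper does not use.
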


In this work we will be working with both unitary and channel oracles.
In both cases, we denote by $\mathcal{A}^{\oracle}$ the algorithm $\mathcal{A}$ with access to the oracle $\oracle$.
If $\oracle$ is unitary, $\mathcal{A}$ can be understood as having access to $\oracle$, $\oracle^\dagger$, and $\oracle^*$ as well as their controlled versions.
If $\oracle$ is a channel, then $\mathcal{A}$ has access only to the channel.
In \cref{thm:main-result}, by efficiently simulatable we mean the following.

\begin{definition}\label{def:oracle-sim}
  An oracle distribution $\oracledist$ is {\em efficiently simulatable} if there exists a QPT (stateful) algorithm
  $\Sim$ such that, for all QPT $\mathcal{A}$,
  \[
    \left|\Pr_{\oracle \gets \oracledist}[\mathcal{A}^{\oracle}(1^n) = 1] -
    \Pr[\mathcal{A}^{\Sim}(1^n) = 1]\right| \leq \negl(n)
  \]
\end{definition}

\begin{theorem}[\cite{Zhandry18}]
  The QROM distribution is efficiently simulatable.
\end{theorem}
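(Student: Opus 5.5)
The plan is to give a \emph{perfect} stateful simulation using Zhandry's compressed-oracle technique; perfect simulation immediately gives the $\negl(n)$ bound of \cref{def:oracle-sim}, in fact with distance $0$.

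\textbf{Which oracles must be simulated.} In the QROM the oracle is $\oracle_H\ket{x,y} = \ket{x, y \oplus H(x)}$ for a uniformly random $H : \bits^n \to \bits^m$. This operator is a real permutation matrix and an involution. Hence $\oracle_H^\dagger = \oracle_H^* = \oracle_H$. So it suffices for $\Sim$ to implement a single unitary query operation together with its controlled version. The controlled version comes for free once the simulated query is an efficient unitary circuit acting jointly on the adversary's registers and the simulator's internal state.

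\textbf{Step 1 (purification).} Replace the random choice of $H$ by a register $\reg{F}$ that holds the uniform superposition $\sum_H \ket{H}$ over truth tables, and let a query act as $\ket{x,y}\ket{H} \mapsto \ket{x, y\oplus H(x)}\ket{H}$. Tracing out $\reg{F}$ shows that the adversary's view is identical to the view with a sampled $H$. This holds for any adversary, including one using controlled queries, because the joint operation is unitary and $\reg{F}$ is never touched by $\mathcal{A}$.

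\textbf{Step 2 (Fourier basis).} Conjugate every output slot of $\reg{F}$ by the Fourier transform over $\mathbb{Z}_2^m$, i.e.\ Hadamards.
\begin{itemize}
\item The initial state of $\reg{F}$ becomes $\ket{0^m}^{\otimes 2^n}$.
\item A query becomes the phase-kickback map $\ket{x,\hat y}\ket{\hat D} \mapsto \ket{x,\hat y}\ket{\hat D \oplus (x\mapsto \hat y)}$, where $\hat y$ is the Fourier-transformed output register and $\hat D$ is the Fourier-domain truth table.
\end{itemize}
After $q$ queries, each branch of the state has at most $q$ nonzero entries. The simulator can therefore store $\hat D$ as a sorted list of at most $q$ pairs $(x,\hat w)$ with $\hat w \ne 0$, padded with $\bot$ symbols. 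A query is then the reversible operation that locates $x$ in the list, XORs $\hat y$ into its entry, and deletes the entry if it becomes $0$. Conjugating back by the Hadamards on $\hat y$ gives the primal-domain query. Every step is an isometry that acts identically to the corresponding unitary on the exponentially large register, restricted to the reachable sparse subspace. This yields perfect indistinguishability.

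\textbf{Step 3 (efficiency).} Each query costs $\poly(n,m,q)$ gates. The simulator does not know $q$ in advance, so it grows its list by one slot per query; this is allowed because $\Sim$ is stateful. The total running time is therefore polynomial for every QPT $\mathcal{A}$.

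\textbf{Main obstacle.} I expect the main difficulty to be making the sorted-list insertion and deletion a genuinely reversible, unitary circuit. It must be well defined on superpositions of lists of different lengths and orderings, and it must commute correctly with the adversary's control qubit for controlled queries. My first approach would be to use a canonical encoding: a fixed-length array of length equal to the current query count, sorted with $\bot$ last. I would then verify that the map ``XOR $\hat y$ into slot $x$'' is a permutation of the basis states of this encoding, which makes it unitary and efficiently implementable by a reversible sorting network.
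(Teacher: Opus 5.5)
Your proposal is correct: the paper gives no proof of its own and simply cites \cite{Zhandry18}, and your construction (purify $H$, pass to the Fourier domain where a query XORs $\hat y$ into $\hat D(x)$, and store the sparse Fourier database as a canonically ordered, $\bot$-padded list) is exactly Zhandry's compressed Fourier oracle, which gives a perfect, efficient, stateful simulation. The implementation concern you flag is resolved as you suggest: on canonical encodings the update is an efficiently computable involution, so it has an efficient reversible circuit; and appending the fresh $\bot$ slot unconditionally, before the controlled update, handles controlled queries.
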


\begin{theorem}[\cite{GrinkoY25,MaH25,FoxmanLMNW26}]
  The QHROM distribution is efficiently simulatable.
\end{theorem}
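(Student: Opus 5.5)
The plan is to build the stateful simulator $\Sim$ with the \emph{path-recording} technique of Ma and Huang, extended to inverse queries as in \cite{MaH25}, and then to cover the remaining kinds of access allowed by our oracle model (controlled and complex-conjugate queries) as in \cite{GrinkoY25,FoxmanLMNW26}. $\Sim$ keeps a purification register $\reg{R}$ that stores a ``relation'': a set of input/output pairs $(x_i,y_i)$ seen so far, kept separately for forward and for inverse queries. A forward query on $\ket{x}$ is answered by an isometry $V$. When $x$ is fresh, $V$ maps $\ket{x}\ket{R}$ to a uniform superposition over fresh $y$ of $\ket{y}\ket{R\cup\{(x,y)\}}$. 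When $x$ appears in the inverse part of the relation, $V$ instead ``unrecords'' the matching pair. Inverse queries are handled symmetrically by $V^\dagger$-like operations. Each step is a polynomial-size circuit on $n\cdot\poly(t)$ qubits (a sorted-list or symmetrized encoding of the relation), so $\Sim$ is QPT whenever the adversary makes $t=\poly(n)$ queries.

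The correctness argument is in three steps.
\begin{enumerate}
  \item Use Schur--Weyl duality to write the adversary's purified final state in the Haar world as a sum over $t$-fold tensor powers $U^{\otimes t}\otimes \bar U^{\otimes t'}$. Compute the twirl $\E_U[U^{\otimes t}\otimes \bar U^{\otimes t'}]$ through its Weingarten expansion.
  \item Show that, restricted to ``distinct'' subspaces (all recorded $x$'s distinct and all recorded $y$'s distinct), the path-recording isometry reproduces the leading Weingarten term exactly, so the two worlds agree up to a correction.
  \item Bound the deviation by $O(t^2/2^n)$ in trace distance. This uses the fact that non-distinct configurations have weight $O(t^2/2^n)$ in each of the two worlds, together with a hybrid over the $t$ queries and the triangle inequality.
\end{enumerate}
Because the bound holds for the joint state of the adversary, it holds for every QPT $\mathcal{A}$, and in particular gives the $\negl(n)$ gap required by \cref{def:oracle-sim}.

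Controlled queries are handled by letting the controlled versions act only on the branch whose control qubit is set. For this to be consistent, the simulator must be \emph{phase-consistent}, and we achieve this by recording a fresh Haar-random global phase, which is simulable with a single extra register. For $\oracle^*$ queries, the plan is to record conjugate paths in a second relation: $\bar U\ket{x}$ corresponds to $U^{T}$ acting on the dual basis. The twirl then includes mixed tensors $U^{\otimes a}\otimes\bar U^{\otimes b}$, which call for walled-Brauer (mixed Schur--Weyl) duality in place of the symmetric group; this is the approach of \cite{GrinkoY25}.

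The main obstacle is making a single stateful simulator consistent under an \emph{adaptive interleaving} of all four query types ($U$, $U^\dagger$, $U^*$, $U^T$) together with their controlled versions. Forward and conjugate records interact, because a $U^T$ query can ``cancel'' a record made by a $U$ query. I expect the precise error accounting to be the delicate part: it needs the walled-Brauer analogue of the distinct-subspace lemma, and the point I am least sure of is whether the error can be kept polynomially small in $t$ when conjugate queries are present. If the direct construction fails, my fallback is to rely on \cite{FoxmanLMNW26} and argue only that their simulator meets the interface of \cref{def:oracle-sim}.
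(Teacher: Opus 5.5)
\textbf{The paper does not prove this statement.} It imports it directly from \cite{GrinkoY25,MaH25,FoxmanLMNW26}. The only obligation is an interface check: the cited simulator must be a stateful QPT algorithm that answers adaptively interleaved queries to $\oracle$, $\oracle^\dagger$, $\oracle^*$ (hence $\oracle^T$) and their controlled versions. Its error must be $\negl(n)$ for polynomially many queries, as \cref{def:oracle-sim} requires under the paper's access model. Your fallback is exactly this, and it is the correct proof here.

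\textbf{Your direct reconstruction is not a proof.} The case this paper actually needs is conjugate and transpose queries together with controlled access, adaptively interleaved. That is precisely the step you leave open, including whether the error stays small. So everything nontrivial still rests on the citation.

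\textbf{The part you do sketch also contains a false step.} A path-recording simulator forces only the entries it samples itself to be distinct. Inputs to forward queries are chosen by the adversary and can repeat with certainty. For example, two forward queries on $\ket{0^n}$ leave the relation $\{(0^n,y_1),(0^n,y_2)\}$ with weight $1$. Hence your claim that ``non-distinct configurations have weight $O(t^2/2^n)$ in each of the two worlds'' fails.

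The two worlds are nonetheless close in this example. Write $N=2^n$. The Haar world gives the adversary $\Pi_{\mathrm{sym}}/\binom{N+1}{2}$. Path recording gives the symmetric projector restricted to pairs of distinct outputs, divided by $\binom{N}{2}$. These are at trace distance $2/(N+1)$. So the $O(t^2/N)$ error comes from the Haar side's weight on \emph{colliding outputs}, not from repeated inputs being rare. The distinctness bookkeeping in your steps 2--3 must therefore be imposed only on the simulator-sampled side of each record (forward outputs and inverse outputs), as in \cite{MaH25}.
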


\subsection{Quantum Commitments}

In order to allow for the existence of post-quantum zero knowledge, we must assume that quantum commitment schemes exist \cite{BrakerskiCQ23}.
Because our result is concerned with zero knowledge in a relativized model, we define quantum commitments with respect to an oracle distribution.

\begin{definition}
  Let $\Com = \{Q_0(\lambda), Q_1(\lambda)\}_{\lambda\in\mathbb{N}}$ be an ensemble of uniformly generated quantum oracle circuits acting on registers $\reg{CD}$\footnote{
    The size of $\reg{CD}$ will depend on the security parameter.
  } (where we will drop the security parameter $\lambda$ to simplify notation).
  Then $\Com$ is a quantum bit commitment scheme with respect to an oracle distribution $\oracledist$ if the following hold.
  \begin{itemize}
    \item {\bf Computational Hiding.} The reduced state on $\reg{C}$ is computationally indistinguishable between $Q_0^\oracle\ket{0}_\reg{CD}$ and $Q_1^\oracle\ket{0}_\reg{CD}$ in expectation over $\oracledist$.
    \item {\bf (Honest) Statistical Binding.} For all auxiliary states $\ket{\psi}$ on $\reg{E}$ and unbounded quantum oracle circuits $U$ acting on $\reg{DE}$, we have that
      \[
        \E_{\oracle \gets \oracledist } \left\| \left(Q_1^{\oracle}\ketbra{0}(Q_1^{\oracle})^\dagger \otimes \Id_\reg{E}\right) (\Id_\reg{C} \otimes U^{\oracle}_\reg{DE}) (Q_0^{\oracle}\ket{0}_\reg{CD}\ket{\psi}_\reg{E}) \right\| \leq \negl(n)
      \]
  \end{itemize}
\end{definition}

\subsection{Proof Systems}
\label{sec:proof-systems}

As it will be convenient in later sections, we use an unconventional but equivalent model to describe $\Xi$-protocols (also known as quantum $\Sigma$-protocols).
Rather than defining the prover as a stateful algorithm which is invoked twice in one interaction, we will split the prover into an initial message $\xi^x$ (which depends on the statement $x$ being proved) and a quantum channel $\Channel{\prover}{CS}{Z}$ which responds to the verifier's challenge.
The state $\xi^x$ includes a message register $\reg{A}$ and an internal register $\reg{S}$ which is used by $\prover$ to formulate its final message on $\reg{Z}$.
This is equivalent to the stateful prover model because we can simply define $\xi^x$ to be the output of $\prover(x)$ along with the prover's residual internal state.
Our choice makes clear what is meant by making a single query to the prover.

Although the challenge and verifier output are classical, we can model the verifier as a quantum channel $\Channel{\verify}{XACZ}{B}$ where $\reg{B}$ is a single qubit.
It can be assumed that $\verify$ measures $\reg{C}$ in the computational basis immediately, and likewise $\verify$'s acceptance can be determined by measuring $\reg{B}$ in the same basis.
Given these modeling choices, the protocol interaction can be described by the circuit in \cref{fig:xi-protocol}.

\begin{figure}
  \centering
  \begin{quantikz}
    \lstick[2]{$\xi^x$} & \wire[l][1]["\reg{A}"{above,anchor=south west,pos=1.19}]{q} & & & \gate[5]{\verify} \\
    & \wire[l][1]["\reg{S}"{above,anchor=south west,pos=1.19}]{q} & \gate[3]{\prover} & \setwiretype{n} & \\
    \setwiretype{n} & & \wire[r][1]["\reg{Z}"{above,anchor=south west,pos=0}]{q} & \setwiretype{q} & \\
    \setwiretype{n} & \ctrl[vertical wire=c]{1} \wire[l][1]["\reg{C}"{above,anchor=south west,pos=1.19}]{c}  & \wire{c} & \setwiretype{n} & \\
    \setwiretype{n} & \ctrl{0} & \setwiretype{c} & &
  \end{quantikz}
  \caption{Circuit Representation for an Interactive $\Xi$-Protocol}
  \label{fig:xi-protocol}
\end{figure}

\begin{definition}[$\Xi$-Protocols]
  Let $\lang \subseteq \bits^n$ be a language, $\{\xi^x\}_{x\in\lang} \subseteq \States{AS}$ be a family of quantum states, $\Channel{\prover}{CS}{Z}$ be a quantum channel, $\Channel{\verify}{XACZ}{B}$ be an efficient quantum algorithm and $\pi_{x,c} = (\Id_\reg{A} \otimes \prover)(\xi^x \otimes \ketbra{c})$ be a proof state.
  Then $(\{\xi^x\}, \prover, \verify)$ is a {\em $\Xi$-protocol for $\lang$} if the following holds.
  \begin{itemize}
    \item {\bf Completeness.}
      $\forall\, x \in \lang$,
      \[
        \E_{c \gets \chlspace} \ev{\verify(\ketbra{x,c} \otimes \pi_{x,c})}{1} \geq 1 - \negl(n)
      \]
  \item {\bf Soundness.}
    $\forall$ $x \notin \lang$, $\rho \in \States{AS}$, and $\Channel{\prover^*}{CS}{Z}$,
    \[
      \E_{c \gets \chlspace} \ev{\verify(\ketbra{x,c} \otimes \pi^*_{x,c})}{1} < \negl(n)
    \]
    where $\pi^*_{x,c} = (\Id_\reg{A} \otimes \prover^*)(\rho \otimes \ketbra{c})$.

  \item {\bf Special Honest Verifier Zero Knowledge.}
    $\exists$ an efficient algorithm $\Sim$ such that $\forall\, x \in \lang, c \in \chlspace$,
    \[
      \left\{ x, c, \pi \mid \pi \gets \Sim(x, c) \right\} \eqc
      \left\{ x, c, \pi_{x,c} \right\}
    \]
  \end{itemize}
\end{definition}

\begin{remark}
  In general, $\Xi$-protocols are not required to be overwhelmingly complete nor negligibly sound, but we have made this choice for convenience.
  As we are working towards an impossibility, this will only strengthen our result.
  That is to say, if straightline QNICs cannot be universal for this restricted class of $\Xi$-protocols, then they cannot be universal for a more general definition either.
\end{remark}

\subsection{Quantum Encryption Schemes}

\begin{definition}[SKQES]
  Let $\mathsf{QES} = (\Gen, \Enc, \Dec)$ where $\Gen : \mathbb{N} \to \keyspace$, $\Enc : \keyspace \times \States{M} \to \States{C}$, and $\Dec : \keyspace \times \States{C} \to \States{M} \oplus \bot$ are efficient quantum algorithms.
  Then $\mathsf{QES}$ is a {\em symmetric-key quantum encryption scheme} if for all $k \in \keyspace$
  \[
    \|\Dec_k \circ \Enc_k - \Id_\reg{M} \oplus \mathbf{0}_\bot\|_\diamond \leq \negl(n)
  \]
\end{definition}

\begin{lemma}[\cite{AlagicGM18}, Lemma 1] \label{lem:canonical}
  Let $\mathsf{QES} = (\Gen, \Enc, \Dec)$ be a SKQES. Then there exists a family of unitaries $U_k$, a probability distribution $p_k$, a family of quantum states $\ket{\psi^{(k,r)}}_\reg{T}$, and a family of channels $\Channel{\mathcal{R}_k}{C}{M}$ such that
  \begin{align*}
    \Enc_{k}(\rho_\reg{M}) &= U_k\left(\rho \otimes \sum_{r} p_k(r) \ketbra{\psi^{(k,r)}}\right)U_k^\dagger \\
    \Dec_k(\varrho_{\reg{C}}) &= \Tr_{\reg{T}}\left[U_k^\dagger \Pi_k \varrho \Pi_k U_k \right] + \mathcal{R}_k\left((\Id - \Pi_{k})\varrho(\Id - \Pi_{k})\right)
  \end{align*}
  where $\Pi_k = \sum_{r} \Pi_{k,r}$ and $\Pi_{k,r} = U_k(\Id_\reg{M} \otimes \ketbra{\psi^{(k,r)}})U_k^\dagger$.
\end{lemma}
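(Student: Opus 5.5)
The plan is to treat the encryption channel through its Stinespring dilation and use correctness to show that the environment learns nothing about the plaintext. I work in the exactly-correct case $\Dec_k \circ \Enc_k = \Id_\reg{M}$. The negligible-error version should follow either by the same argument with \cref{thm:ksw} supplying the approximate statements, or by noting that the lemma describes the scheme only up to the equivalences it allows. Fix $k$ and let $V_k : \reg{M} \to \reg{C}\reg{E}$ be an isometry with $\Enc_k(\rho) = \Tr_\reg{E}[V_k \rho V_k^\dagger]$. Let $\Enc_k^c(\rho) = \Tr_\reg{C}[V_k\rho V_k^\dagger]$ be the complementary channel.

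\textbf{Step 1: the environment sees a fixed state.} Perfect recoverability of $\rho$ from $\reg{C}$ forces $\Enc_k^c$ to be a constant channel. Concretely, $\Enc_k^c(\rho) = \tau_k$ for some fixed state $\tau_k$. This is the exact information--disclosure tradeoff. In Knill--Laflamme form it reads $V_k^\dagger(\Id_\reg{C}\otimes X_\reg{E})V_k = \Tr[\tau_k X]\,\Id_\reg{M}$ for every $X$ on $\reg{E}$.

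\textbf{Step 2: factor the dilation.} Write the spectral decomposition $\tau_k = \sum_r p_k(r)\ketbra{e_r}$. Introduce a register $\reg{T}$ and orthonormal states $\ket{\psi^{(k,r)}}_\reg{T}$, and set $\ket{\tau_k} = \sum_r \sqrt{p_k(r)}\ket{\psi^{(k,r)}}_\reg{T}\ket{e_r}_\reg{E}$. Apply $V_k$ to half of a maximally entangled state $\ket{\Phi^+}_{\reg{M}\reg{R}}$. By Step 1 the resulting marginal on $\reg{R}\reg{E}$ is $\frac{\Id}{d}\otimes\tau_k$. This is also the $\reg{R}\reg{E}$ marginal of $\ket{\Phi^+}_{\reg{MR}}\ket{\tau_k}_{\reg{TE}}$. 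Uhlmann's theorem then gives an isometry $W_k:\reg{M}\reg{T}\to\reg{C}$ with $V_k\ket{\phi} = (W_k\otimes\Id_\reg{E})(\ket{\phi}\otimes\ket{\tau_k})$ for all $\ket{\phi}$. Tracing out $\reg{E}$ gives
\[
\Enc_k(\rho) = W_k\Big(\rho\otimes\sum_r p_k(r)\ketbra{\psi^{(k,r)}}\Big)W_k^\dagger .
\]
Extending $W_k$ to a unitary $U_k$ (padding $\reg{C}$ or $\reg{T}$ if needed) yields the claimed form of $\Enc_k$. By construction the projections $\Pi_{k,r}$ are mutually orthogonal.

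\textbf{Step 3: the decryption form, which is the main obstacle.} First, $\Dec_k(U_k(\rho\otimes\psi^{(k,r)})U_k^\dagger) = \rho$ for every $r$ in the support of $p_k$. To see this, take $\rho$ pure: $\rho$ is extreme, so every convex component of the image of $\Dec_k\circ\Enc_k$ must also equal $\rho$; linearity then extends this to all $\rho$. Consequently $\Dec_k$ agrees with $\Tr_\reg{T}[U_k^\dagger(\cdot)U_k]$ on each block $\Pi_{k,r}$. The delicate part is the cross terms $\Pi_{k,r}\varrho\Pi_{k,r'}$ and the terms straddling $\Pi_k$ and $\Id-\Pi_k$. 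Two routes are available. The first is to show these contributions vanish, using a Knill--Laflamme argument applied to $\Dec_k$ restricted to the code space $\Pi_k$. The second, which is the fallback (and I believe the route \cite{AlagicGM18} take), is to observe that composing $\Dec_k$ with the measurement $\{\Pi_k,\Id-\Pi_k\}$ does not change its behaviour on honest ciphertexts. One then takes $\mathcal{R}_k$ to be $\Dec_k$ restricted to the rejected subspace and uses the recovery channel $\Tr_\reg{T}[U_k^\dagger(\cdot)U_k]$ on the accepted subspace. This gives an equivalent scheme in exactly the stated canonical form.
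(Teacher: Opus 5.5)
The paper does not prove this lemma; it imports it from \cite{AlagicGM18}, so I assess your argument on its own. Steps 1 and 2 are sound. The constant complementary channel plus Uhlmann's theorem on the Choi vectors gives $\Enc_k$ in the stated form. Taking the $\ket{\psi^{(k,r)}}$ from a spectral decomposition of $\sigma_k=\sum_r p_k(r)\ketbra{\psi^{(k,r)}}$ makes the $\Pi_{k,r}$ mutually orthogonal, which the paper later relies on.

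The gap is Step 3, exactly where you stop. Your extreme-point argument only determines $\Dec_k$ on the diagonal blocks $\Pi_{k,r}\varrho\Pi_{k,r}$. It says nothing about the blocks $\Pi_{k,r}\varrho\Pi_{k,r'}$ with $r\neq r'$, nor about the blocks straddling $\Pi_k$ and $\Id-\Pi_k$. The displayed formula for $\Dec_k$ amounts to the assertion that all of these blocks are sent to zero.

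Your fallback does not close this. Inserting the measurement $\{\Pi_k,\Id-\Pi_k\}$ defines a different decryption map, which agrees with $\Dec_k$ only on honest ciphertexts. The lemma, and its use in this paper on adversarially modified ciphertexts under QCA security of the \emph{given} scheme, concern $\Dec_k$ itself. The substitution is harmless only if the straddling terms already vanish, which is exactly what has to be shown.

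Your first route is the right one and is short, but it needs a hypothesis you never invoke.

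\emph{The $r\neq r'$ blocks.} Let $S_k=\mathrm{span}\{\ket{\psi^{(k,r)}}:p_k(r)>0\}$. Every unit vector $\ket{\phi}\in S_k$ satisfies $\ketbra{\phi}\le c\,\sigma_k$ for some $c>0$. So for pure $\rho$, positivity gives $0\le\Dec_k(U_k(\rho\otimes\ketbra{\phi})U_k^\dagger)\le c\rho$, which forces equality with $\rho$. Polarization then gives $\Dec_k(U_k(X\otimes Y)U_k^\dagger)=\Tr[Y]\,X$ for every operator $Y$ on $S_k$. That is, $\Dec_k$ agrees with $\Tr_\reg{T}[U_k^\dagger(\cdot)U_k]$ on all of $\Pi_k(\cdot)\Pi_k$.

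\emph{The straddling blocks.} Let $W$ be a Stinespring isometry of $\Dec_k$ with environment $\reg{F}$. Uniqueness of dilations of a partial trace gives $WU_k(\ket{m}\ket{s})=\ket{m}\otimes\iota\ket{s}$ for some isometry $\iota\colon S_k\to\reg{F}$. Hence $W$ maps the range of $\Id-\Pi_k$ into $(\reg{M}\otimes\iota(S_k)^\perp)\oplus(\ket{\bot}\otimes\reg{F})$. After $\Tr_\reg{F}$, the straddling blocks can only contribute to the off-diagonal $\reg{M}$--$\bot$ part of the output. That part vanishes because the accept/reject flag of $\Dec_k$ is classical (the paper's $\States{M}\oplus\bot$).

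This hypothesis is genuinely needed. With orthonormal $\ket{f_0},\ket{f_1}$, the isometry $\ket{m}\ket{0}\mapsto\ket{m}\ket{f_0}$, $\ket{m}\ket{1}\mapsto\ket{\bot}\ket{f_m}$ perfectly decrypts $\Enc(\rho)=\rho\otimes\ketbra{0}$. Yet it sends $\ket{0}\ket{+}$ to the pure state $(\ket{0}+\ket{\bot})/\sqrt{2}$, which no expression of the stated form can output. Once the cross terms are gone, $\mathcal{R}_k$ is just $\Dec_k$ on $(\Id-\Pi_k)(\cdot)(\Id-\Pi_k)$.

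Finally, restricting to perfect correctness is not a mere convenience. With nonzero correctness error the exact equalities can fail. An encryption that outputs a fixed state with probability $2^{-n}$ sends orthogonal plaintexts to ciphertexts with overlapping supports, which no map $\rho\mapsto U_k(\rho\otimes\sigma_k)U_k^\dagger$ does. So neither of your proposed extensions can give the lemma as stated; Theorem~\ref{thm:ksw} only yields approximate versions.
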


As it will be useful later, we will additionally denote by $\Enc_{k;r}$ and $\Dec_{k;r}$ the channels
\begin{align*}
  \Enc_{k;r}(\rho_{\reg{M}}) &= U_k(\rho \otimes \ketbra{\psi^{(k,r)}})U_k^\dagger \\
  \Dec_{k;r}(\varrho_{\reg{C}}) &= \Tr_{\reg{T}}\left[U_k^\dagger \Pi_{k,r} \varrho \Pi_{k,r} U_k \right] + \mathcal{R}_k\left((\Id - \Pi_{k,r})\varrho(\Id - \Pi_{k,r})\right)
\end{align*}

\begin{definition}[Quantum Ciphertext Authenticating (QCA)]\label{def:qca}
  Let $\mathsf{QES} = (\Gen, \Enc, \Dec)$ be a SKQES.
  Then $\mathsf{QES}$ is {\em QCA-secure} if, for all quantum channels $\Channel{A}{CR}$, there exists a completely-positive map $\Channel{\Lambda^{\mathrm{rej}}}{R}{R}$ such that $\| \Gamma - \Lambda \|_\diamond \leq \negl(n)$ and $\Lambda$ is trace-preserving where
  \[
    \Gamma = \E_k\left[\Dec_k \circ \mathcal{A} \circ \Enc_k\right] \eqand{2em} \Lambda = \Id_\reg{M} \otimes \E_{k,r}[\Lambda^{\mathrm{acc}}_{k,r}] + \ketbra{\bot}_\reg{M} \otimes \Lambda^{\mathrm{rej}}
  \]
  are the `real' and `simulated' effective attack maps, respectively, given that
  \ifllncs
  \begin{align*}
    \Lambda^{\mathrm{acc}}_{k,r}(X_\reg{R}) = \Tr_\reg{LMT}[\widetilde{\Pi}_{k,r}\left(\mathcal{A} \circ \Enc_{k;r}(\Phi^+_\reg{LM} \otimes X)\right)] \\
    \widetilde{\Pi}_{k,r} = U_k(\ketbra{\Phi^+}_\reg{LM} \otimes \ketbra{\psi^{(k,r)}}_\reg{T})U_k^\dagger
  \end{align*}
  \else
  \[
    \Lambda^{\mathrm{acc}}_{k,r}(X_\reg{R}) = \Tr_\reg{LMT}[\widetilde{\Pi}_{k,r}\left(\mathcal{A} \circ \Enc_{k;r}(\Phi^+_\reg{LM} \otimes X)\right)]
    \eqand{1em}
    \widetilde{\Pi}_{k,r} = U_k(\ketbra{\Phi^+}_\reg{LM} \otimes \ketbra{\psi^{(k,r)}}_\reg{T})U_k^\dagger.
  \]
  \fi
\end{definition}

\begin{fact} \label{fact:dec-before-after-proj}
  Let $\mathsf{QES} = (\Gen, \Enc, \Dec)$ be a SKQES.
  Then
  \ifllncs
  \begin{align*}
    \Pi^\bot \Dec_k(\rho) \Pi^\bot = \Dec_k((\Id - \Pi_k)\rho(\Id - \Pi_k)) \\
    (\Id - \Pi^\bot)\Dec_k(\rho)(\Id - \Pi^\bot) = \Dec_k(\Pi_k\rho\Pi_k)
  \end{align*}
  \else
  \[
    \Pi^\bot \Dec_k(\rho) \Pi^\bot = \Dec_k((\Id - \Pi_k)\rho(\Id - \Pi_k))
    \hspace{1.5em}\text{ and }\hspace{1.5em}
    (\Id - \Pi^\bot)\Dec_k(\rho)(\Id - \Pi^\bot) = \Dec_k(\Pi_k\rho\Pi_k).
  \]
  \fi
\end{fact}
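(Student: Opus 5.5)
The plan is to read both identities off the canonical form of \cref{lem:canonical}. I take $\Pi^\bot$ to be the projector onto the $\bot$ summand of the output space $\reg{M} \oplus \bot$, so that $\Id - \Pi^\bot$ projects onto $\reg{M}$.

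First I split $\Dec_k(\rho)$ into two pieces with disjoint supports. The "accept" term $\Tr_\reg{T}[U_k^\dagger \Pi_k \rho \Pi_k U_k]$ is an operator on $\reg{M}$, so it lies in the $(\Id - \Pi^\bot)$ block. For the "reject" term $\mathcal{R}_k((\Id-\Pi_k)\rho(\Id-\Pi_k))$ I need it to lie in the $\bot$ block. I read the recovery channel in the canonical form of \cite{AlagicGM18} as outputting the failure symbol, i.e.\ as a channel with range $\LinearOps{\bot}$, even though \cref{lem:canonical} writes its codomain loosely as $\reg{M}$. This is exactly the step I expect to be the obstacle. If $\mathcal{R}_k$ could place weight on $\reg{M}$, the fact would be false as stated. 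So I would make this convention explicit, and justify it from the construction in \cite{AlagicGM18}: decryption flags $\bot$ precisely when the $\Pi_k$ test fails.

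Granting that, $\Dec_k(\rho) = X_\reg{M} \oplus Y_\bot$ is block diagonal with no cross terms, because it is a sum of an operator supported on $\reg{M}$ and one supported on $\bot$. Conjugating by $\Pi^\bot$ therefore returns $Y_\bot = \mathcal{R}_k((\Id-\Pi_k)\rho(\Id-\Pi_k))$. Conjugating by $\Id - \Pi^\bot$ returns $X_\reg{M} = \Tr_\reg{T}[U_k^\dagger \Pi_k \rho \Pi_k U_k]$.

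It remains to check that these equal the right-hand sides, which I do by substituting into the canonical form and using that $\Pi_k$ is a projector. The projector property holds because the $\Pi_{k,r}$ are mutually orthogonal: the states $\ket{\psi^{(k,r)}}$ are orthogonal for distinct $r$, as implicit in the canonical form.

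\begin{itemize}
\item For $\Dec_k((\Id-\Pi_k)\rho(\Id-\Pi_k))$, the accept term contains $\Pi_k(\Id-\Pi_k) = 0$ and vanishes. The reject term becomes $\mathcal{R}_k$ applied to $(\Id-\Pi_k)^2\rho(\Id-\Pi_k)^2 = (\Id-\Pi_k)\rho(\Id-\Pi_k)$, which is $Y_\bot$.
\item For $\Dec_k(\Pi_k\rho\Pi_k)$, the reject term is $\mathcal{R}_k(0) = 0$ by linearity. The accept term is $\Tr_\reg{T}[U_k^\dagger \Pi_k^2 \rho \Pi_k^2 U_k] = X_\reg{M}$.
\end{itemize}

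Matching these against the block computation gives both identities.
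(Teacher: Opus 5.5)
Your argument is correct and is the direct verification the paper leaves implicit: the paper states this Fact without proof, as an immediate consequence of \cref{lem:canonical}, and you split $\Dec_k(\rho)$ into its accept and reject blocks and use that $\Pi_k$ is a projector. The convention you flag, that $\mathcal{R}_k$ maps into the $\bot$ block, is genuinely necessary: with $\mathcal{R}_k$ literally typed $\reg{C}\to\reg{M}$ the first identity fails. The paper relies on that reading throughout, e.g.\ in $\starverify_k(\rho) = \verify\circ\Dec_k(\Pi_k\rho\Pi_k) + \Tr[(\Id-\Pi_k)\rho]\ketbra{0}$, so it is best stated as an explicit convention on the scheme rather than extracted from \cite{AlagicGM18}.
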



\section{Modelling Quantum Non-Interactivity Compilers (QNICs)}
\label{sec:model}

Our goal in this section is to formalize a broad class of Fiat–Shamir–like transformations which remove interaction from public-coin proof systems.
Because we are working towards an impossibility result, we will narrow our focus to compilers for three-message $\Xi$-protocols.
As our main result applies to compilers that are universal for $\Xi$-protocols, it will also cover any expansion of universality --- e.g., to protocols with a polynomial number of rounds.

Recall from \cref{sec:proof-systems} that a $\Xi$-protocol can be defined by a family of quantum states $\{\xi^x\}_{x \in \lang}$, a quantum channel $\prover$, and an efficient quantum algorithm $\verify$.
In the most general sense, a quantum non-interactivity compiler (QNIC) should be a set of algorithms which, when given black-box access to $(\{\xi^x\}, \prover, \verify)$, can construct a non-interactive argument system.
This intuition leads to our first definition.

\begin{definition}[Quantum Non-Interactivity Compiler (QNIC)]
  Let $\mathcal{T} = (\mathcal{T}_P, \mathcal{T}_V)$ be a pair of quantum poly-time oracle algorithms and $\oracledist$ be a distribution over oracles.
  Then $\mathcal{T}$ is an {\em $(\varepsilon,\delta)$-QNIC relative to $\oracledist$} if the following properties hold for all languages $\lang$ and $\Xi$-protocols $(\{\xi^x\}_{x\in\lang},\prover, \verify)$.
  \begin{itemize}
    \item {\bf $(1 - \varepsilon)$-Completeness.}
      $\forall\, x \in \lang$,
      \[
        \Pr_{O \gets \mathbf{O}}\left[\mathcal{T}_V^{\oracle,\verify} \circ \mathcal{T}_P^{\oracle,\prover}(\xi^x) = 1\right] \geq 1 - \varepsilon
      \]
    \item {\bf $\delta$-Soundness.}
      $\forall\, x \notin \lang$ and QPT $\mathcal{A}$,
      \[
        \Pr_{O \gets \mathbf{O}}\left[\mathcal{T}_V^{\oracle,\verify}(\rho) = 1 \;\middle|\; \rho \gets \mathcal{A}^{\oracle}(x)\right] < \delta
      \]
  \end{itemize}
\end{definition}

In this work, we focus on a narrower set of QNICs which are ``straightline'' in that they execute only a single interaction with the prover and verifier.
This does not restrict access to the ``global'' oracle $\oracle$.

\begin{definition}[Straightline QNIC]
  Let $\mathcal{T} = (\mathcal{T}_P, \mathcal{T}_V)$ be a QNIC.
  Then $\mathcal{T}$ is {\em straightline} if $\mathcal{T}_P$ and $\mathcal{T}_V$ make only a single query to their respective $\prover$ and $\verify$ oracles.
\end{definition}

Because $\mathcal{T}_P$ and $\mathcal{T}_V$ make only one query each, they can be decomposed into two channels --- $(\mathcal{T}_{P,1},\mathcal{T}_{P,2})$ and $(\mathcal{T}_{V,1}, \mathcal{T}_{V,2})$, respectively --- such that (e.g.) $\mathcal{T}_P^{\prover} = \mathcal{T}_{P,2}(\Id \otimes \prover)\mathcal{T}_{P,1}$.
In this way, an honestly compiled NIZK argument can be represented as the circuit in \cref{fig:compiled-nizk} with registers labelled.

\begin{figure}
  \ifllncs
  \scalebox{0.9}{
  \fi
  \centering
  \begin{quantikz}[wire types={q, n, n, n, n, n}, column sep=7.5mm]
    \lstick[2]{$\xi^x$} & \wire[l][1]["\reg{S}"{above,anchor=south west,pos=1.06}]{q} & \gate[2][3em]{\prover} & \setwiretype{n} \\
    & \gate[5][3em]{\mathcal{T}_{P,1}} \wire[l][1]["\reg{A}"{above,anchor=south west,pos=1.09}]{q} & \wire[l][1]["\reg{C}"{above,pos=0.5}]{q} & \gate[5][3em]{\mathcal{T}_{P,2}} \wire[l][1]["\reg{Z}"{above,pos=0.5}]{q} &
    \setwiretype{n} & \gate[5][3em]{\mathcal{T}_{V,1}} \wire[r][1]["\reg{A'}"{above,pos=0.5}]{q} & \gate[3][3em]{\verify} & \gate[5][3em]{\mathcal{T}_{V,2}} \\
    & & & &
    & \wire[r][1]["\reg{C'}"{above,pos=0.5}]{q} & \wire[r][1]["\reg{B}"{above,pos=0.5}]{q} & & & \\
    & & & \wire[r][2]["\reg{P}"{above,pos=0.5}]{q} &
    & \wire[r][1]["\reg{Z'}"{above,pos=0.5}]{q} & & \wire[r][1]{q} & \meter{0/1} \\
    & & & &
    & & & & \\
    & \wire[r][2]["\reg{T_P}"{above,pos=0.5}]{q} & & &
    & \wire[r][2]["\reg{T_V}"{above,pos=0.5}]{q} & & &
  \end{quantikz}
  \ifllncs
  }
  \fi
  \caption{Circuit Representation of a Non-Interactive Argument Compiled via a Straightline QNIC}
  \label{fig:compiled-nizk}
\end{figure}

\begin{remark}
  It is worth taking time to consider the quantum oracle access that we allow for our QNIC algorithms.
  The choice for oracle access in the quantum setting is typically unitary implementations.
  In our context this would allow (e.g.) $\mathcal{T}_P$ to both compute and uncompute $\prover$.
  In the zero knowledge setting, allowing for inverse access to the prover may be too powerful as it could fundamentally break the zero knowledge of the compiled argument.
  Indeed a trivial compiler could simply extract a witness from the prover and use it directly as a non-interactive argument.

  Because our results are constrained to the single-query setting, we can essentially defer these modeling questions.
  Under the typical black-box assumption that the oracle's private ancilla are accessible only through queries to the oracle, a single unitary query is exactly equivalent to a single corresponding channel query.
  Also, with only a single query, inverse access is not meaningful: the unitary implementation can be chosen such that an inverse query on fresh ancilla will act as the identity on the visible registers.
  Thus for straightline QNICs it suffices to consider channel implementations for $\prover$ and $\verify$.
\end{remark}
\begin{remark}
  Note that we do require that the global oracle $\oracle$ is unitary and thus allows for inverse queries.
  This is because the decision algorithm we construct for \cref{thm:main-result} needs to uncompute $\mathcal{T}_{V,1}^\oracle$.
  Otherwise, we allow arbitrary\footnote{
    As suggested in \cite{Zha25}, this can include conjugate and/or controlled queries as well.
  } access to a polynomial number of queries to $\oracle$.
  Thankfully this is not a particularly onerous requirement and it is satisfied by most oracle models.
  In particular, this captures any quantum-accessible classical function.
\end{remark}

\section{Ciphertext Authentication \& Retrospective Security} \label{sec:retrospective}

In this section, we show that a quantum authentication scheme can be employed to defer the choice of message being encrypted until after a man-in-the-middle acts.
This result is used in \cref{sec:impossibility} to reorder the messages in the straightline interaction between the prover and the QNIC algorithms.

Let $(\Gen, \Enc, \Dec)$ be a quantum encryption scheme.
Recall from \cref{lem:canonical} that there exists a probability distribution $p_k$ over that scheme's encryption randomness and projectors $\Pi_k$ and $\widetilde{\Pi}_{k,r}$ which project onto valid real and simulated ciphertexts, respectively.
Then we can formalize $\mathsf{Real}$ and $\mathsf{Defer}$ from \cref{sec:tech-overview} into subchannels $\Exp_\mathsf{real}$ and $\Exp_\mathsf{defer}$ (see \cref{fig:experiments}) which are parameterized by a man-in-the-middle channel $\Channel{A}{CR}$.
Notice that any action on the message register $\reg{M'}$ commutes with all steps prior to Step \ref{item:exp-defer-step} in $\Exp_\mathsf{defer}$.
This is exactly the deferral property we want, as it will allow us to change the message in response to $\mathcal{A}$'s actions.

\begin{figure}
  \centering
  \newlength{\boxheight}
  \ifllncs \setlength{\boxheight}{11.5em} \else \setlength{\boxheight}{8.7em} \fi
  \fbox{%
    \begin{minipage}[t][\boxheight]{0.4\linewidth}
      $\Exp^{\mathcal{A}}_\mathsf{real}(\rho_\reg{MR})$
      \begin{enumerate}
        \item Sample $k \gets \Gen(1^n)$ {\ifllncs \\ \fi}and $r \gets p_k$.
        \item Prepare $\mathcal{A} \circ \Enc_{k;r}(\rho_\reg{MR})$.
        \item Measure $\reg{C}$ with $\{\Pi_k, \Id - \Pi_k\}$.
        \item Post-select onto $\Pi_k$.
        \item Output $\reg{CR}$ and $k$.
      \end{enumerate}
    \end{minipage}
  }
  \fbox{%
    \begin{minipage}[t][\boxheight]{0.4\linewidth}
      $\Exp^{\mathcal{A}}_\mathsf{defer}(\rho_\reg{M'R})$
      \begin{enumerate}
        \item Sample $k \gets \Gen(1^n)$ {\ifllncs \\ \fi}and $r \gets p_k$.
        \item Prepare $\mathcal{A} \circ \Enc_{k;r}(\Phi^+_\reg{LM} \otimes \rho)$.
        \item Measure $\reg{LC}$ with {\ifllncs \\ \fi}$\{\widetilde{\Pi}_{k,r}, \Id - \widetilde{\Pi}_{k,r}\}$.
        \item Post-select onto $\widetilde{\Pi}_{k,r}$. \label{item:exp-defer-pre-enc}
        \item Apply $\Enc_{k;r}$ to $\reg{M'}$ to get $\reg{C'}$. \label{item:exp-defer-step}
        \item Output $\reg{C'R}$ and $k$.
      \end{enumerate}
    \end{minipage}
  }
  \caption{`Real' and `Deferring' Experiments}
  \label{fig:experiments}
\end{figure}

With these formal subchannels defined, the rest of this section is dedicated to proving the following lemma.

\begin{lemma} \label{lem:exps-close}
  Let $(\Gen, \Enc, \Dec)$ be a SKQES with QCA security and $\Channel{A}{CR}$ be a quantum channel.
  Then $\| \mathsf{Exp}^\mathcal{A}_\mathsf{real} - \mathsf{Exp}^\mathcal{A}_\mathsf{defer} \|_\diamond \leq \negl(n)$.
\end{lemma}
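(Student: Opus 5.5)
The proof goes through two intermediate statements, matching the two gaps identified in the technical overview: retrospective security, then randomness preservation.

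\textbf{Step 1: retrospective security.} I would introduce expanded versions of the QCA channels that reveal the key and randomness upon acceptance.

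Define $\Gamma_\mathrm{R}$ as follows. Sample $k$ and $r$, apply $\mathcal{A}\circ\Enc_{k;r}$, measure $\{\Pi_k,\Id-\Pi_k\}$, and decrypt. On acceptance, also output $(k,r)$ on a fresh classical register $\reg{K}$; on rejection, output a fixed dummy state on $\reg{K}$.

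Define $\Lambda_\mathrm{R}$ analogously from $\Lambda$: it outputs $(k,r)$ alongside $\Lambda^{\mathrm{acc}}_{k,r}$, and the dummy state on the reject branch.

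To show $\|\Gamma_\mathrm{R}-\Lambda_\mathrm{R}\|_\diamond\le\negl(n)$ I would use \cref{thm:ksw}.
\begin{itemize}
\item Choose Stinespring dilations of $\Gamma$ and $\Lambda$ on a common environment that keeps a coherent copy of $(k,r)$.
\item QCA security gives $\|\Gamma-\Lambda\|_\diamond\le\negl(n)$. The lower bound in \cref{thm:ksw} then yields a unitary $U$ on the environment with $\|(\Id\otimes U)V_\Gamma-V_\Lambda\|_\infty\le\negl(n)$.
\item Consequently some expansion $\Lambda'$ of $\Lambda$ that also outputs $\reg{K}$ is close to $\Gamma_\mathrm{R}$. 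Here I would use that $\Tr_\reg{C}$ and the diamond norm behave as in \cref{fact:diamond-norm-partial-trace}.
\end{itemize}

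It remains to identify $\Lambda'$ with $\Lambda_\mathrm{R}$.

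On the accept branch, \cref{thm:radon-nikodym} says every expansion of $\Id_\reg{M}\otimes\E_{k,r}\Lambda^{\mathrm{acc}}_{k,r}$ is obtained from an operator $Q$ on the environment. Because the message output is exactly the identity channel on $\reg{M}$, the extra register $\reg{K}$ must be uncorrelated with $\reg{M}$. Closeness to $\Gamma_\mathrm{R}$, evaluated on inputs maximally entangled with a reference, should then force $\reg{K}$ to be distributed as the true $(k,r)$ jointly with $\reg{R}$.

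I expect this identification to be the main obstacle. My first attempt would be a direct computation: on the accept branch, project the real ciphertext onto $\widetilde{\Pi}_{k,r}$ for each fixed $(k,r)$ and compare with $\Gamma_\mathrm{R}$.

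\textbf{Step 2: randomness preservation.} Next I would show that, in $\Exp_\mathsf{real}$, the post-selected ciphertext is negligibly supported outside $\Pi_{k,r}$ for the originally sampled $r$. In other words,
\[
  \E_{k,r}\Tr\bigl[(\Pi_k-\Pi_{k,r})\,\mathcal{A}\circ\Enc_{k;r}(\rho)\bigr]\le\negl(n).
\]
To prove this, I would apply QCA security to a modified attacker. The ideal channel $\Lambda$ only accepts via $\widetilde{\Pi}_{k,r}$, which lies inside $\Pi_{k,r}$. So if an adversary could map ciphertexts into $\Pi_{k,r'}$ with $r'\neq r$ with noticeable weight, this would give a distinguisher, assisted by the revealed $(k,r)$ from Step 1.

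This lets me replace the projector $\Pi_k$ in $\Exp_\mathsf{real}$ by $\Pi_{k,r}$ at negligible cost.

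\textbf{Step 3: conclusion.} With the post-selection onto $\Pi_{k,r}$, $\Exp_\mathsf{real}$ equals a physically realizable supermap applied to $\Gamma_\mathrm{R}$. The supermap decrypts, reads $(k,r)$ from $\reg{K}$, and re-encrypts with $\Enc_{k;r}$; for fixed $(k,r)$, decryption composed with re-encryption acts as the identity on $\Pi_{k,r}$.

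Applying the same supermap to $\Lambda_\mathrm{R}$ yields exactly $\Exp_\mathsf{defer}$. The accepted plaintext is $\rho_\reg{M}$ passed through the identity, and it is re-encrypted under the same $(k,r)$.

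Since the diamond norm contracts under physically realizable supermaps, Steps 1 and 2 together give $\|\Exp_\mathsf{real}^\mathcal{A}-\Exp_\mathsf{defer}^\mathcal{A}\|_\diamond\le\negl(n)$.
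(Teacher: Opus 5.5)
\textbf{There is a genuine gap: both substantive steps are left unproven.} Your overall skeleton matches the paper's: reveal $(k,r)$ on acceptance, show that accepted ciphertexts keep the original randomness, then post-process and use contractivity. Your Step 3 is essentially the paper's hybrid chain.

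\textbf{Step 1.} Identifying your KSW-expansion $\Lambda'$ with $\Lambda_{\mathrm{R}}$ is the whole content of this step, and you leave it open. The missing tool is the Bell-projection supermap $\Theta$, a retraction onto the downward-closed set $\{\Id_\reg{M}\otimes\mathcal{B}\}$. After dephasing $\reg{K}$, each branch of the accepting part of $\Lambda'$ is dominated by $\Id_\reg{M}\otimes\E_{k,r}\Lambda^{\mathrm{acc}}_{k,r}$, so it is fixed by $\Theta$. Contractivity then places that accepting part within $\negl(n)$ of $\Theta(\Gamma^{\mathrm{acc}}_{\mathrm{R}})$.

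Even then you are not done, because $\Lambda^{\mathrm{acc}}_{\mathrm{R}}=\Theta(\widehat{\Gamma}^{\mathrm{acc}}_{\mathrm{R}})$, not $\Theta(\Gamma^{\mathrm{acc}}_{\mathrm{R}})$. The latter also counts accepted ciphertexts carrying a tag $\psi^{(k,s)}$ with $s\neq r$ while $\reg{K}$ holds $(k,r)$. Your proposed direct computation is exactly where this wrong-tag weight must be accounted for. The paper handles it with randomness preservation (\cref{cor:preserve-randomness-retro}), which it proves \emph{before} retrospective security.

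Within your ordering you could instead use the following. The difference $\Theta(\Gamma^{\mathrm{acc}}_{\mathrm{R}})-\Lambda^{\mathrm{acc}}_{\mathrm{R}}=\Theta(\Gamma^{\mathrm{acc}}_{\mathrm{R}}-\widehat{\Gamma}^{\mathrm{acc}}_{\mathrm{R}})$ is completely positive, and its $\reg{K}$-marginal is $\Theta(\Gamma^{\mathrm{acc}}-\Lambda^{\mathrm{acc}})$. By \cref{fact:diamond-norm-partial-trace}, its diamond norm is therefore at most $\|\Gamma^{\mathrm{acc}}-\Lambda^{\mathrm{acc}}\|_\diamond$.

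\textbf{Step 2.} This is the more serious gap, since your Step 3 needs randomness preservation in diamond norm regardless of how Step 1 is closed. The mechanism you propose cannot work. A retrospective QCA distinguisher receives only the decrypted plaintext, $\reg{R}$, and $(k,r)$. But $\Dec_k$ traces out the tag register $\reg{T}$, the only place the randomness $s$ of the accepted ciphertext is recorded. Revealing $(k,r)$ therefore gives no handle on whether that ciphertext lay in $\Pi_{k,r}$ or $\Pi_{k,s}$.

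Your observation that $\Lambda$ accepts only through $\widetilde{\Pi}_{k,r}\le\Pi_{k,r}$ is the right seed. The test it naturally suggests is to compare the acceptance weight with the Bell-test weight on an EPR-encoded message. That bounds the wrong-tag weight only for maximally entangled message inputs, not for arbitrary inputs such as $\xi^x$.

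The paper closes this with \cref{lem:close-operators}, in these steps:
\begin{itemize}
\item It uses $\widehat{\Gamma}^{\mathrm{acc}}\le\Gamma^{\mathrm{acc}}$ and $\Theta(\widehat{\Gamma}^{\mathrm{acc}})=\Lambda^{\mathrm{acc}}$.
\item It takes the Radon--Nikodym operator $Q$ of $\Gamma^{\mathrm{acc}}-\widehat{\Gamma}^{\mathrm{acc}}$ relative to $\Gamma^{\mathrm{acc}}$, and applies it to a KSW-close dilation of $\Lambda^{\mathrm{acc}}$.
\item This produces a bridge map lying in the image of $\Theta$ and $2\sqrt{\varepsilon}$-close to $\Gamma^{\mathrm{acc}}-\widehat{\Gamma}^{\mathrm{acc}}$, where $\varepsilon=\|\Gamma^{\mathrm{acc}}-\Lambda^{\mathrm{acc}}\|_\diamond$.
\item The conclusion is $\|\Gamma^{\mathrm{acc}}-\widehat{\Gamma}^{\mathrm{acc}}\|_\diamond\le\varepsilon+4\sqrt{\varepsilon}$.
\end{itemize}
This argument uses plain QCA security only. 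Your Step 2, by contrast, leans on Step 1, which is itself unresolved.

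Finally, even with the diamond-norm bound, swapping $\Pi_k$ for $\Pi_{k,r}$ on the output ciphertext requires controlling the cross terms $\Pi_{k,r}\sigma\Pi_{k,\neq r}$. The paper does this via H\"older and Jensen, at the cost of a square root.
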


As mentioned in \cref{sec:tech-overview}, this is accomplished through a data processing argument.
That is, we demonstrate $\Exp_\mathsf{real}$ and $\Exp_\mathsf{defer}$ are indistinguishable from the real and simulated effective attack maps from \cref{def:qca} composed with a fixed post-processing step.
Because the post-processing done in both experiments depends on the choice of classical randomness $(k, r)$, we need retrospective security --- i.e., a property of QCA security which allows $(k,r)$ to be revealed to the attacker upon a successful decryption.
Its formal definition is given in \cref{sec:retrospective-security} along with a proof that all QCA secure schemes satisfy the property.

The second catch in our post-processing argument comes from Step \ref{item:exp-defer-step} of $\Exp_\mathsf{defer}$.
Because the re-encryption in that step reuses the original encryption randomness $r$, the output will be a ciphertext entirely in the image $\Pi_{k,r}$ (i.e., the projector onto valid ciphertexts encrypted with key $k$ and randomness $r$).
In $\Exp_\mathsf{real}$, the ciphertext output on $\reg{C}$ will be in the image of $\Pi_k$ but could potentially be partly or mostly outside of $\Pi_{k,r}$ for the original $r$ from Step 1.
In order to resolve this potential vulnerability, we prove in \cref{sec:randomness-preservation} that only negligible weight will be placed in the image of $\Pi_{k} - \Pi_{k,r}$.
Put another way, $\Exp_\mathsf{real}$'s output is indistinguishable from the output of a modified $\Exp_\mathsf{real}$ where $\Pi_{k,r}$ is used instead of $\Pi_k$.
We prove that this `randomness preservation lemma' follows from QCA security by analyzing a ``Bell projection'' supermap which is implicitly used in the QCA simulation.

\subsection{Randomness Preservation}
\label{sec:randomness-preservation}

\begin{lemma} \label{lem:preserve-randomness}
  Let $(\Gen, \Enc, \Dec)$ be a SKQES, $\mathcal{A} : \reg{CR} \to \reg{CR}$ be a channel, $\Gamma$ be the corresponding effective attack map, and $\Gamma^\mathrm{acc}$ be the acceptance branch of that map.
  Then we define a modified attack map
  \[
    \widehat{\Gamma} = \E_{k,r}\left[\Dec_{k;r} \circ \mathcal{A} \circ \Enc_{k;r}\right]
  \]
  where $\Dec_{k;r}(X)$ accepts only ciphertexts which are encrypted with randomness $r$, and a corresponding acceptance branch $\widehat{\Gamma}^{\mathrm{acc}}$.
  If $(\Gen, \Enc, \Dec)$ is QCA secure, then $\|\Gamma^{\mathrm{acc}} - \widehat{\Gamma}^\mathrm{acc}\|_\diamond \leq \negl(n)$.
\end{lemma}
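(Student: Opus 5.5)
The plan is to isolate the difference $\Delta = \Gamma^{\mathrm{acc}} - \widehat{\Gamma}^{\mathrm{acc}}$ and show that it is a completely positive map whose trace is negligible on every input. By \cref{lem:canonical} the projectors $\Pi_{k,r'}$ are mutually orthogonal, because the states $\ket{\psi^{(k,r')}}$ on $\reg{T}$ are orthogonal, and $\Pi_k = \sum_{r'} \Pi_{k,r'}$. Expanding $\Tr_\reg{T}[U_k^\dagger \Pi_k \varrho \Pi_k U_k]$ in this decomposition, every cross term $\Pi_{k,r'}\varrho\Pi_{k,r''}$ with $r' \neq r''$ vanishes under $\Tr_\reg{T}$. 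Hence
\[
  \Delta(X) = \E_{k,r}\sum_{r' \neq r} \Tr_\reg{T}\left[U_k^\dagger \Pi_{k,r'}\, \mathcal{A}\circ\Enc_{k;r}(X)\, \Pi_{k,r'} U_k\right],
\]
which is completely positive. By the identity $\|\Delta\|_\diamond = \|\Delta^\dagger(\Id)\|_\infty$, it then suffices to show $\Tr[\Delta(\rho_{\reg{MRE}})] \leq \negl(n)$ for every density matrix $\rho$ (with $\reg{E}$ a reference register). In other words, the acceptance weight falling on $\Pi_k - \Pi_{k,r}$ must be negligible.

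Next I would compute total acceptance weights and compare them with the simulator. QCA security gives $\|\Gamma - \Lambda\|_\diamond \leq \negl(n)$. Compressing both maps onto the non-$\bot$ branch is a physically realizable post-processing, so $\Tr\Gamma^{\mathrm{acc}}(\rho) = \Tr \E_{k,r}\Lambda^{\mathrm{acc}}_{k,r}(\rho_\reg{RE}) \pm \negl(n)$ for all $\rho$. In particular, the accepted weight of the real map is, up to negligible error, independent of the plaintext. It therefore remains to lower-bound $\Tr\widehat{\Gamma}^{\mathrm{acc}}(\rho)$ by $\Tr\E_{k,r}\Lambda^{\mathrm{acc}}_{k,r}(\rho_\reg{RE}) - \negl(n)$. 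This is where I would use a ``Bell projection'' supermap. Feed the plaintext half of $\Phi^+_\reg{LM}$ (keeping $\reg{L}$ aside) and measure the decrypted output together with $\reg{L}$ against $\Phi^+$. Because $\widetilde{\Pi}_{k,r} \leq \Pi_{k,r}$ and $\Dec_{k;r}$ on the image of $\Pi_{k,r}$ is just $U_k^\dagger$ followed by discarding $\reg{T}$, the Bell-projected part of $\widehat{\Gamma}^{\mathrm{acc}}$ on input $\Phi^+_\reg{LM}\otimes X$ equals exactly $\E_{k,r}\Lambda^{\mathrm{acc}}_{k,r}(X)$. This gives $\Tr\Delta(\Phi^+_\reg{LM}\otimes X) \leq \negl(n)$ for every $X$.

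The main obstacle is extending this bound from the maximally entangled plaintext input to arbitrary plaintexts $\rho_\reg{MRE}$. The naive bound $\rho \leq d\,\Id$ loses an exponential factor. My first attempt would be a teleportation-style argument. Any input $\rho_\reg{MRE}$ can be produced from $\Phi^+_\reg{LM}$ by a (postselected) Bell measurement on $\reg{L}$ and a copy of $\rho$, followed by a Pauli correction on $\reg{M}$. Each Pauli correction, being a unitary on the plaintext, can be pulled through encryption as a modification of what $\reg{L}$ holds. So I would rewrite $\Delta(\rho)$ as a uniform average over Pauli outcomes of $d^2\cdot\Delta(\Phi^+_\reg{LM}\otimes X_P)$, where each $X_P$ is subnormalized to $1/d^2$. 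That cancels the dimension factor and should yield $\Tr\Delta(\rho) \leq \negl(n)$.

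If the Pauli correction cannot be commuted through $\Enc_{k;r}$ for a general scheme, I would fall back on a different route. I would apply the QCA guarantee for $\Gamma$ on the inputs $\Phi^+\otimes X$ and $\rho$ simultaneously, and use the fact that $\Tr\Gamma^{\mathrm{acc}}(\rho)$ depends only on $\rho_\reg{RE}$. I would then argue that the positive operator $\Delta^\dagger(\Id)$, restricted to plaintext, has trace against $\Id_\reg{M}/d\otimes X$ that is negligible and is approximately ``flat'' on $\reg{M}$, so its operator norm is also negligible.
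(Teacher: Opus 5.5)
Your first steps are sound and use the same ingredients as the paper. $\Delta=\Gamma^{\mathrm{acc}}-\widehat{\Gamma}^{\mathrm{acc}}$ is completely positive, which is the paper's $\widehat{\Gamma}^{\mathrm{acc}}\leq\Gamma^{\mathrm{acc}}$. Your Bell-projection computation is the paper's identity $\Theta(\widehat{\Gamma}^{\mathrm{acc}})=\Id_\reg{M}\otimes\E_{k,r}[\Lambda^{\mathrm{acc}}_{k,r}]$, and it gives $\Tr\Delta(\Phi^+_\reg{LM}\otimes X)\leq\negl(n)$. Write $D=\Delta^\dagger(\Id)\geq 0$. That bound only says $\Tr[D(\Id_\reg{M}/d\otimes X)]$ is small, i.e.\ it controls $\Tr_\reg{M}[D]/d$. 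Passing from there to $\|D\|_\infty$ is exactly where your proposal has a gap.

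The teleportation argument does not close this gap. Only one Bell outcome reproduces $\Delta(\rho)$, and it does so with weight $d^{-2}$. A Pauli correction $P$ on the plaintext becomes $U_k(P\otimes\Id_\reg{T})U_k^\dagger$ on the ciphertext, which is a key-dependent unitary applied \emph{before} $\mathcal{A}$. It can be moved onto $\reg{L}$ only while $\reg{LM}$ is still in $\Phi^+$, not after the measurement that determines $P$. Summing over outcomes therefore gives only the Pauli-twirled quantity $d^{-2}\sum_P\Tr\Delta(P\rho P^\dagger)=\Tr[D(\Id_\reg{M}/d\otimes\rho_\reg{RE})]$. That is the bound you already have, and recovering $\Tr\Delta(\rho)$ from it costs a factor $d^2$.

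Your fallback, as stated, also fails. The trace-level facts you list are $(\Gamma^{\mathrm{acc}})^\dagger(\Id)\approx\Id_\reg{M}\otimes G$, $0\leq D\leq(\Gamma^{\mathrm{acc}})^\dagger(\Id)$, and $\Tr_\reg{M}[D]/d$ small. All three are consistent with $D=\ketbra{0}_\reg{M}\otimes G$, whose norm is $\|G\|_\infty$. So flatness of $\Delta$ on $\reg{M}$ must come from map-level information. You need $\Delta\leq\Gamma^{\mathrm{acc}}$ as completely positive maps, together with diamond-closeness of $\Gamma^{\mathrm{acc}}$ to a map acting trivially on $\reg{M}$, plus a continuity argument that transfers the domination.

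The paper supplies this in \cref{lem:close-operators}:
\begin{itemize}
\item \cref{thm:ksw} gives $\sqrt{\varepsilon}$-close Stinespring operators for $\Gamma^{\mathrm{acc}}$ and $\Lambda^{\mathrm{acc}}$.
\item \cref{thm:radon-nikodym} writes $\Delta$ through some $0\leq Q\leq\Id$ on the environment of the first dilation.
\item Applying the same $Q$ to the second dilation gives a bridge map $\mathcal{B}\leq\Lambda^{\mathrm{acc}}$ with $\|\Delta-\mathcal{B}\|_\diamond\leq 2\sqrt{\varepsilon}$.
\item Downward closure forces $\mathcal{B}=\Id_\reg{M}\otimes\mathcal{B}'$, so $\|\mathcal{B}\|_\diamond=\|\Theta(\mathcal{B})\|_\diamond$. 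This is controlled by your $\Phi^+$ bound, i.e.\ $\|\Theta(\Delta)\|_\diamond\leq\varepsilon$.
\end{itemize}
Together these give $\|\Delta\|_\diamond\leq\varepsilon+4\sqrt{\varepsilon}$. This continuity-of-domination step is the missing idea in your proposal.
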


\ifllncs
\begin{proof}
  See \cref{sec:randomness-preservation-proof}.
\end{proof}
\else
In order to prove \cref{lem:preserve-randomness}, we start by proving a general technical lemma about subchannels and their behaviour under supermaps.
\unless\ifllncs
In particular, we are interested in supermaps which are {\em retractions} in that their output always lies within a particular subspace on which the supermap acts trivially.
A retraction is analogous to a projection: it is idempotent, has image $\mathrm{M}$, and fixes every element of $\mathrm{M}$.

\begin{definition}
  Let $\mathrm{M} \subseteq \CP{A}$ be a subset of completely positive maps and $\Map{\Theta}{A}{A}$ be a supermap. Then $\Theta$ is a {\em retraction onto $\mathrm{M}$} if $\Theta(\mathcal{N}) \in \mathrm{M}$ for all $\mathcal{N} \in \CP{A}$ and $\Theta(\mathcal{M}) = \mathcal{M}$ for all $\mathcal{M} \in \mathrm{M}$.
\end{definition}
\fi

\begin{lemma} \label{lem:close-operators}
  Let $\mathcal{X}, \Map{Z}{AB}{CD}$ be completely positive, trace-non{\ifllncs\linebreak[4]-\fi}increasing maps such that $\mathcal{Z} \leq \mathcal{X}$ and $\Map{\Theta}{AB,CD}{AB,CD}$ be a retraction onto a downward-closed subset $\mathrm{M} \subseteq \CP{AB}{CD}$ of completely positive maps.
  If $\|\mathcal{X} - \Theta(\mathcal{Z})\|_\diamond \leq \varepsilon$, then $\|\mathcal{X} - \mathcal{Z} \|_\diamond \leq \varepsilon + 4\sqrt{\varepsilon}$.
\end{lemma}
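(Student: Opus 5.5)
The plan is to reduce everything to a statement about trace weight, using positivity of the difference $\mathcal{W} := \mathcal{X} - \mathcal{Z}$. Since $\mathcal{Z} \leq \mathcal{X}$, $\mathcal{W}$ is completely positive. For completely positive maps $\|\mathcal{W}\|_\diamond = \|\mathcal{W}^\dagger(\Id)\|_\infty$, so
\[
  \|\mathcal{X} - \mathcal{Z}\|_\diamond = \sup_{\rho} \left(\Tr[(\mathcal{X}\otimes\Id)(\rho)] - \Tr[(\mathcal{Z}\otimes\Id)(\rho)]\right).
\]
Hence it suffices to show that, on every input $\rho$, $\mathcal{Z}$ retains all but $\varepsilon + 4\sqrt{\varepsilon}$ of the weight of $\mathcal{X}$. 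The hypothesis already gives $\Tr[\Theta(\mathcal{Z})(\rho)] \geq \Tr[\mathcal{X}(\rho)] - \varepsilon$. What remains is to show that applying the retraction to $\mathcal{Z}$ cannot create more than $4\sqrt{\varepsilon}$ of extra weight on any input.

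To get this, I will build a completely positive ``common part'' $\mathcal{Y}$ of $\mathcal{X}$ and $\Theta(\mathcal{Z})$. Fix a Stinespring dilation $(\reg{E},V)$ of $\mathcal{X}$, and pad so that a dilation $V'$ of $\Theta(\mathcal{Z})$ uses the same $\reg{E}$. By \cref{thm:ksw} there is a unitary $U$ with $\|(\Id\otimes U)V' - V\|_\infty \leq \sqrt{\varepsilon}$. I will then define $\mathcal{Y} \leq \mathcal{X}$ through \cref{thm:radon-nikodym}, with a contraction $Q$ on $\reg{E}$ chosen so that $\mathcal{Y} \leq \Theta(\mathcal{Z})$ as well. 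A natural first attempt is to compress both dilations by the same operator $\Id\otimes Q^{1/2}$, taking $Q$ to be the projector onto the subspace where the two dilations agree up to $O(\sqrt{\varepsilon})$. Since operator distance $\sqrt{\varepsilon}$ changes squared norms by at most $2\sqrt{\varepsilon}$ on each side, this should give
\[
  \|\mathcal{X} - \mathcal{Y}\|_\diamond \leq 2\sqrt{\varepsilon}
  \qquad\text{and}\qquad
  \|\Theta(\mathcal{Z}) - \mathcal{Y}\|_\diamond \leq 2\sqrt{\varepsilon}.
\]
Because $\mathcal{Y} \leq \Theta(\mathcal{Z}) \in \mathrm{M}$ and $\mathrm{M}$ is downward-closed, we get $\mathcal{Y} \in \mathrm{M}$, and therefore $\Theta(\mathcal{Y}) = \mathcal{Y}$.

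Next I will compare $\mathcal{Z}$ with $\mathcal{Y}$ using only linearity and positivity of $\Theta$. A physically realizable supermap sends completely positive maps to completely positive maps, so it preserves the completely positive order and contracts the diamond norm. Write $\mathcal{X} = \mathcal{Y} + \mathcal{D}$ with $\mathcal{D} \geq 0$ and $\|\mathcal{D}\|_\diamond \leq 2\sqrt{\varepsilon}$. Then $\Theta(\mathcal{X}) = \mathcal{Y} + \Theta(\mathcal{D})$, and so $\|\Theta(\mathcal{X}) - \mathcal{Y}\|_\diamond \leq 2\sqrt{\varepsilon}$. Combined with $\Theta(\mathcal{Z}) \leq \Theta(\mathcal{X})$, this confines $\Theta(\mathcal{Z})$ between $\mathcal{Y}$ and $\Theta(\mathcal{X})$, up to $O(\sqrt{\varepsilon})$.

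The remaining, and hardest, step is to transfer this back to $\mathcal{Z}$ itself, that is, to show $\Tr[\mathcal{Z}(\rho)] \geq \Tr[\mathcal{Y}(\rho)] - 2\sqrt{\varepsilon}$. My first try is to apply the same Radon--Nikodym compression to $\mathcal{Z} \leq \mathcal{X}$, with its contraction $Q_{\mathcal{Z}}$ on the dilation of $\mathcal{X}$. I would then argue that the part of $\mathcal{Y}$ not dominated by $\mathcal{Z}$, namely its compression by $\Id - Q_{\mathcal{Z}}$, is annihilated (up to $2\sqrt{\varepsilon}$) when one passes through $\Theta$. This would use idempotence of $\Theta$ on $\mathrm{M}$ together with the two $2\sqrt{\varepsilon}$ closeness bounds above. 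Adding this to the initial $\varepsilon$ from the hypothesis gives the claimed $\varepsilon + 4\sqrt{\varepsilon}$. This step is where the retraction and downward-closed hypotheses must really be used, and I expect it to be the main obstacle.
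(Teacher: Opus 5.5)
There is a genuine gap. The step you flag as the main obstacle is the whole content of the lemma, and the construction you set up before it does not work as stated.

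Compressing the two aligned dilations $V$ and $(\Id\otimes U)V'$ by the same $\Id\otimes Q^{1/2}$ produces two different maps: one below $\mathcal{X}$ and one below $\Theta(\mathcal{Z})$. It does not give a single $\mathcal{Y}$ with both $\mathcal{Y}\le\mathcal{X}$ and $\mathcal{Y}\le\Theta(\mathcal{Z})$. Your next paragraph needs both properties: the first to write $\mathcal{X}=\mathcal{Y}+\mathcal{D}$ with $\mathcal{D}\ge 0$, the second to get $\mathcal{Y}\in\mathrm{M}$.

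The ``projector onto the subspace where the two dilations agree'' is also not well defined. The dilations already agree to within $\sqrt{\varepsilon}$ in operator norm, so the natural choice is $Q=\Id$, which just returns $\mathcal{X}$ and $\Theta(\mathcal{Z})$. Even granting such a $\mathcal{Y}$, everything you derive constrains only $\Theta(\mathcal{Z})$. The one relation you invoke, $\Theta(\mathcal{Z})\le\Theta(\mathcal{X})$, cannot transfer anything back to $\mathcal{Z}$ itself.

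Your instinct to compress both dilations by the same operator is the right mechanism, but it needs a different $Q$ and a different target. Two ideas are missing.

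\textbf{Idempotence applied to $\mathcal{Z}$.} Work with $\mathcal{W}=\mathcal{X}-\mathcal{Z}$ directly. Since $\Theta(\mathcal{Z})=\Theta(\Theta(\mathcal{Z}))$, linearity gives $\Theta(\mathcal{W})=\Theta(\mathcal{X}-\Theta(\mathcal{Z}))$. Contractivity then gives $\|\Theta(\mathcal{W})\|_\diamond\le\varepsilon$.

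\textbf{The bridge map.} Take $Q$ to be the Radon--Nikodym derivative of $\mathcal{W}\le\mathcal{X}$ with respect to $V$; this is your $\Id-Q_{\mathcal{Z}}$. Transplant it onto the aligned dilation of $\Theta(\mathcal{Z})$:
\[
\mathcal{B}(X)=\Tr_{\reg{E}}\bigl[(\Id\otimes Q^{1/2})V'XV'^\dagger(\Id\otimes Q^{1/2})\bigr].
\]
Since $\|Q^{1/2}\|_\infty\le 1$, the upper bound in the KSW theorem gives $\|\mathcal{W}-\mathcal{B}\|_\diamond\le 2\sqrt{\varepsilon}$. Since $\mathcal{B}\le\Theta(\mathcal{Z})\in\mathrm{M}$, downward closure gives $\mathcal{B}\in\mathrm{M}$, so $\Theta(\mathcal{B})=\mathcal{B}$. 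Therefore
\[
\|\mathcal{W}\|_\diamond\le\|\mathcal{W}-\mathcal{B}\|_\diamond+\|\Theta(\mathcal{B}-\mathcal{W})\|_\diamond+\|\Theta(\mathcal{W})\|_\diamond\le 4\sqrt{\varepsilon}+\varepsilon .
\]
This route needs neither a common part of $\mathcal{X}$ and $\Theta(\mathcal{Z})$ nor your trace-weight reformulation via $\|\mathcal{W}^\dagger(\Id)\|_\infty$.
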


In plain language, we have a subchannel $\mathcal{X}$, a branch of that subchannel $\mathcal{Z}$, and a supermap $\Theta$ which is a retraction onto a subspace $\mathrm{M}$.
If $\mathcal{X}$ is close to the retraction of $\mathcal{Z}$ onto $\mathrm{M}$, then $\mathcal{Z}$ must be the dominating branch of $\mathcal{X}$.
If $\mathcal{X} \in \mathrm{M}$, then $\mathcal{Z} \in \mathrm{M}$ by downward closure and the result becomes trivial.

Before we prove \cref{lem:close-operators}, we give motivation by showing that it implies \cref{lem:preserve-randomness}.
In that setting, $\Theta$ becomes the ``Bell-projection'' described in \cref{fact:bell-projection} which retracts onto the subspace of maps which act trivially on the first register.
This supermap is implicit in the QCA definition as part of the transformation from the real channel $\Gamma$ to the simulated channel $\Lambda$.
In fact, we have that $\Theta(\widehat{\Gamma}^{\mathrm{acc}}) = \Id_\reg{M} \otimes \E_{k,r}[\Lambda^{\mathrm{acc}}_{k,r}]$.
\ifllncs
We now prove \cref{lem:preserve-randomness} assuming \cref{lem:close-operators} and then prove \cref{lem:close-operators}.
\else
We start with facts about $\Theta$ and its subspace, prove \cref{lem:preserve-randomness}, and then prove \cref{lem:close-operators}.
\begin{fact} \label{fact:bell-projection}
  The ``Bell-projection'' supermap $\Map{\Theta}{AB}{AC}$ given by
  \[
    \Theta(\mathcal{N})(\rho) = \Tr_\reg{ZA'}[(\Phi^+_\reg{ZA'} \otimes \Id_\reg{AC})(\mathcal{N}_{\reg{A'B}\to\reg{A'C}} \otimes \Id_\reg{ZA})(\Phi^+_\reg{ZA'} \otimes \rho_\reg{AB})].
  \]
  is a retraction onto the subspace $\mathrm{M} = \{ \Id_\reg{A} \otimes \mathcal{B} :  \mathcal{B} \in \LinearOps{B}{C} \} \subset \LinearOps{AB}{AC}$.
\end{fact}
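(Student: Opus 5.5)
The plan is to verify the two defining properties of a retraction directly from the formula, using linearity to reduce to product inputs; both checks amount to bookkeeping about which registers each operation touches. Throughout, $\Phi^+_\reg{ZA'}$ denotes the rank-one projector $\ketbra{\Phi^+}$, so $\Tr[\Phi^+ \Phi^+] = 1$. The formula is linear in $\rho$, so it suffices to check each claim on product operators $\rho_\reg{AB} = X_\reg{A} \otimes Y_\reg{B}$. Such operators span $\LinearOps{AB}$, and the extension to an ancilla register $\reg{E}$ (needed for the identity of maps to hold as CP maps) is the same computation with $Y$ on $\reg{BE}$.

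\emph{Image lies in $\mathrm{M}$.} In the defining expression, $\mathcal{N}$ acts only on $\reg{A'B}$, and the final projection and partial trace act only on $\reg{ZA'}$. Register $\reg{A}$ of the input is therefore never touched, which gives
\[
\Theta(\mathcal{N})(X_\reg{A} \otimes Y_\reg{B}) = X_\reg{A} \otimes \mathcal{B}_\mathcal{N}(Y_\reg{B}),
\]
where
\[
\mathcal{B}_\mathcal{N}(Y) = \Tr_\reg{ZA'}\bigl[(\Phi^+_\reg{ZA'} \otimes \Id_\reg{C})(\mathcal{N} \otimes \Id_\reg{Z})(\Phi^+_\reg{ZA'} \otimes Y_\reg{B})\bigr]
\]
is a linear map $\LinearOps{B} \to \LinearOps{C}$. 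By linearity, $\Theta(\mathcal{N}) = \Id_\reg{A} \otimes \mathcal{B}_\mathcal{N} \in \mathrm{M}$. I will also note that if $\mathcal{N}$ is CP, then $\mathcal{B}_\mathcal{N}$ is CP and trace-nonincreasing whenever $\mathcal{N}$ is. This holds because $\mathcal{B}_\mathcal{N}$ is a composition of an ancilla preparation, $\mathcal{N}$, and a projective post-selection, and it is what makes $\Theta$ physically realizable, as the later lemmas require.

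\emph{$\Theta$ fixes $\mathrm{M}$.} Take $\mathcal{N} = \Id_\reg{A'} \otimes \mathcal{B}$. Then $(\mathcal{N} \otimes \Id_\reg{Z})(\Phi^+_\reg{ZA'} \otimes Y) = \Phi^+_\reg{ZA'} \otimes \mathcal{B}(Y)$. Projecting onto $\Phi^+$ and tracing out $\reg{ZA'}$ multiplies this by $\Tr[\Phi^+\Phi^+] = 1$, so $\mathcal{B}_\mathcal{N} = \mathcal{B}$. Hence $\Theta(\Id_\reg{A} \otimes \mathcal{B}) = \Id_\reg{A} \otimes \mathcal{B}$, which is the required fixed-point property.

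No genuine obstacle arises. The only point needing care is the register identification: $\Theta$ takes a map on $\reg{A'B}$ and returns a map on $\reg{AB}$, so "$\Theta$ fixes $\mathcal{M}$" must be read with $\reg{A'} \cong \reg{A}$. With that convention, idempotence $\Theta \circ \Theta = \Theta$ follows immediately from the two properties above.
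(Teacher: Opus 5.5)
Your proof is correct and follows the paper's argument. Register $\reg{A}$ is never touched, so $\Theta(\mathcal{N}) = \Id_\reg{A} \otimes \mathcal{B}_\mathcal{N}$; and for $\mathcal{N} = \Id_\reg{A'} \otimes \mathcal{B}$ the state $\Phi^+_\reg{ZA'}$ passes through unchanged, so the projection and partial trace contribute only a factor $\Tr[\Phi^+] = 1$. Your reduction to product inputs and your explicit naming of $\mathcal{B}_\mathcal{N}$ simply make precise the paper's ``clearly acts as the identity on $\reg{A}$'' step.
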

\begin{proof}
  Clearly $\Theta(\mathcal{N})$ acts as the identity on $\reg{A}$ by construction for any $\mathcal{N} \in \LinearOps{AB}{AC}$.
  Additionally $\Theta$ acts trivially upon any $\mathcal{N} = \Id_\reg{A} \otimes \mathcal{B}$ for $\mathcal{B} \in \LinearOps{B}{C}$.
  \begin{align*}
    \Theta(\Id_\reg{A} \otimes \mathcal{B})(\rho)
    &= \Tr_\reg{ZA'}[(\Phi^+_\reg{ZA'} \otimes \Id_\reg{AC})(\Id_\reg{ZAA'} \otimes \mathcal{B}_{\reg{B}\to\reg{C}})(\Phi^+_\reg{ZA'} \otimes \rho_\reg{AB})] \\
    &= \Tr_\reg{ZA'}[(\Id_\reg{ZAA'} \otimes \mathcal{B}_{\reg{B}\to\reg{C}})(\Phi^+_\reg{ZA'} \otimes \rho_\reg{AB})] \\
    &= (\Id_\reg{A} \otimes \mathcal{B}_{\reg{B}\to\reg{C}})(\rho_\reg{AB}) \qedhere
  \end{align*}
\end{proof}

\begin{fact} \label{fact:downward-closed}
  {\unless\ifllncs The subspace \fi}$\mathrm{M} = \{ \Id_\reg{A} \otimes \mathcal{B} : \mathcal{B} \in \CP{B}{C} \} \subset \CP{AB}{AC}$ is downward-closed.
\end{fact}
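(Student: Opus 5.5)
The plan is to show directly that $\mathrm{M} = \{ \Id_\reg{A} \otimes \mathcal{B} : \mathcal{B} \in \CP{B}{C} \}$ is downward-closed under the completely-positive order. That is, if $0 \leq \mathcal{N} \leq \Id_\reg{A} \otimes \mathcal{B}$ with $\mathcal{N}$ completely positive and $\mathcal{B} \in \CP{B}{C}$, then $\mathcal{N} = \Id_\reg{A} \otimes \mathcal{B}'$ for some $\mathcal{B}' \in \CP{B}{C}$. The tool is the Radon--Nikodym theorem (\cref{thm:radon-nikodym}), which describes every map below $\Id_\reg{A} \otimes \mathcal{B}$ through a dilation of it.

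\textbf{Step 1: dilate the dominating map.} Fix a Stinespring dilation $(\reg{E}, V)$ of $\mathcal{B}$, so $\mathcal{B}(Y) = \Tr_\reg{E}[V Y V^\dagger]$ with $V : \reg{B} \to \reg{CE}$. Then $(\reg{E}, \Id_\reg{A} \otimes V)$ is a Stinespring dilation of $\Id_\reg{A} \otimes \mathcal{B}$. The important point is that the environment $\reg{E}$ is produced by $V$ alone and never touches $\reg{A}$.

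\textbf{Step 2: write the dominated map as a compression.} Since $\mathcal{N} \leq \Id_\reg{A} \otimes \mathcal{B}$, \cref{thm:radon-nikodym} gives $0 \leq Q \leq \Id$ on $\reg{E}$ with
\[
  \mathcal{N}(X) = \Tr_\reg{E}\left[(\Id_\reg{AC} \otimes Q^{\frac12})(\Id_\reg{A} \otimes V) X (\Id_\reg{A} \otimes V)^\dagger (\Id_\reg{AC} \otimes Q^{\frac12})\right].
\]
Define $W = (\Id_\reg{C} \otimes Q^{\frac12})V : \reg{B} \to \reg{CE}$ and $\mathcal{B}'(Y) = \Tr_\reg{E}[W Y W^\dagger]$. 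This map is completely positive by construction, and $\mathcal{N}(X) = (\Id_\reg{A} \otimes \mathcal{B}')(X)$ because $Q$ acts only on $\reg{E}$ and therefore factors through the identity on $\reg{A}$. Hence $\mathcal{N} \in \mathrm{M}$.

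I would also note that $\mathcal{B}' \leq \mathcal{B}$, so any trace-nonincreasing restriction on $\mathrm{M}$ is inherited.

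\textbf{Main obstacle.} The delicate point is only in Step 2: we must use a dilation of $\Id_\reg{A} \otimes \mathcal{B}$ whose environment is independent of $\reg{A}$. \cref{thm:radon-nikodym} allows any dilation of the dominating map, so choosing $\Id_\reg{A} \otimes V$ is legitimate, and the operator $Q$ it returns then automatically lives on $\reg{E}$ alone. Once that is in place, the argument is immediate.
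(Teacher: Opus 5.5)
Your proposal is correct and matches the paper's proof. Both arguments dilate $\mathcal{B}$ by some $V$, observe that $\Id_\reg{A}\otimes V$ dilates $\Id_\reg{A}\otimes\mathcal{B}$, and invoke the Radon--Nikodym theorem to obtain a $Q$ acting on $\reg{E}$ alone, so the dominated map acts trivially on $\reg{A}$; your explicit definition of $\mathcal{B}'$ via $W=(\Id_\reg{C}\otimes Q^{\frac12})V$ spells out the final step that the paper states in words.
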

\begin{proof}
  Let $\mathcal{X} = \Id_\reg{A} \otimes \mathcal{B} \in \mathrm{M}$ be a map in the subspace where $\mathcal{B} \in \LinearOps{B}{C}$ and $\mathcal{Z}$ be some map such that $\mathcal{Z} \leq \mathcal{X}$.
  Additionally, let $(V^\mathcal{B}, \mathrm{E})$ be a Stinespring dilation of $\mathcal{B}$ and note that $(\Id \otimes V^{\mathcal{B}}, \reg{E})$ is a Stinespring dilation of $\mathcal{X}$.
  By \cref{thm:radon-nikodym}, there exists $0 \leq Q \leq \Id_\reg{E}$ such that
  \[
    \mathcal{Z}(X) = \Tr_\reg{E}[(\Id_\reg{AC} \otimes Q^{\frac{1}{2}})(\Id_\reg{A} \otimes V^{\mathcal{B}}) X (\Id_\reg{A} \otimes V^{\mathcal{B}})^\dagger(\Id_\reg{AC} \otimes Q^{\frac{1}{2}})]
  \]
  which acts trivially on the $\reg{A}$ register. Therefore $\mathcal{Z} \in \mathrm{M}$ and $\mathrm{M}$ is downward-closed.
\end{proof}
\fi

\begin{proof}[Proof of \cref{lem:preserve-randomness}]
  This follows from \cref{lem:close-operators} where $\mathcal{X} = \Gamma^{\mathrm{acc}}$, $\mathcal{Z} = \widehat{\Gamma}^{\mathrm{acc}}$, and $\Theta$ is the Bell-projection supermap.
  Note that $\Theta(\widehat{\Gamma}^{\mathrm{acc}}) = \Lambda^{\mathrm{acc}}$ and thus we have the starting assumption that $\|\mathcal{X} - \Theta(\mathcal{Z})\|_\diamond \leq \negl(n)$.
  Facts \ref{fact:bell-projection} and \ref{fact:downward-closed} prove that $\Theta$ is a retraction onto a subspace which is downward-closed.
  It remains then to be shown that $\widehat{\Gamma}^{\mathrm{acc}} \leq \Gamma^{\mathrm{acc}}$.

  For brevity, let $\widetilde{X}_{k,r} = \mathcal{A} \circ \Enc_{k;r}(X)$ for a fixed $X$ and a particular $(k,r)$.
  Then we start by writing $\Gamma^{\mathrm{acc}}$ in terms of its constituent components:
  \begin{align*}
    \Gamma^{\mathrm{acc}}(X_\reg{MR})
      &= \E_{k,r}[(\Id - \Pi^\bot)\Dec_k(\widetilde{X}_{k,r})(\Id -\Pi^\bot)] \\
      &= \E_{k,r}\left[\Tr_\reg{T}[U_k^\dagger\Pi_k\widetilde{X}_{k,r}\Pi_kU_k]\right] \\
      \ifllncs
      &= \E_{k,r}\left[\Tr_\reg{T}\left[\left(\sum_s \Id_\reg{M} \otimes \ketbra{\psi^{(k,s)}}\right)U_k^\dagger\widetilde{X}_{k,r}\right.\right. \\
      & \hspace{4cm} \left.\left.U_k\left(\sum_s \Id_\reg{M} \otimes \ketbra{\psi^{(k,s)}}\right)\right]\right] \\
      \else
      &= \E_{k,r}\left[\Tr_\reg{T}\left[\left(\sum_s \Id_\reg{M} \otimes \ketbra{\psi^{(k,s)}}\right)U_k^\dagger\widetilde{X}_{k,r}U_k\left(\sum_s \Id_\reg{M} \otimes \ketbra{\psi^{(k,s)}}\right)\right]\right] \\
      \fi
      &= \E_{k,r}\left[\Tr_\reg{T}\left[\left(\sum_s \Id_\reg{M} \otimes \ketbra{\psi^{(k,s)}}\right)U_k^\dagger\widetilde{X}_{k,r}U_k\right]\right]
  \end{align*}
  where the final equality uses the cyclic property of the partial trace (and that $\{\ket{\psi^{(k,r)}}\}$ are orthogonal).
  Through the same process, we find that
  \[
    \widehat{\Gamma}^{\mathrm{acc}}(X_\reg{MR}) = \E_{k,r}\left[\Tr_\reg{T}\left[(\Id \otimes \ketbra{\psi^{(k,r)}})U_k^\dagger\widetilde{X}_{k,r}U_k\right]\right]
  \]
  Therefore
  \begin{equation}\label{eq:diff-channel}
    (\Gamma^{\mathrm{acc}} - \widehat{\Gamma}^{\mathrm{acc}})(X) = \E_{k,r}\left[\Tr_\reg{T}\left[\left(\sum_{s \neq r} \Id_\reg{M} \otimes \ketbra{\psi^{(k,s)}}\right)U_k^\dagger\widetilde{X}_{k,r}U_k\right]\right]
  \end{equation}

  In this way, $\Gamma^{\mathrm{acc}} - \widehat{\Gamma}^{\mathrm{acc}}$ is the composition of three completely positive operations: (1) the channel which encrypts the input, applies $\mathcal{A}$, and reverses the encrypting unitary $U_k$; (2) the projection onto the subspace given by $\sum_{s \neq r} (\Id_\reg{M} \otimes \ketbra{\psi^{(k,s)}}_\reg{T})$; and (3) the partial trace.
  Because complete positivity is preserved through composition, $\Gamma^{\mathrm{acc}} - \widehat{\Gamma}^{\mathrm{acc}}$ is completely positive and $\widehat{\Gamma}^{\mathrm{acc}} \leq \Gamma^{\mathrm{acc}}$.
\end{proof}

\begin{proof}[Proof of \cref{lem:close-operators}]
  Let $\mathcal{Y} = \mathcal{X} - \mathcal{Z}$.
  By contractivity of the diamond norm under supermaps and the idempotence of $\Theta$, we can bound the contribution of $\Theta(\mathcal{Y})$ to the diamond norm as
  \[
    \| \Theta(\mathcal{Y}) \|_\diamond
    = \| \Theta(\mathcal{X}) - \Theta(\mathcal{Z})\|_\diamond
    = \| \Theta(\mathcal{X}) - \Theta^2(\mathcal{Z})\|_\diamond
    \leq \| \mathcal{X} - \Theta(\mathcal{Z})\|_\diamond
    \leq \varepsilon
  \]
  Our goal now is to find a `bridge' map $\mathcal{B}$ which is close to both $\mathcal{Y}$ and $\Theta(\mathcal{Y})$ in diamond distance.
  By \cref{thm:ksw}, there exist Stinespring contractions $V^\mathcal{X}$ and $V^{\Theta(\mathcal{Z})}$ along with an ancilla register $\reg{E}$ such that $(V^\mathcal{X},\reg{E})$ and $(V^{\Theta(\mathcal{Z})},\reg{E})$ are dilations of $\mathcal{X}$ and $\Theta(\mathcal{Z})$ respectively, and
  \[
    \| V^\mathcal{X} - V^{\Theta(\mathcal{Z})} \|_\infty \leq \sqrt{\|\mathcal{X} - \Theta(\mathcal{Z})\|_\diamond} \leq \sqrt{\varepsilon}
  \]
  Because $\mathcal{Y} \leq \mathcal{X}$, we can invoke \cref{thm:radon-nikodym} to assert the existence of an operator $0 \leq Q \leq \Id$ such that
  \[
    \mathcal{Y}(X) = \Tr_\reg{E}\left[(\Id_\reg{CD} \otimes Q^{\frac{1}{2}}) V^\mathcal{X} X{V^\mathcal{X}}^\dagger(\Id_\reg{CD} \otimes Q^{\frac{1}{2}})\right]
  \]
  Using these ingredients, we construct our bridge map as
  \[
    \mathcal{B}(X) = \Tr_\reg{E}\left[(\Id_\reg{CD} \otimes Q^{\frac{1}{2}})V^{\Theta(\mathcal{Z})} X{V^{\Theta(\mathcal{Z})}}^\dagger(\Id_\reg{CD} \otimes Q^{\frac{1}{2}})\right]
  \]
  Observe that $\mathcal{Y}$ and $\mathcal{B}$ have Stinespring operators $V^\mathcal{Y} = (\Id \otimes Q^{\frac{1}{2}})V^\mathcal{X}$ and $V^{\mathcal{B}} = (\Id \otimes Q^{\frac{1}{2}})V^{\Theta(\mathcal{Z})}$.
  Using \cref{thm:ksw}, we can use the operator distance between these to bound the diamond norm of their respective channels.
  \ifllncs
  \begin{align*}
    \| \mathcal{Y} - \mathcal{B} \|_\diamond
    &\leq 2\| V^\mathcal{Y} - V^{\mathcal{B}} \|_\infty \\
    &= 2 \| (\Id_\reg{CD} \otimes Q^{\frac{1}{2}})(V^\mathcal{X} - V^{\Theta(\mathcal{Z})})\|_\infty \\
    &\leq 2\|V^\mathcal{X} - V^{\Theta(\mathcal{Z})}\|_\infty \\
    &\leq 2\sqrt{\varepsilon}
  \end{align*}
  \else
  \[
    \| \mathcal{Y} - \mathcal{B} \|_\diamond \leq 2\| V^\mathcal{Y} - V^{\mathcal{B}} \|_\infty = 2 \| (\Id_\reg{CD} \otimes Q^{\frac{1}{2}})(V^\mathcal{X} - V^{\Theta(\mathcal{Z})})\|_\infty \leq 2\|V^\mathcal{X} - V^{\Theta(\mathcal{Z})}\|_\infty \leq 2\sqrt{\varepsilon}
  \]
  \fi
  where the second inequality relies on $Q$ being a contraction.
  In this way, $\mathcal{Y}$ and $\mathcal{B}$ are $2\sqrt{\varepsilon}$-close.
  By contractivity of the diamond norm under supermaps, we know that $\Theta(\mathcal{Y})$ and $\Theta(\mathcal{B})$ are also $2\sqrt{\varepsilon}$-close.

  Our final observation is that $\mathcal{B} \leq \Theta(\mathcal{Z})$ by \cref{thm:radon-nikodym}.
  Thus $\mathcal{B} \in \mathrm{M}$ (because $\mathrm{M}$ is downward-closed) and $\Theta(\mathcal{B}) = \mathcal{B}$.
  Putting all this together,
  \begin{align*}
    \| \mathcal{X} - \mathcal{Z} \|_\diamond
    &= \| \mathcal{Y} \|_\diamond \\
    &= \| \mathcal{Y} - \mathcal{B} + \mathcal{B}\|_\diamond \\
    &\leq \| \mathcal{Y} - \mathcal{B}\|_\diamond + \|\mathcal{B}\|_\diamond \\
    &= \| \mathcal{Y} - \mathcal{B}\|_\diamond + \|\Theta(\mathcal{B})\|_\diamond \\
    &= \| \mathcal{Y} - \mathcal{B}\|_\diamond + \|\Theta(\mathcal{B}) - \Theta(\mathcal{Y}) + \Theta(\mathcal{Y})\|_\diamond \\
    &\leq \| \mathcal{Y} - \mathcal{B}\|_\diamond + \|\Theta(\mathcal{B}) - \Theta(\mathcal{Y})\|_\diamond + \|\Theta(\mathcal{Y})\|_\diamond \\
    &\leq 4\sqrt{\varepsilon} + \varepsilon
  \end{align*}
\end{proof}

\fi

\subsection{Retrospective Security}
\label{sec:retrospective-security}

In order to make a data processing argument where the post-processing channel depends on the encryption randomness, we introduce a property for quantum encryption keys we call ``retrospective security.''

\begin{definition}[Retrospective Security]\label{def:retrospective-qca}
  Let $\mathsf{QES} = (\Gen, \Enc, \Dec)$ be a SKQES and define the parameterized channel $\Channel{\Reveal_{k,r}}{M \oplus \bot}{MK}$ as
  \[
    \Reveal_{k,r}(\rho_\reg{M}) = (\Id - \Pi^\bot) \rho (\Id - \Pi^\bot) \otimes \ketbra{k,r}_\reg{K} + \Pi^\bot \rho \Pi^\bot \otimes \ketbra{\bot}_\reg{K}.
  \]
  Then $\mathsf{QES}$ is QCA with {\em retrospective security} if, for all channels $\Channel{A}{CR}$, there exists a completely positive map $\Channel{\Lambda^{\mathrm{rej}}_{\mathrm{R}}}{R}$ such that $\|\Gamma_{\mathrm{R}} - \Lambda_{\mathrm{R}}\|_\diamond \leq \negl(n)$ and $\Lambda_{\mathrm{R}}$ is trace-preserving where
  \ifllncs
  \begin{align*}
    \Gamma_{\mathrm{R}} &= \E_{k,r}\left[\Reveal_{k,r} \circ \Dec_k \circ \mathcal{A} \circ \Enc_{k;r}\right] \\
    \Lambda_{\mathrm{R}} &= \Id_\reg{M} \otimes \E_{k,r}[\ketbra{k,r}_\reg{K} \otimes \Lambda^{\mathrm{acc}}_{k,r}] + \ketbra{\bot}_\reg{MK} \otimes \Lambda^{\mathrm{rej}}_{\mathrm{R}}
  \end{align*}
  \else
  \[
    \Gamma_{\mathrm{R}} = \E_{k,r}\left[\Reveal_{k,r} \circ \Dec_k \circ \mathcal{A} \circ \Enc_{k;r}\right]
    \eqand{1.5em}
    \Lambda_{\mathrm{R}} = \Id_\reg{M} \otimes \E_{k,r}[\ketbra{k,r}_\reg{K} \otimes \Lambda^{\mathrm{acc}}_{k,r}] + \ketbra{\bot}_\reg{MK} \otimes \Lambda^{\mathrm{rej}}_{\mathrm{R}}
  \]
  \fi
  given that $\Lambda^{\mathrm{acc}}_{k,r}$ is as in \cref{def:qca}.
\end{definition}

Before we show that all QCA-secure schemes satisfy this definition, we first show that the randomness preservation of \cref{lem:preserve-randomness} extends to the retrospective setting where $(k,r)$ are included in the output.

\begin{corollary} \label{cor:preserve-randomness-retro}
  Let $(\Gen, \Enc, \Dec)$ be a SKQES, $\mathcal{A} : \reg{CR} \to \reg{CR}$ be a channel, $\Gamma_\mathrm{R}$ be as in \cref{def:retrospective-qca}, and $\Gamma_{\mathrm{R}}^\mathrm{acc}$ be the acceptance branch of that map.
  Let $\widehat{\Gamma}_{\mathrm{R}} = \E_{k,r}[\Reveal_{k,r} \circ \Dec_{k;r} \circ \mathcal{A} \circ \Enc_{k;r}]$ where $\Dec_{k;r}$ is as in \cref{lem:preserve-randomness}.
  If $(\Gen, \Enc, \Dec)$ is QCA secure then $\| \Gamma_{\mathrm{R}}^{\mathrm{acc}} - \widehat{\Gamma}_{\mathrm{R}}^{\mathrm{acc}} \|_\diamond \leq \negl(n)$.
\end{corollary}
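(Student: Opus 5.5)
The plan is to compute the difference $\Gamma_{\mathrm{R}}^{\mathrm{acc}} - \widehat{\Gamma}_{\mathrm{R}}^{\mathrm{acc}}$ explicitly, as in the proof of \cref{lem:preserve-randomness}, and observe that it is completely positive. By \cref{fact:diamond-norm-partial-trace}, its diamond norm equals that of the same map with $\reg{K}$ traced out. After the trace, the map is exactly $\Gamma^{\mathrm{acc}} - \widehat{\Gamma}^{\mathrm{acc}}$, which \cref{lem:preserve-randomness} already bounds by $\negl(n)$.

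Concretely, on the acceptance branch $\Reveal_{k,r}$ only tensors in $\ketbra{k,r}_\reg{K}$, because both $\Dec_k$ and $\Dec_{k;r}$ output a state supported on $\Id - \Pi^\bot$ in their accepting part (\cref{fact:dec-before-after-proj}). Repeating the computation leading to \eqref{eq:diff-channel}, I expect to get
\[
  (\Gamma_{\mathrm{R}}^{\mathrm{acc}} - \widehat{\Gamma}_{\mathrm{R}}^{\mathrm{acc}})(X) = \E_{k,r}\left[\ketbra{k,r}_\reg{K} \otimes \Tr_\reg{T}\left[\left(\sum_{s \neq r} \Id_\reg{M} \otimes \ketbra{\psi^{(k,s)}}\right)U_k^\dagger\widetilde{X}_{k,r}U_k\right]\right].
\]
For each fixed $(k,r)$, the summand is a composition of completely positive maps: encrypt, apply $\mathcal{A}$, undo $U_k$, project onto $\sum_{s\neq r}$, partially trace $\reg{T}$, and append the classical state $\ketbra{k,r}$. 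A convex combination of completely positive maps is completely positive, so the difference is CP.

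Applying \cref{fact:diamond-norm-partial-trace} with $\reg{C} = \reg{K}$ gives $\|\Gamma_{\mathrm{R}}^{\mathrm{acc}} - \widehat{\Gamma}_{\mathrm{R}}^{\mathrm{acc}}\|_\diamond = \|\Tr_\reg{K}\circ(\Gamma_{\mathrm{R}}^{\mathrm{acc}} - \widehat{\Gamma}_{\mathrm{R}}^{\mathrm{acc}})\|_\diamond$. Tracing out $\reg{K}$ simply removes the $\ketbra{k,r}$ factor, which recovers \eqref{eq:diff-channel}, i.e.\ $\Gamma^{\mathrm{acc}} - \widehat{\Gamma}^{\mathrm{acc}}$. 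Its diamond norm is $\negl(n)$ by \cref{lem:preserve-randomness}.

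The main point needing care is the bookkeeping that $\Reveal_{k,r}$ acts on the accepting branch by tensoring with $\ketbra{k,r}$ and introduces no cross terms with the $\bot$ branch. This holds because $\Reveal$ pinches with respect to $\Pi^\bot$, so the acceptance branch is well defined and commutes with $\Reveal$. The only substantive ingredient is that the difference is CP rather than merely Hermiticity-preserving, since \cref{fact:diamond-norm-partial-trace} requires complete positivity.
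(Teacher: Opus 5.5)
Your proposal is correct and follows the paper's own proof. Like the paper, you first show $\widehat{\Gamma}_{\mathrm{R}}^{\mathrm{acc}} \leq \Gamma_{\mathrm{R}}^{\mathrm{acc}}$ by redoing the computation behind \eqref{eq:diff-channel}, so the difference is completely positive. You then apply \cref{fact:diamond-norm-partial-trace} to trace out $\reg{K}$, which reduces the bound to $\|\Gamma^{\mathrm{acc}} - \widehat{\Gamma}^{\mathrm{acc}}\|_\diamond \leq \negl(n)$ from \cref{lem:preserve-randomness}.
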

\begin{proof}
  By following the same argument as in \cref{lem:preserve-randomness}, it can be seen that $\widehat{\Gamma}_{\mathrm{R}}^{\mathrm{acc}} \leq \Gamma_{\mathrm{R}}^{\mathrm{acc}}$ and therefore $\Gamma_{\mathrm{R}}^{\mathrm{acc}} - \widehat{\Gamma}_{\mathrm{R}}^{\mathrm{acc}}$ is completely positive.
  By \cref{fact:diamond-norm-partial-trace} and \cref{lem:preserve-randomness},
  \[
    \| \Gamma_{\mathrm{R}}^{\mathrm{acc}} - \widehat{\Gamma}_{\mathrm{R}}^{\mathrm{acc}} \|_\diamond = \| \Tr_\reg{K} \circ \Gamma_{\mathrm{R}}^{\mathrm{acc}} - \Tr_\reg{K} \circ \widehat{\Gamma}_{\mathrm{R}}^{\mathrm{acc}} \|_\diamond = \| \Gamma^{\mathrm{acc}} - \widehat{\Gamma}^{\mathrm{acc}} \|_\diamond \leq \negl(n). \qedhere
  \]
\end{proof}

\begin{lemma} \label{lem:qca-implies-retrospective}
  If $(\Gen, \Enc, \Dec)$ is an SKQES with QCA security then it has retrospective security as well.
\end{lemma}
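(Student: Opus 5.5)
Our plan follows the outline in \cref{sec:tech-overview}: start from the QCA guarantee $\|\Gamma - \Lambda\|_\diamond \leq \negl(n)$ and lift it to the expanded maps that also output $(k,r)$. Observe first that $\Gamma = \Tr_\reg{K} \circ \Gamma_\mathrm{R}$, since $\Reveal_{k,r}$ only appends a classical register. We then work separately with the acceptance and rejection branches, obtained by projecting $\reg{M}$ onto $\Id - \Pi^\bot$ or $\Pi^\bot$. This projection is a channel post-processing, so $\|\Gamma^\mathrm{acc} - \Lambda^\mathrm{acc}\|_\diamond$ and $\|\Gamma^\mathrm{rej} - \Lambda^\mathrm{rej}\|_\diamond$ are both negligible, where $\Lambda^\mathrm{acc} = \Id_\reg{M} \otimes \E_{k,r}[\Lambda^\mathrm{acc}_{k,r}]$. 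For the rejection branch we simply set $\Lambda^\mathrm{rej}_\mathrm{R} = \Lambda^\mathrm{rej}$. The $\bot$ outcome of $\Reveal$ writes a fixed $\ketbra{\bot}_\reg{K}$, so $\Gamma^\mathrm{rej}_\mathrm{R}$ and $\Lambda^\mathrm{rej}_\mathrm{R}$ differ from the unexpanded branches only by tensoring a fixed state. Their distance is therefore negligible, and trace preservation of $\Lambda_\mathrm{R}$ is inherited from $\Lambda$.

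The real content is the acceptance branch. By \cref{cor:preserve-randomness-retro}, we may replace $\Gamma^\mathrm{acc}_\mathrm{R}$ by $\widehat{\Gamma}^\mathrm{acc}_\mathrm{R} = \E_{k,r}[\ketbra{k,r}_\reg{K} \otimes \widehat{\Gamma}^\mathrm{acc}_{k,r}]$, where $\widehat{\Gamma}^\mathrm{acc}_{k,r}$ is the $r$-preserving acceptance branch for a fixed $(k,r)$. This map is classical-quantum in $\reg{K}$, and so is the target $\Lambda^\mathrm{acc}_\mathrm{R} = \E_{k,r}[\ketbra{k,r} \otimes \Id_\reg{M} \otimes \Lambda^\mathrm{acc}_{k,r}]$. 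Because $\reg{K}$ is classical, the diamond distance between two such maps is the diamond norm of their direct sum over $(k,r)$. It thus suffices to show that
\[
  \Big\| \bigoplus_{k,r} p(k,r)\big(\widehat{\Gamma}^\mathrm{acc}_{k,r} - \Id_\reg{M} \otimes \Lambda^\mathrm{acc}_{k,r}\big)\Big\|_\diamond \leq \negl(n).
\]

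To prove this we apply \cref{lem:close-operators} to the expanded maps. Set $\mathcal{X} = \Gamma^\mathrm{acc}_\mathrm{R}$ and $\mathcal{Z} = \widehat{\Gamma}^\mathrm{acc}_\mathrm{R}$; we already know $\mathcal{Z} \leq \mathcal{X}$ from the proof of \cref{cor:preserve-randomness-retro}. For $\Theta$ we take the Bell-projection on $\reg{M}$, which commutes with the classical $\reg{K}$ output. It is a retraction onto the downward-closed set of maps acting trivially on $\reg{M}$, and the argument of \cref{fact:downward-closed} applies verbatim with $\reg{K}$ appended to the output. We also have $\Theta(\mathcal{Z}) = \Lambda^\mathrm{acc}_\mathrm{R}$, since the Bell-projection of $\widehat{\Gamma}^\mathrm{acc}_{k,r}$ is exactly $\Id \otimes \Lambda^\mathrm{acc}_{k,r}$ for each $(k,r)$, just as in the unexpanded case.

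The hypothesis of \cref{lem:close-operators} then requires $\|\mathcal{X} - \Theta(\mathcal{Z})\|_\diamond = \|\Gamma^\mathrm{acc}_\mathrm{R} - \Lambda^\mathrm{acc}_\mathrm{R}\|_\diamond$ to be small, and that is essentially the claim we are trying to prove. Breaking this circularity is the main obstacle. Our plan is to use \cref{thm:ksw}. Choose a Stinespring dilation of $\Gamma^\mathrm{acc}$ whose environment $\reg{E}$ contains a copy of $(k,r)$; tracing out everything except a copy of $\reg{K}$ recovers $\Gamma^\mathrm{acc}_\mathrm{R}$. From $\|\Gamma^\mathrm{acc} - \Lambda^\mathrm{acc}\|_\diamond \leq \varepsilon$, \cref{thm:ksw} gives a dilation of $\Lambda^\mathrm{acc}$ within $\sqrt{\varepsilon}$ in operator norm. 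Copying out $\reg{K}$ from both dilations yields an expansion $\Lambda'$ of $\Lambda^\mathrm{acc}$ with $\|\Gamma^\mathrm{acc}_\mathrm{R} - \Lambda'\|_\diamond \leq 2\sqrt{\varepsilon}$.

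It remains to show that $\Lambda'$ is close to $\Lambda^\mathrm{acc}_\mathrm{R}$. First, $\Lambda'$ acts trivially on $\reg{M}$: it is close to $\Gamma^\mathrm{acc}_\mathrm{R}$, and its $\reg{K}$-marginal is $\Lambda^\mathrm{acc}$, which acts as the identity on $\reg{M}$. Any expansion of $\Id_\reg{M} \otimes \mathcal{C}$ must be of the form $\Id_\reg{M} \otimes \mathcal{C}'$, because the identity channel has a trivial environment. So $\Lambda' = \Id_\reg{M} \otimes \mathcal{C}'$ with $\Tr_\reg{K} \circ\, \mathcal{C}' = \E_{k,r}\Lambda^\mathrm{acc}_{k,r}$. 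Next, we would like to show that $\mathcal{C}'$ is close to $\E_{k,r}[\ketbra{k,r}\otimes\Lambda^\mathrm{acc}_{k,r}]$. Our approach is to apply $\Theta$ to both sides of $\|\Gamma^\mathrm{acc}_\mathrm{R} - \Lambda'\|_\diamond \leq 2\sqrt{\varepsilon}$. Since $\Theta(\Lambda') = \Lambda'$, we get that $\Lambda'$ is close to $\Theta(\Gamma^\mathrm{acc}_\mathrm{R})$. Combining \cref{cor:preserve-randomness-retro} with contractivity, $\Theta(\Gamma^\mathrm{acc}_\mathrm{R})$ is close to $\Theta(\widehat{\Gamma}^\mathrm{acc}_\mathrm{R}) = \Lambda^\mathrm{acc}_\mathrm{R}$. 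The triangle inequality then closes the argument with a bound of $O(\sqrt{\varepsilon})$, which is negligible.
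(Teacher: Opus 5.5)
Your argument is correct and is essentially the paper's proof: the paper packages your KSW step as \cref{lem:classical-extensions}, then runs the same triangle inequality through $\Theta(\Gamma^{\mathrm{acc}}_{\mathrm{R}})$ and \cref{cor:preserve-randomness-retro}. The only local difference is how you get $\Theta(\Lambda')=\Lambda'$: you use the (true) fact that every completely positive extension of $\Id_\reg{M}\otimes\mathcal{C}$ factors as $\Id_\reg{M}\otimes\mathcal{C}'$ (the Choi operator of $\Id_\reg{M}$ is rank one), whereas the paper dephases $\reg{K}$ and applies Radon--Nikodym downward closure to each classical branch; your earlier detour through the direct sum and \cref{lem:close-operators} is unnecessary.
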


We begin by stating an operator-theoretic technical lemma about ``classical extensions'' of quantum channels.
Given a channel $\Channel{N}{A}{B}$, we could describe a classical extension of $\mathcal{N}$ as any channel $\Channel{N'}{A}{BC}$ such that $\Tr_\reg{C} \circ \mathcal{N}' = \mathcal{N}$ and the output of $\mathcal{N}'$ on $\reg{C}$ is always classical.
In \cref{lem:classical-extensions}, we show that given any channels $\mathcal{X}$ and $\mathcal{Y}$ which are close (in diamond distance) and any classical extension of $\mathcal{X}$, there exists some corresponding classical extension of $\mathcal{Y}$ such that the extensions are also close.

\ifllncs
We are also going to work with supermaps which are {\em retractions} in that their output always lies within a particular subspace on which the supermap acts trivially.
A retraction is analogous to a projection: it is idempotent, has image $\mathrm{M}$, and fixes every element of $\mathrm{M}$.

\begin{definition}
  Let $\mathrm{M} \subseteq \CP{A}$ be a subset of completely positive maps and $\Map{\Theta}{A}{A}$ be a supermap. Then $\Theta$ is a {\em retraction onto $\mathrm{M}$} if $\Theta(\mathcal{N}) \in \mathrm{M}$ for all $\mathcal{N} \in \CP{A}$ and $\Theta(\mathcal{M}) = \mathcal{M}$ for all $\mathcal{M} \in \mathrm{M}$.
\end{definition}
\fi

\begin{lemma}\label{lem:classical-extensions}
  Let $\Map{\Theta}{A,B}{A,B}$ be a retraction onto a downward-closed subset $\mathrm{M} \subseteq \CP{A}{B}$ of completely positive maps and let $\Channel{X,Y}{A}{B}$ be completely positive, trace-non{\ifllncs-\fi}increasing maps such that $\Theta(\mathcal{Y}) = \mathcal{Y}$ and $\|\mathcal{X} - \mathcal{Y}\|_\diamond \leq \varepsilon$.
  Additionally, let $\Channel{X'}{A}{BC}$ be a classical extension of $\mathcal{X}$ in that $\mathcal{X}' = \sum_c \mathcal{X}_c \otimes \ketbra{c}_\reg{C}$ where $\sum_c \mathcal{X}_c = \mathcal{X}$.

  Then there exists $\Channel{Y'}{A}{BC}$ such that $(\Theta \otimes \Id_\reg{C})(\mathcal{Y}') = \mathcal{Y}'$ and $\|\mathcal{X}' - \mathcal{Y}'\|_\diamond \leq 2\sqrt{\varepsilon}$.
\end{lemma}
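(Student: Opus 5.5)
Since each $\mathcal{X}_c \leq \mathcal{X}$, I will build $\mathcal{Y}'$ by transporting the Radon--Nikodym decomposition of $\mathcal{X}$ into the pieces $\mathcal{X}_c$ onto a Stinespring dilation of $\mathcal{Y}$. The construction follows the ``bridge'' argument in the proof of \cref{lem:close-operators}.

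First, by \cref{thm:ksw}, I take Stinespring contractions $V^\mathcal{X}, V^\mathcal{Y} : \reg{A} \to \reg{BE}$ with a common ancilla $\reg{E}$. Absorbing the optimal unitary $U$ on $\reg{E}$ into $V^\mathcal{X}$, I can arrange
\[
  \|V^\mathcal{X} - V^\mathcal{Y}\|_\infty \leq \sqrt{\|\mathcal{X}-\mathcal{Y}\|_\diamond} \leq \sqrt{\varepsilon}.
\]
(The infimum over $U$ is attained by compactness.) Next, since $\mathcal{X}_c \leq \mathcal{X}$, \cref{thm:radon-nikodym} gives operators $0 \leq Q_c \leq \Id_\reg{E}$ with
\[
  \mathcal{X}_c(X) = \Tr_\reg{E}\bigl[(\Id_\reg{B}\otimes Q_c^{1/2})V^\mathcal{X} X {V^\mathcal{X}}^\dagger(\Id_\reg{B}\otimes Q_c^{1/2})\bigr].
\]
I will also want $\sum_c Q_c \leq \Id$. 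I would argue this by restricting to the support of $\Tr_\reg{B}[V^\mathcal{X}(\cdot){V^\mathcal{X}}^\dagger]$, or better, choosing $V^\mathcal{X}$ minimal, where the Radon--Nikodym derivative is unique and additive, so that $\sum_c Q_c = \Id$ on that support. This minimality/uniqueness issue is the main technical point to check. A fallback is to define $Q_c$ directly as $\Id\otimes$ (a measurement) on an enlarged dilation $V^{\mathcal{X}'}$ of $\mathcal{X}'$ with ancilla $\reg{EC}$, then apply \cref{thm:ksw} to $V^{\mathcal{X}'}$ versus $V^\mathcal{Y}\otimes\ket{0}$ after padding the ancilla.

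I then set
\[
  \mathcal{Y}_c(X) = \Tr_\reg{E}\bigl[(\Id\otimes Q_c^{1/2})V^\mathcal{Y} X {V^\mathcal{Y}}^\dagger(\Id\otimes Q_c^{1/2})\bigr], \qquad \mathcal{Y}' = \sum_c \mathcal{Y}_c\otimes\ketbra{c}_\reg{C}.
\]
Each $\mathcal{Y}_c$ is completely positive and satisfies $\mathcal{Y}_c \leq \mathcal{Y}$, and $\mathcal{Y}'$ is trace-nonincreasing because $\sum_c Q_c \leq \Id$. Since $\mathcal{Y} = \Theta(\mathcal{Y}) \in \mathrm{M}$ and $\mathrm{M}$ is downward-closed, each $\mathcal{Y}_c \in \mathrm{M}$, so $\Theta(\mathcal{Y}_c) = \mathcal{Y}_c$. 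Because $\Theta\otimes\Id_\reg{C}$ acts blockwise on the classical register (by linearity in $\mathcal{N}$ and the physical realization, with $\reg{C}$ untouched), this gives $(\Theta\otimes\Id_\reg{C})(\mathcal{Y}') = \mathcal{Y}'$.

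For the distance bound, I observe that $\mathcal{X}'$ and $\mathcal{Y}'$ have Stinespring operators
\[
  W^\mathcal{X} = \sum_c \ket{c}_\reg{C}\otimes\ket{c}_{\reg{C}'}\otimes(\Id_\reg{B}\otimes Q_c^{1/2})V^\mathcal{X}
\]
and the analogous $W^\mathcal{Y}$, with common ancilla $\reg{EC'}$. Then $W^\mathcal{X} - W^\mathcal{Y} = D(V^\mathcal{X}-V^\mathcal{Y})$ with $D = \sum_c \ket{cc}\otimes(\Id\otimes Q_c^{1/2})$, and ${D}^\dagger D = \Id\otimes\sum_c Q_c \leq \Id$. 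Hence $\|W^\mathcal{X}-W^\mathcal{Y}\|_\infty \leq \sqrt{\varepsilon}$, and the upper bound of \cref{thm:ksw} gives $\|\mathcal{X}'-\mathcal{Y}'\|_\diamond \leq 2\sqrt{\varepsilon}$. The only delicate step is the contraction property $\sum_c Q_c \leq \Id$, which requires the minimal dilation or the fallback described above.
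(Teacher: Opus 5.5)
Your proof is correct. Its fallback branch is exactly the paper's proof, while your primary branch is a valid but slightly costlier variant.

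\textbf{The paper's route (your fallback).} The paper dilates $\mathcal{X}'$ itself, via $V^{\mathcal{X}}\colon\reg{A}\to\reg{BCE}$, and regards $\reg{CE}$ as the environment of $\mathcal{X}$. It then uses \cref{thm:ksw} to obtain a dilation $V^{\mathcal{Y}}$ of $\mathcal{Y}$ with the same environment $\reg{CE}$ and $\|V^{\mathcal{X}}-V^{\mathcal{Y}}\|_\infty\le\sqrt{\varepsilon}$. It sets $\mathcal{Y}'=\Delta_{\reg{C}}\circ\Tr_{\reg{E}}[V^{\mathcal{Y}}(\cdot){V^{\mathcal{Y}}}^\dagger]$. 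The distance bound then follows from $\Delta_{\reg{C}}\circ\mathcal{X}'=\mathcal{X}'$, data processing, and the upper bound of \cref{thm:ksw}.

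In your language this choice is $Q_c=\ketbra{c}_{\reg{C}}\otimes\Id_{\reg{E}}$. These sum to $\Id$ automatically, so no Radon--Nikodym derivative or uniqueness statement is needed. The resulting $\mathcal{Y}'$ is also a genuine classical extension, with $\Tr_{\reg{C}}\circ\mathcal{Y}'=\mathcal{Y}$. Your isometry $D=\sum_c\ket{cc}\otimes(\Id\otimes Q_c^{1/2})$ plays the role of the Stinespring dilation of that dephasing step, so the two distance arguments are equivalent.

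\textbf{Your primary route.} Taking Radon--Nikodym derivatives on a dilation of $\mathcal{X}$ and transporting them to $V^{\mathcal{Y}}$ also works, and your minimality idea goes through as follows.
\begin{enumerate}
  \item Let $P$ be the projector onto $E_0=\mathrm{span}\{v_{b,a}\}$, where $v_{b,a}=(\bra{b}\otimes\Id)V^{\mathcal{X}}\ket{a}$.
  \item Replacing $Q_c$ by $PQ_cP$ leaves $\mathcal{X}_c$ unchanged.
  \item The map $Q\mapsto\Tr_{\reg{E}}[(\Id\otimes Q)V^{\mathcal{X}}(\cdot){V^{\mathcal{X}}}^\dagger]$ is injective on operators supported in $E_0$, because its matrix elements are $\langle v_{b',a'}|Q|v_{b,a}\rangle$.
  \item Hence $\sum_c PQ_cP=P\le\Id$.
\end{enumerate}
This costs an extra uniqueness argument that the paper's \cref{thm:radon-nikodym} (an existence statement only) does not supply. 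It also gives only $\Tr_{\reg{C}}\circ\mathcal{Y}'\le\mathcal{Y}$, unless you add $\Id-P$ to one of the $Q_c$. That is harmless, since the lemma does not require equality.

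The $\Theta$-invariance step, via $\mathcal{Y}_c\le\mathcal{Y}\in\mathrm{M}$ and downward closure, is identical to the paper's.
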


Before we prove \cref{lem:classical-extensions}, we give motivation by showing that it implies \cref{lem:qca-implies-retrospective}.
Given our context, the retrospective effective attack maps can be understood as classical extensions of the original QCA attack maps.
In this way, we can use \cref{lem:classical-extensions} to show that there exists some corresponding extension of $\Lambda$ which must be close to $\Gamma_{\mathrm{R}}$.
The rest of the proof is showing that extension is close to the constructed $\Lambda_{\mathrm{R}}$.

\ifllncs
The supermap $\Theta$ will be the ``Bell-projection'' described in \cref{fact:bell-projection} which retracts onto the subspace of maps which act trivially on the first register.
This supermap is implicit in the QCA definition as part of the transformation from the real channel to the simulated channel.
We state and prove two facts to align this supermap and subspace with the requirements of \cref{lem:classical-extensions}, give a proof of \cref{lem:qca-implies-retrospective} assuming \cref{lem:classical-extensions}, and then prove \cref{lem:classical-extensions}.

\begin{fact} \label{fact:bell-projection}
  The ``Bell-projection'' supermap $\Map{\Theta}{AB}{AC}$ given by
  \[
    \Theta(\mathcal{N})(\rho) = \Tr_\reg{ZA'}[(\Phi^+_\reg{ZA'} \otimes \Id_\reg{AC})(\mathcal{N}_{\reg{A'B}\to\reg{A'C}} \otimes \Id_\reg{ZA})(\Phi^+_\reg{ZA'} \otimes \rho_\reg{AB})].
  \]
  is a retraction onto the subspace $\mathrm{M} = \{ \Id_\reg{A} \otimes \mathcal{B} :  \mathcal{B} \in \LinearOps{B}{C} \} \subset \LinearOps{AB}{AC}$.
\end{fact}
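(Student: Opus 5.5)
This is a restatement of the Bell-projection fact, so I will verify the two defining properties of a retraction directly from the formula for $\Theta$. First, $\Theta(\mathcal{N}) \in \mathrm{M}$ for every $\mathcal{N}$. Second, $\Theta$ fixes every element of $\mathrm{M}$.

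For the first property, I will show that $\Theta(\mathcal{N})$ has the form $\Id_\reg{A} \otimes \mathcal{B}$. The register $\reg{A}$ of the input $\rho_\reg{AB}$ is never acted on in the definition. The map $\mathcal{N}$ touches only $\reg{A'B}$, and the projection $\Phi^+_\reg{ZA'}$ together with the trace $\Tr_\reg{ZA'}$ involve only $\reg{ZA'}$. So I define
\[
  \mathcal{B}(\sigma_\reg{B}) = \Tr_\reg{ZA'}\left[(\Phi^+_\reg{ZA'} \otimes \Id_\reg{C})(\mathcal{N} \otimes \Id_\reg{Z})(\Phi^+_\reg{ZA'} \otimes \sigma_\reg{B})\right].
\]
Then $\Theta(\mathcal{N}) = \Id_\reg{A} \otimes \mathcal{B}$ follows from linearity, by checking the identity on product operators $\rho_\reg{A} \otimes \sigma_\reg{B}$. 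If $\mathcal{N}$ is completely positive, so is $\mathcal{B}$, since it is a composition of completely positive operations: appending a state, applying $\mathcal{N}$, a projection sandwich, and a partial trace. This keeps the image inside the cone $\CP{}{}$ that the downward-closed version in \cref{fact:downward-closed} requires.

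For the second property, I take $\mathcal{N} = \Id_\reg{A'} \otimes \mathcal{B}$. Then $\mathcal{N}$ leaves the state $\Phi^+_\reg{ZA'}$ untouched, so the projection onto $\Phi^+_\reg{ZA'}$ acts on $\Phi^+$ itself and passes with weight $1$. The partial trace over $\reg{ZA'}$ then removes $\Phi^+$ with unit trace, leaving $(\Id_\reg{A} \otimes \mathcal{B})(\rho_\reg{AB})$. This is the same three-line computation as in the earlier proof of \cref{fact:bell-projection}.

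The only point needing care, and so the closest thing to an obstacle, is bookkeeping: the primed register $\reg{A'}$ inside $\Theta$ must be kept separate from the untouched outer register $\reg{A}$. It also follows that $\Theta$ is idempotent: its image lies in $\mathrm{M}$ and it fixes $\mathrm{M}$, so $\Theta^2 = \Theta$. This is the form in which \cref{lem:close-operators} and \cref{lem:classical-extensions} use the fact.
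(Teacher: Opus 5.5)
Your proposal is correct and follows the same route as the paper. You show that $\Theta(\mathcal{N})$ acts trivially on $\reg{A}$, and you make the induced map $\mathcal{B}$ on $\reg{B}$ explicit where the paper just says ``by construction''. You then verify $\Theta(\Id_\reg{A}\otimes\mathcal{B}) = \Id_\reg{A}\otimes\mathcal{B}$ by the same direct computation, using that $\Phi^+_\reg{ZA'}$ passes through untouched with $\Tr[\Phi^+\Phi^+]=1$. Your remark that $\mathcal{B}$ is completely positive whenever $\mathcal{N}$ is also reconciles the statement, which is phrased over $\LinearOps{B}{C}$, with the retraction definition, which is phrased over completely positive maps.
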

\begin{proof}
  Clearly $\Theta(\mathcal{N})$ acts as the identity on $\reg{A}$ by construction for any $\mathcal{N} \in \LinearOps{AB}{AC}$.
  Additionally $\Theta$ acts trivially upon any $\mathcal{N} = \Id_\reg{A} \otimes \mathcal{B}$ for $\mathcal{B} \in \LinearOps{B}{C}$.
  \begin{align*}
    \Theta(\Id_\reg{A} \otimes \mathcal{B})(\rho)
    &= \Tr_\reg{ZA'}[(\Phi^+_\reg{ZA'} \otimes \Id_\reg{AC})(\Id_\reg{ZAA'} \otimes \mathcal{B}_{\reg{B}\to\reg{C}})(\Phi^+_\reg{ZA'} \otimes \rho_\reg{AB})] \\
    &= \Tr_\reg{ZA'}[(\Id_\reg{ZAA'} \otimes \mathcal{B}_{\reg{B}\to\reg{C}})(\Phi^+_\reg{ZA'} \otimes \rho_\reg{AB})] \\
    &= (\Id_\reg{A} \otimes \mathcal{B}_{\reg{B}\to\reg{C}})(\rho_\reg{AB}) \qedhere
  \end{align*}
\end{proof}

\begin{fact} \label{fact:downward-closed}
  {\unless\ifllncs The subspace \fi}$\mathrm{M} = \{ \Id_\reg{A} \otimes \mathcal{B} : \mathcal{B} \in \CP{B}{C} \} \subset \CP{AB}{AC}$ is downward-closed.
\end{fact}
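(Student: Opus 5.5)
The plan is to follow exactly the argument already given for the non-llncs version of \cref{fact:downward-closed}: realize an arbitrary element of $\mathrm{M}$ through a Stinespring dilation that is visibly trivial on $\reg{A}$, and then use the Radon--Nikodym theorem (\cref{thm:radon-nikodym}) to show that every completely positive map dominated by it inherits this triviality.

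Concretely, fix $\mathcal{X} = \Id_\reg{A} \otimes \mathcal{B} \in \mathrm{M}$ with $\mathcal{B} \in \CP{B}{C}$, and let $\mathcal{Z}$ be completely positive with $\mathcal{Z} \leq \mathcal{X}$. Choose a Stinespring operator $V^{\mathcal{B}} : \reg{B} \to \reg{CE}$ for $\mathcal{B}$; this exists for any completely positive map, not only channels. Then $\Id_\reg{A} \otimes V^{\mathcal{B}}$, with the same environment $\reg{E}$, is a Stinespring operator for $\mathcal{X}$.

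Applying \cref{thm:radon-nikodym} to the pair $\mathcal{Z} \leq \mathcal{X}$ gives some $0 \leq Q \leq \Id_\reg{E}$ with
\[
  \mathcal{Z}(X) = \Tr_\reg{E}\bigl[(\Id_\reg{AC} \otimes Q^{\frac{1}{2}})(\Id_\reg{A} \otimes V^{\mathcal{B}}) X (\Id_\reg{A} \otimes V^{\mathcal{B}})^\dagger (\Id_\reg{AC} \otimes Q^{\frac{1}{2}})\bigr].
\]
Set $W = (\Id_\reg{C} \otimes Q^{\frac{1}{2}}) V^{\mathcal{B}}$, an operator $\reg{B} \to \reg{CE}$. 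Then the Stinespring operator of $\mathcal{Z}$ factors as $\Id_\reg{A} \otimes W$, so $\mathcal{Z} = \Id_\reg{A} \otimes \mathcal{B}'$ with $\mathcal{B}'(Y) = \Tr_\reg{E}[W Y W^\dagger]$. This $\mathcal{B}'$ is completely positive by construction, so $\mathcal{Z} \in \mathrm{M}$, which proves downward closure.

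The only point needing care is that this factorization is a genuine operator identity, not something read off the notation. Both the partial trace over $\reg{E}$ and the conjugation by $Q^{\frac{1}{2}}$ act only on the $\reg{CE}$ factor, so they commute with the $\Id_\reg{A}$ tensor factor, and the claimed identity holds for every input $X$ on $\reg{AB}$. Beyond this check the argument is routine.
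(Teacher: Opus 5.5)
Your proposal is correct and follows the paper's proof: you dilate $\mathcal{B}$, tensor with $\Id_\reg{A}$ to get a dilation of $\mathcal{X}$, and apply \cref{thm:radon-nikodym} to obtain $Q$ on $\reg{E}$, which forces $\mathcal{Z}$ to act trivially on $\reg{A}$. Your explicit factorization through $W = (\Id_\reg{C} \otimes Q^{\frac{1}{2}})V^{\mathcal{B}}$ writes out the final step that the paper states only in words.
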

\begin{proof}
  Let $\mathcal{X} = \Id_\reg{A} \otimes \mathcal{B} \in \mathrm{M}$ be a map in the subspace where $\mathcal{B} \in \LinearOps{B}{C}$ and $\mathcal{Z}$ be some map such that $\mathcal{Z} \leq \mathcal{X}$.
  Additionally, let $(V^\mathcal{B}, \mathrm{E})$ be a Stinespring dilation of $\mathcal{B}$ and note that $(\Id \otimes V^{\mathcal{B}}, \reg{E})$ is a Stinespring dilation of $\mathcal{X}$.
  By \cref{thm:radon-nikodym}, there exists $0 \leq Q \leq \Id_\reg{E}$ such that
  \[
    \mathcal{Z}(X) = \Tr_\reg{E}[(\Id_\reg{AC} \otimes Q^{\frac{1}{2}})(\Id_\reg{A} \otimes V^{\mathcal{B}}) X (\Id_\reg{A} \otimes V^{\mathcal{B}})^\dagger(\Id_\reg{AC} \otimes Q^{\frac{1}{2}})]
  \]
  which acts trivially on the $\reg{A}$ register. Therefore $\mathcal{Z} \in \mathrm{M}$ and $\mathrm{M}$ is downward-closed.
\end{proof}
\fi

\begin{proof}[Proof of \cref{lem:qca-implies-retrospective}]
  Fix an attack $\mathcal{A}$ and let $\Gamma$ and $\Lambda$ be the corresponding real and simulated effective attack maps given by the scheme's QCA security.
  Observe that $\Dec$ performs an explicit accept-or-reject measurement so its output is a classical mixture of the two orthogonal cases.
  Thus the real map can be decomposed as $\Gamma = \Gamma^{\mathrm{acc}} + \Gamma^{\mathrm{rej}}$ where $\Gamma^{\mathrm{acc}}(\rho) = (\Id - \Pi^\bot)\Gamma(\rho)(\Id - \Pi^\bot)$ and $\Gamma^{\mathrm{rej}}(\rho) = \Pi^\bot\Gamma(\rho)\Pi^\bot$.
  By data processing,
  \begin{equation} \label{eq:acc-rej-both-negl}
    \Big\|\Gamma^{\mathrm{acc}} - \Id_\reg{M} \otimes \E_{k,r}[\Lambda^{\mathrm{acc}}_{k,r}]\Big\|_\diamond \leq \negl(n)
    \ifllncs \eqand{4mm} \else \eqand{1cm} \fi
    \Big\|\Gamma^{\mathrm{rej}} - \ketbra{\bot}_\reg{M} \otimes \Lambda^{\mathrm{rej}}\Big\|_\diamond \leq \negl(n).
  \end{equation}
  In the same way, we can define the real retrospective attack map $\Gamma_{\mathrm{R}} = \Gamma_{\mathrm{R}}^{\mathrm{acc}} + \Gamma_{\mathrm{R}}^{\mathrm{rej}}$ where $\Gamma^{\mathrm{acc}}_{\mathrm{R}}(\rho) = {(\Id - \Pi^\bot)}\Gamma_{\mathrm{R}}(\rho)(\Id - \Pi^\bot)$ and $\Gamma^{\mathrm{rej}}_{\mathrm{R}}(\rho) = \Pi^\bot\Gamma_{\mathrm{R}}(\rho)\Pi^\bot$, and the simulated retrospective attack map $\Lambda_{\mathrm{R}}$ where $\Lambda^{\mathrm{rej}}_{\mathrm{R}} = \Lambda^{\mathrm{rej}}$.
  By the triangle inequality,
  \[
    \| \Gamma_\mathrm{R} - \Lambda_{\mathrm{R}} \|_\diamond
    \leq \| \Gamma_{\mathrm{R}}^{\mathrm{acc}} - \Id_\reg{M} \otimes \E_{k,r}[\Lambda_{k,r}^{\mathrm{acc}} \otimes \ketbra{k,r}_\reg{K}] \|_\diamond + \| \Gamma_{\mathrm{R}}^{\mathrm{rej}} - \ketbra{\bot}_\reg{MK} \otimes \Lambda_{\mathrm{R}}^{\mathrm{rej}} \|_\diamond
  \]
  Because $\Omega_{\mathrm{R}}^{\mathrm{rej}} = (\Id_\reg{M} \otimes \ketbra{\bot}_\reg{K}) \circ \Omega^{\mathrm{rej}}$ for $\Omega \in \{\Gamma, \Lambda\}$ --- i.e., the rejection case just composes with a fixed state preparation channel --- we can easily bound
  \ifllncs
    \begin{align*}
      \| \Gamma_{\mathrm{R}}^{\mathrm{rej}} - \ketbra{\bot}_{\reg{MK}}  \otimes \Lambda_{\mathrm{R}}^{\mathrm{rej}} \|_\diamond
      &\leq \| (\Id_\reg{MR} \otimes \ketbra{\bot}_\reg{K}) \circ (\Gamma^{\mathrm{rej}} - \ketbra{\bot}_{\mathsf{M}} \otimes \Lambda^{\mathrm{rej}}) \|_\diamond \\
      &\leq \| \Gamma^{\mathrm{rej}} - \ketbra{\bot}_{\reg{M}} \otimes \Lambda^{\mathrm{rej}} \|_\diamond \\
      &\leq \negl(n).
    \end{align*}
  \else
  \[
    \| \Gamma_{\mathrm{R}}^{\mathrm{rej}} - \ketbra{\bot}_{\reg{MK}}  \otimes \Lambda_{\mathrm{R}}^{\mathrm{rej}} \|_\diamond
    \leq \| (\Id_\reg{MR} \otimes \ketbra{\bot}_\reg{K}) \circ (\Gamma^{\mathrm{rej}} - \ketbra{\bot}_{\mathsf{M}} \otimes \Lambda^{\mathrm{rej}}) \|_\diamond
    \leq \| \Gamma^{\mathrm{rej}} - \ketbra{\bot}_{\reg{M}} \otimes \Lambda^{\mathrm{rej}} \|_\diamond
    \leq \negl(n).
  \]
  \fi

  The difficulty is in bounding the acceptance term.
  The main tool for that will be invoking \cref{lem:classical-extensions} where $\mathcal{X} = \Gamma^{\mathrm{acc}}$, $\mathcal{X}' = \Gamma_\mathrm{R}^{\mathrm{acc}}$, $\mathcal{Y} = \Id_\reg{M} \otimes \E_{k,r}[\Lambda_{k,r}^{\mathrm{acc}}]$, and $\Theta$ is the Bell-projection supermap.
  \cref{eq:acc-rej-both-negl} has established that $\| \mathcal{X} - \mathcal{Y} \|_\diamond \leq \negl(n)$, Facts \ref{fact:bell-projection} and \ref{fact:downward-closed} prove that $\Theta$ is a retraction onto the downward-closed subset $\mathrm{M} = \{ \Id_\reg{M} \otimes \mathcal{B} : \mathcal{B} \in \LinearOps{R} \}$, and clearly $\mathcal{Y} \in \mathrm{M}$.
  Thus the lemma yields a $\mathcal{Y}'$ such that $\| \Gamma_{\mathrm{R}}^{\mathrm{acc}} - \mathcal{Y}'\|_\diamond \leq \negl(n)$ and $\Theta(\mathcal{Y}') = \mathcal{Y}'$ which can be used as follows.
  \begin{align*}
    \| \Gamma_{\mathrm{R}}^{\mathrm{acc}} - \Lambda_{\mathrm{R}}^{\mathrm{acc}} \|_\diamond
    &\leq \| \Gamma_{\mathrm{R}}^{\mathrm{acc}} - \mathcal{Y}' + \mathcal{Y}' - \Theta(\Gamma_{\mathrm{R}}^{\mathrm{acc}}) + \Theta(\Gamma_{\mathrm{R}}^{\mathrm{acc}}) - \Lambda_{\mathrm{R}}^{\mathrm{acc}} \|_\diamond \\
    &\leq \| \Gamma_{\mathrm{R}}^{\mathrm{acc}} - \mathcal{Y}' \|_\diamond + \| \mathcal{Y}' - \Theta(\Gamma_{\mathrm{R}}^{\mathrm{acc}}) \|_\diamond + \| \Theta(\Gamma_{\mathrm{R}}^{\mathrm{acc}}) - \Lambda_{\mathrm{R}}^{\mathrm{acc}} \|_\diamond \\
    &= \| \Gamma_{\mathrm{R}}^{\mathrm{acc}} - \mathcal{Y}' \|_\diamond + \| \Theta(\mathcal{Y}') - \Theta(\Gamma_{\mathrm{R}}^{\mathrm{acc}}) \|_\diamond + \| \Theta(\Gamma_{\mathrm{R}}^{\mathrm{acc}}) - \Theta(\widehat{\Gamma}_{\mathrm{R}}^{\mathrm{acc}}) \|_\diamond \\
    &\leq 2\| \Gamma_{\mathrm{R}}^{\mathrm{acc}} - \mathcal{Y}' \|_\diamond + \| \Gamma_{\mathrm{R}}^{\mathrm{acc}} - \widehat{\Gamma}_{\mathrm{R}}^{\mathrm{acc}} \|_\diamond \\
    &\leq \negl(n)
  \end{align*}
  where the equality used the identity $\Lambda_{\mathrm{R}}^{\mathrm{acc}} = \Theta(\widehat{\Gamma}_{\mathrm{R}}^{\mathrm{acc}})$ and the final inequality follows from \cref{cor:preserve-randomness-retro}.
\end{proof}

\begin{proof}[Proof of \cref{lem:classical-extensions}]
  Let $(V^\mathcal{X},\reg{E})$ be a Stinespring dilation of $\mathcal{X}'$ so that $\mathcal{X}'(\rho) = \Tr_\reg{E}[V^\mathcal{X}\rho(V^\mathcal{X})^\dagger]$.
  Note that $\mathcal{X}(\rho) = \Tr_\reg{CE}[V^\mathcal{X}\rho(V^\mathcal{X})^\dagger]$ because $\Tr_\reg{C} \circ \mathcal{X}' = \mathcal{X}$, and therefore $(V^\mathcal{X},\reg{CE})$ is a valid Stinespring dilation of $\mathcal{X}$ as well.
  By \cref{thm:ksw}, there exists an operator $V^\mathcal{Y}$ such that $(V^\mathcal{Y},\reg{CE})$ is a Stinespring dilation of $\mathcal{Y}$ (expanding $\reg{E}$ if necessary) and
  \[
    \| V^\mathcal{X} - V^\mathcal{Y} \|_\infty \leq \sqrt{\|\mathcal{X} - \mathcal{Y}\|_\diamond} \leq \sqrt{\varepsilon}
  \]
  Let $\mathcal{Y}'_0(\rho) = \Tr_\reg{E}[V^\mathcal{Y}\rho(V^\mathcal{Y})^\dagger]$ be the natural extension of $\mathcal{Y}$ to $\reg{C}$ suggested by its dilation $V^\mathcal{Y}$ and $\mathcal{Y}' = \Delta_\reg{C} \circ \mathcal{Y}'_0$ where we denote by $\Delta_\reg{C}(\rho) = \sum_c \ketbra{c}\rho\ketbra{c}$ the dephasing channel on the $\reg{C}$ register.
  Because $\mathcal{X}'$ is a classical extension, $\Delta_\reg{C} \circ \mathcal{X}' = \mathcal{X}'$ and thus
  \[
    \| \mathcal{X}' - \mathcal{Y}' \|_\diamond
    = \| \Delta_\reg{C} \circ (\mathcal{X}' - \mathcal{Y}'_0) \|_\diamond
    \leq \| \mathcal{X}' - \mathcal{Y}'_0 \|_\diamond
    \leq 2\| V^\mathcal{X} - V^\mathcal{Y} \|_\infty
    \leq 2\sqrt{\varepsilon}
  \]
  where we've invoked the upper bound from \cref{thm:ksw}.

  It remains to be shown that $(\Theta \otimes \Id_\reg{C})(\mathcal{Y}') = \mathcal{Y}'$.
  If we define $\mathcal{Y}_c(\rho) = (\Id_\reg{B} \otimes \bra{c}_\reg{C})\mathcal{Y}'(\rho)(\Id_\reg{B} \otimes \ket{c}_\reg{C})$, then we can decompose $\mathcal{Y}'(\rho) = \sum_c \mathcal{Y}_c(\rho) \otimes \ketbra{c}_\reg{C}$.
  Because $\Tr_\reg{C} \circ \mathcal{Y}' = \mathcal{Y}$, we have that $\mathcal{Y} = \sum_c \mathcal{Y}_c$ and therefore $\mathcal{Y}_c \leq \mathcal{Y}$ for all $c$.
  $\mathrm{M}$ is downward-closed and $\mathcal{Y} \in \mathrm{M}$ so $\mathcal{Y}_c \in \mathrm{M}$ for all $c$.
  Thus we arrive at
  \[
    (\Theta \otimes \Id_\reg{C})(\mathcal{Y}')(\rho) = \sum_c \Theta(\mathcal{Y}_c)(\rho) \otimes \ketbra{c} = \sum_c \mathcal{Y}_c(\rho) \otimes \ketbra{c} = \mathcal{Y}'(\rho)
  \]
\end{proof}

\subsection{Deferral Lemma}

\begin{proof}[Proof of \cref{lem:exps-close}]

Fix $\mathcal{A}$.
Then we prove our result through a hybrid argument starting with $\Hyb_1$.

\begin{center}
\fbox{%
  \begin{minipage}{0.9\linewidth}
    $\Hyb_1(\rho)$
    \begin{enumerate}
      \item Sample $k \gets \Gen(1^n)$ and $r \gets p_k$.
      \item Prepare $\mathcal{A} \circ \Enc_{k;r}(\rho_\reg{\changed{M}R})$
      \item Measure $\changed{\reg{C}}$ with \changed{$\{\Pi_k, \Id - \Pi_k\}$} and post-select onto \changed{$\Pi_k$}.
      \item \changed{Decrypt $\reg{C}$ with $\Dec_k$ yielding register $\reg{M'}$.}
      \item \changed{Encrypt $\reg{M'}$ with $\Enc_{k;r}$ yielding register $\reg{C'}$.}
      \item Output $\reg{C'R}$ and $k$.
    \end{enumerate}
  \end{minipage}
}
\end{center}

\begin{claim} \label{claim:retro-hyb-0-vs-1}
  $\| \Exp_\mathsf{defer}^\mathcal{A} - \Hyb_1\|_\diamond \leq \negl(n)$
\end{claim}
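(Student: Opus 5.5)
The plan is to write both $\Exp_\mathsf{defer}^\mathcal{A}$ and $\Hyb_1$ as one fixed post-processing channel applied to, respectively, the acceptance branches of the simulated and real retrospective attack maps from \cref{def:retrospective-qca}. The claim should then follow from \cref{lem:qca-implies-retrospective} and data processing.

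\emph{Step 1: define the post-processing.} Let $\mathcal{P}$ be the channel acting on registers $\reg{MKR}$ as follows. It measures $\reg{K}$ in the computational basis. On outcome $(k,r)$ it applies $\Enc_{k;r}$ to $\reg{M}$, producing $\reg{C'}$, and outputs $\reg{C'R}$ together with $k$, discarding $r$. (On outcome $\bot$ it outputs a fixed dummy state; this case will not matter.) Since $\reg{K}$ is classical in every map we feed to $\mathcal{P}$, the measurement does not disturb anything. Also, $\mathcal{P}$ is a genuine channel independent of $(k,r)$, so data processing applies to it.

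\emph{Step 2: identify $\Hyb_1$ with $\mathcal{P}\circ\Gamma_\mathrm{R}^{\mathrm{acc}}$.} Here $\Gamma_\mathrm{R}^{\mathrm{acc}}(\rho) = (\Id-\Pi^\bot)\Gamma_\mathrm{R}(\rho)(\Id-\Pi^\bot)$, and $\Gamma_\mathrm{R}$ averages over $(k,r)$ the map that encrypts with $\Enc_{k;r}$, applies $\mathcal{A}$, decrypts with $\Dec_k$, and writes $\ketbra{k,r}$ on $\reg{K}$ whenever the result is not $\bot$. By \cref{fact:dec-before-after-proj}, projecting $\Dec_k$'s output off $\bot$ is the same as first applying $\Pi_k$ to $\reg{C}$ and then decrypting. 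This is exactly Steps 1--4 of $\Hyb_1$ as a subnormalized map, with $(k,r)$ recorded, and $\mathcal{P}$ then performs Steps 5--6.

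\emph{Step 3: identify $\Exp_\mathsf{defer}^\mathcal{A}$ with $\mathcal{P}\circ\Lambda_\mathrm{R}^{\mathrm{acc}}$.} Here $\Lambda_\mathrm{R}^{\mathrm{acc}} = \Id_\reg{M}\otimes\E_{k,r}[\ketbra{k,r}_\reg{K}\otimes\Lambda^{\mathrm{acc}}_{k,r}]$. Unfolding the definition in \cref{def:qca}, $\Lambda^{\mathrm{acc}}_{k,r}$ prepares $\mathcal{A}\circ\Enc_{k;r}(\Phi^+_\reg{LM}\otimes X_\reg{R})$, applies $\widetilde{\Pi}_{k,r}$, and traces out $\reg{LMT}$ (equivalently the ciphertext register). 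This is precisely Steps 2--4 of $\Exp_\mathsf{defer}$, which leaves the input message register $\reg{M'}$ untouched and discards the Bell-projected ciphertext. Applying $\mathcal{P}$ re-encrypts $\reg{M'}$ under $\Enc_{k;r}$, which is Step 5, and outputs $k$.

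\emph{Step 4: conclude.} By \cref{lem:qca-implies-retrospective}, $\|\Gamma_\mathrm{R}-\Lambda_\mathrm{R}\|_\diamond\le\negl(n)$. Conjugating by $\Id-\Pi^\bot$ is a fixed completely positive trace-nonincreasing post-processing, so it gives $\|\Gamma_\mathrm{R}^{\mathrm{acc}}-\Lambda_\mathrm{R}^{\mathrm{acc}}\|_\diamond\le\negl(n)$. Composing with $\mathcal{P}$ and using contractivity of the diamond norm then yields $\|\Hyb_1-\Exp_\mathsf{defer}^\mathcal{A}\|_\diamond\le\negl(n)$.

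The main obstacle I expect is the bookkeeping in Step 3. I must check that the registers line up: the role of $\reg{M'}$ in $\Exp_\mathsf{defer}$ is played by the identity factor $\Id_\reg{M}$ in $\Lambda_\mathrm{R}$, and the post-selection on $\widetilde{\Pi}_{k,r}$ followed by discarding $\reg{LC}$ must equal the partial trace defining $\Lambda^{\mathrm{acc}}_{k,r}$. I must also check that the randomness $r$ sampled in the experiment coincides with the $r$ revealed on $\reg{K}$. The hard analytic work, namely randomness preservation and retrospective security, is already packaged in \cref{lem:qca-implies-retrospective}, so no further estimates should be needed.
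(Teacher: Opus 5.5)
Your proposal is correct and matches the paper's proof: you write both experiments as a fixed post-processing (read $(k,r)$ off $\reg{K}$, re-encrypt $\reg{M}$ with $\Enc_{k;r}$, output $k$) applied to the acceptance branches of $\Gamma_{\mathrm{R}}$ and $\Lambda_{\mathrm{R}}$. You then conclude from \cref{lem:qca-implies-retrospective} and data processing. The only difference is bookkeeping: the paper folds the projection onto $\Id-\Pi^\bot$ into its post-processing channel $\mathcal{N}$, whereas you first restrict to the acceptance branches via contractivity under that trace-nonincreasing map, which comes to the same thing.
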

\begin{proof}
  Observe that $\Hyb_1$ and $\Exp_\mathsf{defer}^\mathcal{A}$ resemble the real and simulated effective attack maps from \cref{def:retrospective-qca}.
  Our goal then is to show that the retrospective security of the SKQES implies our bound.
  Let $\Gamma_{\mathrm{R}}^{\mathrm{acc}}$ and $\Lambda_{\mathrm{R}}^{\mathrm{acc}}$ be the acceptance branches of the real and simulated channels from \cref{def:retrospective-qca}, respectively.
  By \cref{lem:qca-implies-retrospective}, $\| \Gamma_{\mathrm{R}}^{\mathrm{acc}} - \Lambda_{\mathrm{R}}^{\mathrm{acc}} \|_\diamond \leq \negl(n)$.
  Then we define a post-processing channel $\mathcal{N}$ which acts on the output of $\Gamma_{\mathrm{R}}^{\mathrm{acc}}$ or $\Lambda_{\mathrm{R}}^{\mathrm{acc}}$ in the following way:
  \begin{enumerate}
    \item Measure $\reg{M}$ using $\{\Pi^{\bot}, \Id - \Pi^{\bot}\}$. Post-select onto $\Id - \Pi^{\bot}$.
    \item Measure $\reg{K}$ in the standard basis to get $(k, r)$.
    \item Re-encrypt the $\reg{M}$ register with $\Enc_{k;r}$ yielding $\reg{C'}$.
    \item Output $\reg{C'R}$ and $k$.
  \end{enumerate}
  By \cref{fact:dec-before-after-proj} (and an analogous fact for the $\widetilde{\Pi}_{k,r}$ projector and the simulated decryption channel), we have that $\Exp_\mathsf{defer}^\mathcal{A} = \mathcal{N} \circ \Lambda_{\mathrm{R}}^{\mathrm{acc}}$ and $\Hyb_1 = \mathcal{N} \circ \Gamma_{\mathrm{R}}^{\mathrm{acc}}$.
  Thus we have our desired bound,
  \[
    \| \Exp_\mathsf{defer}^\mathcal{A} - \Hyb_1\|_\diamond
    = \| \mathcal{N} \circ \Gamma_{\mathrm{R}}^{\mathrm{acc}} - \mathcal{N} \circ \Lambda_{\mathrm{R}}^{\mathrm{acc}} \|_\diamond
    \leq \| \Gamma_{\mathrm{R}}^{\mathrm{acc}} - \Lambda_{\mathrm{R}}^{\mathrm{acc}} \|_\diamond \leq \negl(n) \qedhere
  \]
\end{proof}

\begin{center}
\fbox{%
  \begin{minipage}{0.9\linewidth}
    $\Hyb_2(\rho)$
    \begin{enumerate}
      \item Sample $k \gets \Gen(1^n)$ and $r \gets p_k$.
      \item Prepare $\mathcal{A} \circ \Enc_{k;r}(\rho_\reg{MR})$
      \item Measure $\reg{C}$ with $\{\Pi_{\changed{k,r}}, \Id - \Pi_{\changed{k,r}}\}$ and post-select onto $\Pi_{\changed{k,r}}$. \label{item:retro-hyb-2-post-select}
      \item Decrypt $\reg{C}$ with $\Dec_k$ yielding register $\reg{M'}$. \label{item:retro-hyb-2-decrypt}
      \item Encrypt $\reg{M'}$ with $\Enc_{k;r}$ yielding register $\reg{C'}$. \label{item:retro-hyb-2-encrypt}
      \item Output $\reg{C'R}$ and $k$.
    \end{enumerate}
  \end{minipage}
}
\end{center}

\begin{claim} \label{claim:retro-hyb-1-vs-2}
  $\| \Hyb_1 - \Hyb_2 \|_\diamond \leq \negl(n)$
\end{claim}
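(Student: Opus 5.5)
We will express both hybrids as a common post-processing channel applied to two acceptance branches, and then invoke \cref{lem:preserve-randomness}. More precisely, we use its retrospective version, \cref{cor:preserve-randomness-retro}. The two hybrids differ only in Step~\ref{item:retro-hyb-2-post-select}, where $\Hyb_1$ post-selects onto $\Pi_k$ and $\Hyb_2$ post-selects onto $\Pi_{k,r}$. After that step, both decrypt with $\Dec_k$ and re-encrypt with $\Enc_{k;r}$, which uses the randomness $r$ sampled in Step 1.

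First we rewrite each hybrid in terms of the retrospective real attack maps. By \cref{fact:dec-before-after-proj}, post-selecting onto $\Pi_k$ and then applying $\Dec_k$ equals applying $\Dec_k$ and keeping the non-$\bot$ branch. So $\Hyb_1 = \mathcal{N}\circ\Gamma_{\mathrm{R}}^{\mathrm{acc}}$, where $\mathcal{N}$ is the post-processing channel from the proof of \cref{claim:retro-hyb-0-vs-1}. That channel post-selects off $\bot$, reads $(k,r)$ from $\reg{K}$, re-encrypts $\reg{M}$ with $\Enc_{k;r}$, and outputs $\reg{C'R}$ and $k$.

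For $\Hyb_2$ we need the analogous identity with $\Pi_{k,r}$. Since $\Pi_{k,r}\leq\Pi_k$ and the two projectors commute, we have $\Pi_k\Pi_{k,r}=\Pi_{k,r}$. Hence, on the image of $\Pi_{k,r}$, $\Dec_k$ acts as $\Tr_\reg{T}[U_k^\dagger(\cdot)U_k]$, which is exactly how the accept branch of $\Dec_{k;r}$ acts. Therefore post-selecting onto $\Pi_{k,r}$ and then applying $\Dec_k$ equals the acceptance branch of $\Dec_{k;r}$. This gives $\Hyb_2=\mathcal{N}\circ\widehat{\Gamma}_{\mathrm{R}}^{\mathrm{acc}}$, with $\widehat{\Gamma}_{\mathrm{R}}$ as defined in \cref{cor:preserve-randomness-retro}.

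It is essential that the revealed $(k,r)$ on $\reg{K}$ is the original encryption randomness, since $\mathcal{N}$ re-encrypts using exactly this pair. This holds by the definition of $\Reveal_{k,r}$ composed with $\Enc_{k;r}$ inside $\E_{k,r}$.

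With both identities in hand, the data processing inequality for the diamond norm gives
\[
\|\Hyb_1-\Hyb_2\|_\diamond=\|\mathcal{N}\circ(\Gamma_{\mathrm{R}}^{\mathrm{acc}}-\widehat{\Gamma}_{\mathrm{R}}^{\mathrm{acc}})\|_\diamond\leq\|\Gamma_{\mathrm{R}}^{\mathrm{acc}}-\widehat{\Gamma}_{\mathrm{R}}^{\mathrm{acc}}\|_\diamond\leq\negl(n),
\]
where the last step is \cref{cor:preserve-randomness-retro}. The contraction applies because $\mathcal{N}$ is a subchannel: post-selection composed with channels.

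The main point to verify carefully is the $\Hyb_2$ identity. Specifically, we must check that the accept-branch operators of $\Dec_k$ restricted to $\Pi_{k,r}$ and of $\Dec_{k;r}$ coincide as completely positive maps, with no cross terms. This reduces to showing that the $\Pi_{k,s}$ are mutually orthogonal and sum to $\Pi_k$. That follows from the canonical form in \cref{lem:canonical}, because the states $\ket{\psi^{(k,r)}}$ are orthogonal, as already used in the proof of \cref{lem:preserve-randomness}.
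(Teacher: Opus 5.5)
Your proposal is correct and follows the paper's proof: it writes $\Hyb_1=\mathcal{N}\circ\Gamma_{\mathrm{R}}^{\mathrm{acc}}$ and $\Hyb_2=\mathcal{N}\circ\widehat{\Gamma}_{\mathrm{R}}^{\mathrm{acc}}$ via the identity $\Dec_k(\Pi_{k,r}\rho\Pi_{k,r})=(\Id-\Pi^\bot)\Dec_{k;r}(\rho)(\Id-\Pi^\bot)$, then applies data processing and \cref{cor:preserve-randomness-retro}. The extra post-selection off $\bot$ in your $\mathcal{N}$ is redundant on acceptance branches but harmless; the paper's $\mathcal{N}$ simply measures $\reg{K}$ and re-encrypts.
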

\begin{proof}
  Let $\Gamma_{\mathrm{R}}^{\mathrm{acc}}$ and $\widehat{\Gamma}_{\mathrm{R}}^{\mathrm{acc}}$ be the subchannels as in \cref{cor:preserve-randomness-retro} for the fixed attack $\mathcal{A}$, and let $\mathcal{N}$ be the post-processing map which does the following:
  \begin{enumerate}
    \item Measure $\reg{K}$ to get outcome $(k,r)$.
    \item Apply $\Enc_{k;r}$ to $\reg{M'}$ to get $\reg{C'}$.
    \item Output $\reg{C'R}$ and $k$.
  \end{enumerate}
  Similar to \cref{fact:dec-before-after-proj}, notice that $\Dec_k(\Pi_{k,r}\rho\Pi_{k,r}) = (\Id - \Pi^\bot)\Dec_{k;r}(\rho)(\Id - \Pi^\bot)$.
  Therefore $\Hyb_1 = \mathcal{N} \circ \Gamma_{\mathrm{R}}^{\mathrm{acc}}$ and $\Hyb_2 = \mathcal{N} \circ \widehat{\Gamma}_{\mathrm{R}}^{\mathrm{acc}}$.
  By \cref{cor:preserve-randomness-retro},
  \[
    \| \Hyb_1 - \Hyb_2 \|_\diamond
    = \|\,\mathcal{N} \circ \widehat{\Gamma}_{\mathrm{R}}^{\mathrm{acc}} - \mathcal{N} \circ \Gamma_{\mathrm{R}}^{\mathrm{acc}} \|_\diamond
    \leq \| \widehat{\Gamma}_{\mathrm{R}}^{\mathrm{acc}} - \Gamma_{\mathrm{R}}^{\mathrm{acc}} \|_\diamond
    \leq \negl(n) \qedhere
  \]
\end{proof}

\begin{claim} \label{claim:retro-hyb-2-vs-3}
  $\| \Hyb_2 - \Exp_\mathsf{real}^\mathcal{A} \|_\diamond \leq \negl(n)$
\end{claim}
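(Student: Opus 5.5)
The plan is to show that $\Hyb_2$ and $\Exp_\mathsf{real}^\mathcal{A}$ are in fact equal as subchannels, up to the negligible correctness error of the SKQES, by a direct computation in the canonical form of \cref{lem:canonical}.

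Fix $(k,r)$ and write $\sigma = \mathcal{A} \circ \Enc_{k;r}(\rho)$ on $\reg{CR}$. Both experiments begin by preparing $\sigma$. In $\Exp_\mathsf{real}^\mathcal{A}$ the output is $\Pi_k\sigma\Pi_k$. In $\Hyb_2$ the output is $\Enc_{k;r}\circ\Dec_k(\Pi_{k,r}\sigma\Pi_{k,r})$.

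The first step is to compute $\Enc_{k;r}\circ\Dec_k$ on the image of $\Pi_{k,r}$. Any operator $Y$ supported on that image has the form $Y = U_k(Y' \otimes \ketbra{\psi^{(k,r)}})U_k^\dagger$. Since $\Pi_k Y \Pi_k = Y$, \cref{lem:canonical} gives $\Dec_k(Y) = Y'$, and re-encrypting with $\Enc_{k;r}$ returns exactly $Y$. So $\Hyb_2$ outputs $\Pi_{k,r}\sigma\Pi_{k,r}$, and the decrypt/re-encrypt steps act as the identity there. If one prefers not to rely on the exact canonical form, the same conclusion holds up to $\negl(n)$ in diamond norm, using the correctness condition $\|\Dec_k\circ\Enc_k - \Id\oplus\mathbf{0}_\bot\|_\diamond\le\negl(n)$.

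The remaining gap is between $\Pi_{k,r}\sigma\Pi_{k,r}$ and $\Pi_k\sigma\Pi_k$, averaged over $k$ and $r$. This is precisely the randomness-preservation issue. The key observation is that $\Exp_\mathsf{real}^\mathcal{A}$ itself factors through the real retrospective map: $\Exp_\mathsf{real}^\mathcal{A} = \mathcal{N}'\circ\Gamma_{\mathrm{R}}^{\mathrm{acc}}$, where $\mathcal{N}'$ does the following.
\begin{enumerate}
  \item Read $(k,r)$ from $\reg{K}$.
  \item Re-encrypt $\reg{M}$ with $\Enc_{k;r'}$, using the residual $r'$-component coherently.
\end{enumerate}
Equivalently, and more simply, decompose
\[
\Pi_k\sigma\Pi_k - \Pi_{k,r}\sigma\Pi_{k,r} = \sum_{(s,s')\neq(r,r)}\Pi_{k,s}\sigma\Pi_{k,s'}.
\]
The diagonal part $\sum_{s\neq r}\Pi_{k,s}\sigma\Pi_{k,s}$, averaged over $k,r$, has trace equal to $\Tr[(\Gamma^{\mathrm{acc}}-\widehat\Gamma^{\mathrm{acc}})(\rho)]$ by \eqref{eq:diff-channel}. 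That quantity is negligible by \cref{lem:preserve-randomness}.

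Let $p$ denote this negligible weight. The standard gentle-measurement bound stated in the preliminaries then controls the rest. Since $\Pi_{k,r}\le\Pi_k$, applying the two-outcome measurement $\{\Pi_{k,r},\Pi_k-\Pi_{k,r}\}$ to $\Pi_k\sigma\Pi_k$ disturbs it by at most $2\sqrt{p}$ in trace norm. To make this hold in diamond norm, carry a reference register $\reg{E}$ throughout and apply the bound to $(\Pi_k\otimes\Id_\reg{E})\sigma_\reg{CRE}(\Pi_k\otimes\Id_\reg{E})$. Averaging over $(k,r)$ with Jensen's inequality then keeps the total $O(\sqrt{\negl})=\negl(n)$.

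The main obstacle is this last step. The classical label $k$ is output, and $r$ is sampled per run, so the gentle-measurement bound must be applied conditionally on $(k,r)$ and then averaged, and the off-diagonal cross terms $\Pi_{k,r}\sigma\Pi_{k,s}$ must be shown to be controlled by the diagonal weight. I would handle this by writing $\Pi_k\sigma\Pi_k$ as a state (after normalization) and using
\[
\|\tau - P\tau P\|_1 \le 2\sqrt{\Tr[(\Id-P)\tau]}
\]
with $P=\Pi_{k,r}$ inside $\Pi_k$, and then concavity of the square root.
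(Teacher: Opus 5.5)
Your proposal is correct and is essentially the paper's argument. In both, the decrypt/re-encrypt steps are the identity on the image of $\Pi_{k,r}$, and the gap between $\Pi_k\sigma_{k,r}\Pi_k$ and $\Pi_{k,r}\sigma_{k,r}\Pi_{k,r}$ is controlled by three things: the negligible weight $\E_{k,r}\Tr[\Pi_{k,\neq r}\sigma_{k,r}]$ from randomness preservation, a square-root bound on the cross terms (your gentle-measurement step plays the role of the paper's H\"older step), and Jensen over $(k,r)$.

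Delete the claimed exact factorization $\Exp_\mathsf{real}^\mathcal{A} = \mathcal{N}'\circ\Gamma_{\mathrm{R}}^{\mathrm{acc}}$, though: it is not ``equivalent'' to what you then do. $\Gamma_{\mathrm{R}}^{\mathrm{acc}}$ has already traced out the tag register $\reg{T}$, so no post-processing can recover the $r'$-components or their coherences. That is exactly why the cross terms need a separate bound.
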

\begin{proof}
  First notice that Steps \ref{item:retro-hyb-2-decrypt} and \ref{item:retro-hyb-2-encrypt} act as the identity in $\Hyb_2$ (because of the post-selection in Step \ref{item:retro-hyb-2-post-select}) and so can be omitted.
  Then the only remaining difference is in projectors --- i.e., $\Pi_{k,r}$ versus $\Pi_k$.
  Therefore we want to prove for all states $\rho_\reg{MRE}$, where $\reg{E}$ is an arbitrary reference register on which all maps act trivially,
  \ifllncs
  \begin{align*}
    &\left\|\E_{k,r}\left[\Pi_k(\mathcal{A} \circ \Enc_{k;r}(\rho))\Pi_k \otimes \ketbra{k}_\reg{K}\right]\right. \\
    &\qquad - \left.\E_{k,r}\left[\Pi_{k,r}(\mathcal{A} \circ \Enc_{k;r}(\rho))\Pi_{k,r} \otimes \ketbra{k}_\reg{K}\right]\right\|_1 \leq \negl(n).
  \end{align*}
  \else
  \[
    \left\|\E_{k,r}\left[\Pi_k(\mathcal{A} \circ \Enc_{k;r}(\rho))\Pi_k \otimes \ketbra{k}_\reg{K}\right]
    - \E_{k,r}\left[\Pi_{k,r}(\mathcal{A} \circ \Enc_{k;r}(\rho))\Pi_{k,r} \otimes \ketbra{k}_\reg{K}\right]\right\|_1 \leq \negl(n).
  \]
  \fi
  Fix $\rho$ and let $\sigma_{k,r} = \mathcal{A} \circ \Enc_{k;r}(\rho)$ and $\Pi_{k,\neq r} = \sum_{s \neq r} \Pi_{k,s}$.
  Then we seek to bound the trace norm of
  \begin{align*}
    &\E_{k,r}[(\Pi_k\sigma_{k,r}\Pi_k - \Pi_{k,r}\sigma_{k,r}\Pi_{k,r}) \otimes \ketbra{k}_\reg{K}] \\
    &\qquad = \E_{k,r}\left[\left(\Pi_{k,\neq r}\sigma_{k,r}\Pi_{k,\neq r}
    + \Pi_{k,r}\sigma_{k,r}\Pi_{k,\neq r}
    + \Pi_{k,\neq r}\sigma_{k,r}\Pi_{k,r}\right) \otimes \ketbra{k}_\reg{K}\right]
  \end{align*}
  Adapting \cref{eq:diff-channel} to the retrospective setting and observing $\sum_{s \neq r} \Id_\reg{M} \otimes \ketbra{\psi^{(k,s)}} = U_k^\dagger\Pi_{k,\neq r}U_k$,
  \ifllncs
  \begin{align*}
    &\| (\Gamma_{\mathrm{R}}^{\mathrm{acc}} - \widehat{\Gamma}_{\mathrm{R}}^{\mathrm{acc}})(\rho)\|_1 \\
    &\qquad= \left\| \E_{k,r} \Tr_\reg{T}\left[\left(\sum_{s \neq r} \Id_\reg{M} \otimes \ketbra{\psi^{(k,s)}}\right) U_k^\dagger\sigma_{k,r}U_k\right] \otimes \ketbra{k,r}_\reg{K} \right\|_1 \\
    &\qquad= \left\| \E_{k,r} \Tr_\reg{T}\left[U_k^\dagger \Pi_{k,\neq r}\sigma_{k,r}\Pi_{k,\neq r}U_k\right] \otimes \ketbra{k,r}_\reg{K} \right\|_1 \\
    &\qquad= \left\| \E_{k,r} \left[\Pi_{k,\neq r} \sigma_{k,r} \Pi_{k,\neq r} \otimes \ketbra{k}_\reg{K}\right]\right\|_1
  \end{align*}
  \else
  \begin{align*}
    \| (\Gamma_{\mathrm{R}}^{\mathrm{acc}} - \widehat{\Gamma}_{\mathrm{R}}^{\mathrm{acc}})(\rho)\|_1
    &= \left\| \E_{k,r} \Tr_\reg{T}\left[\left(\sum_{s \neq r} \Id_\reg{M} \otimes \ketbra{\psi^{(k,s)}}\right) U_k^\dagger\sigma_{k,r}U_k\right] \otimes \ketbra{k,r}_\reg{K} \right\|_1 \\
    &= \left\| \E_{k,r} \Tr_\reg{T}\left[U_k^\dagger \Pi_{k,\neq r}\sigma_{k,r}\Pi_{k,\neq r}U_k\right] \otimes \ketbra{k,r}_\reg{K} \right\|_1 \\
    &= \left\| \E_{k,r} \left[\Pi_{k,\neq r} \sigma_{k,r} \Pi_{k,\neq r} \otimes \ketbra{k}_\reg{K}\right]\right\|_1
  \end{align*}
  \fi
  The last equality holds because both operators are positive and have the same trace.
  By \cref{cor:preserve-randomness-retro}, we know that $\|\E_{k,r}\left[\Pi_{k,\neq r}\sigma_{k,r}\Pi_{k,\neq r} \otimes \ketbra{k}_\reg{K}\right]\|_1 \leq \negl(n)$, but the cross terms are not accounted for in that setting because decryption traces out the tag register $\reg{T}$.
  By H\"{o}lder's inequality,
  \[
    \| \Pi_{k,r}\sigma_{k,r}\Pi_{k,\neq r} \|_1
    \leq \| \Pi_{k,r}\sqrt{\sigma_{k,r}} \|_2 \cdot \|\sqrt{\sigma_{k,r}}\,\Pi_{k,\neq r}\|_2
    \leq \sqrt{\Tr[\Pi_{k,\neq r}\sigma_{k,r} \Pi_{k,\neq r}]}
  \]
  By convexity of the trace norm, $\|X \otimes \ketbra{k}\|_1 = \|X\|_1$, and Jensen's inequality, we have that
  \begin{align*}
    \left\| \E_{k,r}[\Pi_{k,r}\sigma_{k,r}\Pi_{k,\neq r} \otimes \ketbra{k}_\reg{K}] \right\|_1
    &\leq \E_{k,r} \left\|\Pi_{k,r}\sigma_{k,r}\Pi_{k,\neq r}\right \|_1 \\
    &\leq \sqrt{\E_{k,r} \Tr[\Pi_{k,\neq r}\sigma_{k,r} \Pi_{k,\neq r}]} \\
    &= \sqrt{\E_{k,r} \Tr[\Pi_{k,\neq r}\sigma_{k,r} \Pi_{k,\neq r} \otimes \ketbra{k}_\reg{K}]} \\
    &= \left\|\E_{k,r}[\Pi_{k,\neq r}\sigma_{k,r} \Pi_{k,\neq r} \otimes \ketbra{k}_\reg{K}]\right\|_1^{\frac{1}{2}}\\
    &\leq \negl(n)
  \end{align*}
  The same argument can be applied to the $\Pi_{k,\neq r}\sigma_{k,r}\Pi_{k,r}$ term.
  Because all three terms are negligible, our claim follows directly from the triangle inequality.
\end{proof}

\noindent Combining Claims \ref{claim:retro-hyb-0-vs-1} -- \ref{claim:retro-hyb-2-vs-3} through the triangle inequality, we have our result.
\end{proof}

\section{Impossibility of Straightline QNICs}
\label{sec:impossibility}

Now that we have a formal definition settled, we can state a precise version of \cref{thm:main-result}.
\setcounter{theorem}{0}
\begin{theorem}\label{thm:decision}
  Assume there exist quantum commitments relative to the unitary oracle distribution $\oracledist$ and let $(\mathcal{T}_P, \mathcal{T}_V)$ be an $(\varepsilon,\delta)$-straightline QNIC with respect to $\oracledist$ such that $\varepsilon(n) + 2\delta(n) < \kappa^* - \kappa$ for $\kappa^* = 1 - \frac{2}{3\sqrt{3}} \approx \frac{3}{5}$ and some $\kappa > 0$.
  Then there exists a $\QMA$-complete language $\lang$ and an efficient algorithm $\mathcal{A}$ such that
  \begin{align*}
    \Pr_{\oracle \gets \oracledist}[\mathcal{A}^{\oracle}(x) = 1] &\geq 2\delta + \kappa - \negl(n) & \text{ for all } x \in \lang \\
    \Pr_{\oracle\gets\oracledist}[\mathcal{A}^{\oracle}(x) = 1] &< 2\delta + \negl(n) & \text{ for all } x \notin \lang
  \end{align*}
  In particular, if $\kappa \geq 1/\poly(n)$ and $\oracledist$ is efficiently simulatable (as in \cref{def:oracle-sim}), then $\QMA = \BQP$.
\end{theorem}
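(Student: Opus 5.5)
Take a $\QMA$-complete language $\lang$ and fix a $\Xi$-protocol $(\{\xi^x\},\prover,\verify)$ for it with negligible completeness and soundness error and special HVZK simulator $\Sim$ (e.g.\ \cite{BroadbentG22}, instantiated with the assumed quantum commitments relative to $\oracledist$). Also fix a QCA-secure SKQES $(\Gen,\Enc,\Dec)$. For each key $k$ define the modified protocol $(\{\xi^x_k\},\prover,\verify_k)$. Here $\xi^x_k$ is $\xi^x$ with $\reg{A}$ encrypted under $\Enc_k$, and $\verify_k$ first decrypts $\reg{A}'$ with $\Dec_k$, rejects on $\bot$, and otherwise runs $\verify$. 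Completeness, soundness and HVZK of each member follow from correctness of the scheme. For soundness, a cheating first message can be replaced by its decryption, so soundness of the original protocol applies. The simulator is $\Enc_k\circ\Sim$. Consequently the QNIC guarantees hold for each $k$: completeness $1-\varepsilon$ on yes-instances, and $\delta$-soundness against any QPT forger with access to $(\oracle,\oracle^\dagger)$. A forger that samples $k$ itself is covered as well, since $k$ is part of the protocol description.

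The algorithm $\mathcal{A}^\oracle(x)$ runs the forgery $\mathcal{F}$ of \cref{sec:tech-overview} and then the honest non-interactive verifier $\mathcal{T}_V^{\oracle,\verify_k}$ on the forged proof $\pi'$, outputting its decision. Because $\pi'$ is produced by an efficient adversary, $\delta$-soundness bounds acceptance on $x\notin\lang$. The bound $2\delta$ in the statement suggests accounting separately for the abort branch and the final branch, or applying soundness to two related adversaries (one outputting $\pi'$, one conditioned on the intermediate test). I would split the acceptance probability accordingly and bound each piece by $\delta$.

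For $x\in\lang$, view $\mathcal{T}_{V,1}\circ\mathcal{T}_P$ (with $\prover$, the $\reg{S}$ register and the oracle folded into $\reg{R}$) as the man-in-the-middle channel. \cref{lem:exps-close} then says that running it on $\Enc_{k;r}(\xi^x)$ and post-selecting on a valid ciphertext in $\reg{A}'$ is negligibly close to running it on encrypted half-EPR pairs, post-selecting on $\widetilde{\Pi}_{k,r}$, and re-encrypting the genuine $\reg{A}$ afterwards. In the deferred picture, $\reg{A}$ is untouched until the end and the prover only ever sees $c$, so we may substitute $\Sim(x,c)$ for $\pi_{x,c}$ by HVZK. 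The result is exactly $\mathcal{F}$ up to the uncomputation of $\mathcal{T}_{V,1}$, which needs $\oracle^\dagger$. The honest transcript state after $\mathcal{T}_{V,1}$ is then negligibly close to the forged one on the accepting branch. Applying $\mathcal{T}_{V,1}^\dagger$ and then $\mathcal{T}_V$ again reproduces this state except for the disturbance caused by the intermediate validity measurement.

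This last point is the main obstacle and explains $\kappa^*$. The honest pipeline accepts with probability $\ge 1-\varepsilon$, but $\mathcal{F}$ measures $\Pi_k$ on $\reg{A}'$, uncomputes, and then the verifier recomputes. This is an alternating-projection situation, which I would handle with \cref{lem:alternating-projections}. Take $\Pi_2$ to be the projector onto the image of the honest proof/ancilla preparation, $\Pi_1=(\mathcal{T}_{V,1})^\dagger\Pi_k\mathcal{T}_{V,1}$, and $M$ the acceptance POVM (which commutes with $\Pi_1$ since $\verify_k$ already decrypts). The lemma then shows that the forged acceptance probability is at least honest acceptance minus $\frac{2}{3\sqrt3}$, minus the mass lost to soundness-type terms (the $2\delta$). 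The resulting gap is $1-\varepsilon-\frac{2}{3\sqrt3}-\negl$ versus $2\delta$, i.e.\ at least $\kappa$, as claimed.

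Finally, if $\kappa\ge1/\poly$, then $\mathcal{A}$ is a BQP algorithm with an inverse-polynomial gap between yes- and no-instances. Replacing $\oracle$ with the efficient simulator of \cref{def:oracle-sim} changes the acceptance probabilities only negligibly. Amplifying by repetition then gives $\lang\in\BQP$, hence $\QMA=\BQP$.
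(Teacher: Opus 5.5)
There is a genuine gap in the completeness argument. It comes from adopting the overview's forger $\mathcal{F}$, which aborts when $\reg{A'}$ fails the validity test.

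Honest acceptance splits as $1-\varepsilon\le\mu_1^x+\mu_2^x$. Here $\mu_2^x$ is the weight of runs in which $\reg{A'}$ lies outside the image of $\Pi_k$, so $\verify_k$ rejects, and yet $\mathcal{T}_{V,2}$ still outputs $1$. Nothing in the QNIC definition forbids this. In \cref{lem:alternating-projections}, with $\Pi_1$ the validity projector and $M$ commuting with it, the first term is $\Tr[M\Pi_1\rho\Pi_1]$. That is only the valid-ciphertext part $\mu_1^x$ of honest acceptance. So an aborting forger gets at least $\mu_1^x-\frac{2}{3\sqrt3}-\negl(n)$, not ``honest acceptance minus $\frac{2}{3\sqrt3}$''. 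Nothing in your argument bounds $\mu_2^x$ for $x\in\lang$, since $\delta$-soundness only speaks about $x\notin\lang$. Subtracting ``the $2\delta$'' from completeness does not repair this. It is also inconsistent with the gap you then claim, since it would require $\varepsilon+4\delta<\kappa^*$.

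The missing idea is that the decider must not abort on the invalid branch. After measuring $\{\widetilde\Pi_{k,r},\Id-\widetilde\Pi_{k,r}\}$, on outcome $\Id-\widetilde\Pi_{k,r}$ it discards the transcript registers, feeds $\mathcal{T}_{V,2}$ the bit $0$ in place of $\verify_k$'s verdict, and outputs what $\mathcal{T}_{V,2}$ outputs. This is the paper's $\mathcal{A}_1$.

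\begin{itemize}
\item \emph{Yes-instances.} For $x\in\lang$, SHVZK plus QCA security on the rejecting branches, $\|\Gamma^{\mathrm{rej}}-\ketbra{\bot}\otimes\Lambda^{\mathrm{rej}}\|_\diamond\le\negl(n)$, shows this branch has acceptance weight $\mu_2^x\pm\negl(n)$.
\item \emph{No-instances.} For $x\notin\lang$ this branch needs its own soundness argument, unrelated to $\pi'$. The same QCA bound swaps the encrypted EPR half for an encryption of a fixed message $\ket{0}$. Since the forced-reject verifier is dominated by $\verify_k$, the weight is at most the acceptance of $\mathcal{T}_V^{\verify_k}$ on the efficiently preparable proof $\mathcal{T}_P^{\mathcal{S}}\circ\Enc_k(\ketbra{0})$. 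Hence it is below $\delta+\negl(n)$.
\end{itemize}

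The two branches are complementary outcomes of one measurement, so their acceptance weights add. On yes-instances this gives at least $\mu_1^x+\mu_2^x-\frac{2}{3\sqrt3}-\negl(n)\ge\kappa^*-\varepsilon-\negl(n)$. On no-instances it gives below $2\delta+\negl(n)$, which is the real source of the $2\delta$.

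The rest of your plan matches the paper: the key-indexed family, deferral via \cref{lem:exps-close}, the SHVZK swap, uncomputing with $\oracle^\dagger$, alternating projections on the valid branch, soundness for $\pi'$, and simulating $\oracledist$. One precision: in the alternating-projections step, $\Pi_2$ must be the image of $T_{V,1}$ on fresh ancilla ($\tvproj$), not anything tied to the honest proof.
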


In the rest of this section, we work to prove \cref{thm:decision}.
In \cref{sec:starprotocol}, we describe the family of counterexample $\Xi$-protocols which we use to construct the decision algorithm.
\cref{sec:compiled-notation} lays out the components for the compiled non-interactive argument and defines some additional notation for use in later sections.
In Sections \ref{sec:case-1} and \ref{sec:case-2}, we construct two subchannels $\mathcal{A}_1$ and $\mathcal{A}_2$ which capture the contributions from invalid and valid ciphertexts, respectively.
Finally, \cref{sec:combine-cases} combines these subchannels into the decision algorithm $\mathcal{A}$ and proves its completeness and soundness.

\subsection{The Counterexample Protocol}
\label{sec:starprotocol}

Let $(\{\xi^{x}\}_{x \in \lang}, \prover, \verify)$ be a $\Xi$-protocol for a $\QMA$-complete language $\lang$ with $(1 - \negl(n))$-completeness, $\negl(n)$-soundness, and a SHVZK simulator $\Sim$ --- e.g., a parallel repetition of the $\Xi$-protocol from \cite{BroadbentG22} which exists if we assume quantum commitments.
Let $(\Gen, \Enc, \Dec)$ be a SKQES with QCA security.
In order to align notation between these protocols, let $\reg{M}$ denote the message register of $\xi^{x}$ and $\reg{A}$ denote a ciphertext register so $\Channel{\Enc}{M}{A}$ and $\Channel{\Dec}{A}{M \oplus \bot}$.
Thus $\xi^x \in \States{MS}$ and the prover and verifier are channels $\Channel{\prover}{CS}{Z}$ and $\Channel{\verify}{MCZ}{B}$.
Without loss of generality, we can assume that both $\prover$ and $\verify$ measure the $\reg{C}$ register immediately.
Although it is technically possible to query them in superposition, any coherence will be lost.
Finally, let $\Channel{S}{C}{MZ}$ be a quantum channel corresponding to the $\Xi$-protocol's SHVZK simulator $\Sim$ --- i.e., for a fixed $x$,
\[
  \mathcal{S}(\rho_\reg{C}) = \sum_{c \in \chlspace} \expval{\rho}{c} \otimes \Sim(x,c).
\]
Given these ingredients, we build a family of $\Xi$-protocols $\{(\starxi^x_k, \starprover, \starverify_k)\}$ indexed by a secret key $k$ as follows.
The state $\starxi^x_k = (\Enc_k \otimes \Id_\reg{S})(\xi^{x}) \in \States{AS}$ is updated such that its message register is encrypted under the parameterized key.
The prover $\starprover$ is unchanged.
On input $\rho_\reg{ACZ}$, the counterexample verifier $\starverify_k$ performs the following steps:
\begin{enumerate}
  \item Apply $\Dec_k \otimes \Id_\reg{CZ}$ to $\rho$.
  \item Measure $\reg{M}$ with $\{\Pi^\bot, \Id - \Pi^\bot\}$. Reject if the outcome is $\Pi^\bot$.
  \item Apply $\verify$ and forward its output.
\end{enumerate}
Using \cref{fact:dec-before-after-proj}, we can write this channel as $\starverify_k(\rho_{\reg{ACZ}}) = \verify \circ \Dec_k(\Pi_k \rho \Pi_k) + \Tr\left[(\Id - \Pi_k)\rho\right]\ketbra{0}$.

\subsection{The Compiled Counterexample NIZK}
\label{sec:compiled-notation}

Before we describe our decision algorithm, we must first lay down a few ingredients starting with the QNIC algorithms.
Recall that $(\mathcal{T}_P, \mathcal{T}_V)$ are single-query oracle algorithms where $\mathcal{T}_P$ prepares the non-interactive argument and $\mathcal{T}_V$ verifies it.
Because $\mathcal{T}_V$ makes only one query, it can be decomposed into $\mathcal{T}_{V,1}$ and $\mathcal{T}_{V,2}$ such that $\mathcal{T}_V^{\mathcal{V}} = \mathcal{T}_{V,2}(\Id \otimes \mathcal{V})\mathcal{T}_{V,1}$.
Additionally, for a fixed oracle $\oracle$, let $(T_{V,1}, \ket{0}_\reg{E})$ be a Stinespring dilation of $\mathcal{T}_{V,1}^{\oracle}$ and we denote by $\tvproj = T_{V,1}(\Id \otimes \ketbra{0}_\reg{E})T_{V,1}^\dagger$ the projector onto the subspace of $T_{V,1}$'s image.
Given a fixed oracle $\oracle$ and dilation $T_{V,1}$, we denote the sequential composition of $\mathcal{T}_P$ with $T_{V,1}$ as
\[
  \mathcal{T}_{P+V,1}^{(\cdot)}(\rho_\reg{A}) = T_{V,1} \left(\mathcal{T}_P^{(\oracle,\cdot)}(\rho) \otimes \ketbra{0}_\reg{E}\right) T_{V,1}^\dagger.
\]
Let $\sigma_k = \mathcal{T}_{P+V,1}^{\starprover}(\starxi^x_k)$ be the state right before $\mathcal{T}_V$'s query to $\starverify_k$ and $\{\accpovm_k, \rejpovm_k\}$ be the measurement for whether that state is both valid (i.e., in the $T_{V,1}$ subspace) and accepted by $\mathcal{T}_V$ --- i.e.,
\[
  \accpovm_k = \tvproj (\mathcal{T}_{V,2} \circ \starverify_k)^\dagger(\!\ketbra{1}) \tvproj, \hspace{1cm} \rejpovm_k = \Id - \accpovm_k
\]

\begin{remark}
  In order to keep notation clear and uncluttered, we will often suppress the parameterization for the oracle $\oracle$.
  So that it is not completely forgotten, we will include $\oracle$ in expectations (e.g., $\E_{\oracle,k}[\ldots]$).
  Keep in mind that all objects downstream from $\mathcal{T}$ will be dependent on $\oracle$ --- e.g., $\tvproj$, $\accpovm_k$, $\sigma_k$, etc.
\end{remark}

Recall the three projectors defined in \cref{lem:canonical} and \cref{def:qca}: $\Pi_k$ which projects onto the space of valid ciphertexts encrypted with the key $k$, $\Pi_{k,r}$ which narrows this to ciphertexts which were specifically encrypted with randomness $r$, and $\widetilde{\Pi}_{k,r}$ which additionally restricts onto those ciphertexts which encrypt half of the maximally entangled state $\ket{\Phi^+}$.
For each $x \in \lang$ and $k$, we can define the subnormalized states
\[
   \Pi_k \sigma_k \Pi_k, \hspace{5mm} \text{and} \hspace{5mm}
  (\Id - \Pi_k) \sigma_k (\Id - \Pi_k)
\]
which correspond to the projection of $x$'s honest proof onto valid and invalid ciphertexts at the point where $\starverify_k$ is decrypting.

Because $\starverify_k(\sigma_k) = \starverify_k(\Pi_k\sigma_k\Pi_k) + \starverify_k((\Id - \Pi_k)\sigma_k(\Id - \Pi_k))$, the expected completeness for a particular statement $x$ on average over all protocols in the family can be written
\ifllncs
\begin{align}
  &\E_{\oracle,k} \Tr[\ketbra{1} \mathcal{T}_V^{\starverify_k} \circ \mathcal{T}_P^{\starprover}(\starxi^x_k)] \nonumber \\
  &\qquad= \E_{\oracle,k} \Tr[\ketbra{1} \mathcal{T}_{V,2}(\starverify_k(\Pi_k\sigma_k\Pi_k) + \starverify_k((\Id - \Pi_k)\sigma_k(\Id - \Pi_k)))] \nonumber \\
  &\qquad= \overbrace{\E_{\oracle,k} \Tr[\ketbra{1} \mathcal{T}_{V,2} \circ \starverify_k(\Pi_k\sigma_k\Pi_k)]}^{\mu^x_1} \nonumber \\
  &\qquad\qquad+ \underbrace{\E_{\oracle,k} \Tr[\ketbra{1} \mathcal{T}_{V,2} \circ \starverify_k((\Id - \Pi_k)\sigma_k(\Id - \Pi_k)))]}_{\mu^x_2}
  \label{eq:mu-defn}
\end{align}
\else
\begin{align}
  \E_{\oracle,k} \Tr[\ketbra{1} \mathcal{T}_V^{\starverify_k} \circ \mathcal{T}_P^{\starprover}(\starxi^x_k)]
  &= \E_{\oracle,k} \Tr[\ketbra{1} \mathcal{T}_{V,2}(\starverify_k(\Pi_k\sigma_k\Pi_k) + \starverify_k((\Id - \Pi_k)\sigma_k(\Id - \Pi_k)))] \nonumber \\
  &= \underbrace{\E_{\oracle,k} \Tr[\ketbra{1} \mathcal{T}_{V,2} \circ \starverify_k(\Pi_k\sigma_k\Pi_k)]}_{\mu^x_1} + \underbrace{\E_{\oracle,k} \Tr[\ketbra{1} \mathcal{T}_{V,2} \circ \starverify_k((\Id - \Pi_k)\sigma_k(\Id - \Pi_k)))]}_{\mu^x_2}
  \label{eq:mu-defn}
\end{align}
\fi
where we've labeled two categories of completeness with $\mu^x_1$ and $\mu^x_2$.
The second category, $\mu^x_2$, captures the event where the non-interactive argument accepts despite the underlying verifier $\starverify_k$ rejecting the ciphertext.
Intuitively, the only way that the compiled argument could be deciding the language is through its interaction with $\starprover$ and $\starverify_k$.
One might therefore expect this value to be small.

Recall that we are assuming there exists $\kappa > 0$ such that $\varepsilon + 2\delta \leq \kappa^* - \kappa$.
In \cref{sec:case-1}, we will construct a subchannel corresponding to the invalid ciphertext completeness contribution $\mu^x_2$, and in \cref{sec:case-2} we do the same for the valid ciphertext contribution $\mu^x_1$.
In \cref{sec:combine-cases}, we combine the two subchannels as complementary branches of a single algorithm.

These subchannels, $\mathcal{A}_1$ and $\mathcal{A}_2$, and their associated hybrids will all output a single classical bit.
As in the previous section, post-selecting onto a projector retains the corresponding measurement outcome without renormalizing: the trace of the output includes the probability of that outcome.
The omitted outcome contributes the zero operator $\mathbf{0}$, not the state $\ketbra{0}$.
For these subchannels, we use the convention
\[
  \Pr_{\oracle}[\mathcal{A}_i^{\oracle}(x)=1]
  = \E_{\oracle}\Tr[\ketbra{1}\mathcal{A}_i^{\oracle}(x)],
\]
and likewise for the hybrids.
Thus all acceptance probabilities below are unconditional weights, not probabilities conditioned on the retained outcome.

\subsection{Case 1: Invalid Ciphertexts} \label{sec:case-1}

In this case, the QNIC algorithms are mauling the ciphertext such that it does not decrypt but they will still accept.
Because the decryption fails, $\mathcal{T}$ will never see the first message in the clear.
We use an encrypted EPR state and retain the branch that fails its authentication test.
QCA security relates this branch to the rejecting branch for an encryption of a fixed state.

\begin{center}
  \fbox{%
    \begin{minipage}{0.9\linewidth}
      $\mathcal{A}_1^{\oracle}(x)$
      \begin{enumerate}
        \item Sample $k \gets \Gen(1^n)$ and $r \gets p_k$.
        \item Prepare $\sigma_{k,r} = \mathcal{T}_{P+V,1}^{\mathcal{S}} \circ \Enc_{k;r}(\ketbra{\Phi^+}_\reg{LM} \otimes \ketbra{x}_\reg{X})$.
        \item Post-select onto $\Id - \widetilde{\Pi}_{k,r}$.
        \item Discard $\reg{ALM'CZ}$, prepare $\reg{B}$ as $\ket{0}$, apply $\mathcal{T}_{V,2}$, and forward the output.
      \end{enumerate}
    \end{minipage}
  }
\end{center}

\begin{lemma} \label{lem:a1-soundness}
  $\Pr_{\oracle}[\mathcal{A}_1^{\oracle}(x) = 1] < \delta(n) + \negl(n)$ for $x \notin \lang$.
\end{lemma}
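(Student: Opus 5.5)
We will prove the bound by a reduction to $\delta$-soundness of the QNIC, applied to the counterexample protocol $(\starxi^x_k,\starprover,\starverify_k)$ for a fixed $x \notin \lang$. Since $\starverify_k$ is a legitimate $\Xi$-protocol verifier for $\lang$ (completeness, soundness and SHVZK are inherited from the underlying protocol via correctness of the SKQES), it suffices to build a QPT adversary $\mathcal{B}^{\oracle}(x)$ whose output proof $\rho$ satisfies
\[
  \E_{\oracle,k}\Pr[\mathcal{T}_V^{\oracle,\starverify_k}(\rho)=1] \geq \Pr_{\oracle}[\mathcal{A}_1^{\oracle}(x)=1] - \negl(n).
\]
Soundness holds for every $k$, so the expectation over $k$ is also below $\delta$.

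The adversary $\mathcal{B}$ runs $\mathcal{T}_P^{\oracle,\cdot}$. For the prover's first message it supplies $\Enc_{k';r'}$ of half of $\Phi^+_{\reg{LM}}$, using an independent key $k'$ of its own, and keeps $\reg{L}$. It answers the challenge with $\mathcal{S}$ and outputs the resulting proof $\pi$. Consider running $\mathcal{T}_V^{\starverify_k}$ on $\pi$ with a fresh key $k$ that is independent of $k'$. We expect such a proof almost always to be rejected by the decryption inside $\starverify_k$. Upon rejection, $\starverify_k$ outputs $\ket{0}$ on $\reg{B}$ and discards the rest. This is exactly what $\mathcal{A}_1$ hard-codes in its last step.

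We therefore compare the weight of $\accpovm$-acceptance restricted to the invalid branch in two experiments:
\begin{itemize}
  \item[(a)] the ciphertext is encrypted under $(k,r)$ and the branch is $\Id-\widetilde\Pi_{k,r}$, which is $\mathcal{A}_1$;
  \item[(b)] the ciphertext is encrypted under $k'$, but the verifier key is an independent $k$ and the branch is $\Id-\Pi_k$, which is what $\mathcal{B}$ faces.
\end{itemize}
In experiment (b), acceptance is at least the weight of the $\Id-\Pi_k$ branch followed by $\mathcal{T}_{V,2}$ on $\ket{0}$. The bridge between (a) and (b) is QCA security. Take the attack $\mathcal{A}$ to be the channel $\mathcal{T}_{P+V,1}^{\mathcal{S}}$ acting on the ciphertext together with $\reg{L}$ in the reference register. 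Its rejecting branch $\Lambda^{\mathrm{rej}}$ is independent of the key. Moreover, the rejecting branch of the real map is negligibly close to $\ketbra{\bot}\otimes\Lambda^{\mathrm{rej}}$ by \cref{eq:acc-rej-both-negl}.

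The plan is then to establish two facts:
\begin{itemize}
  \item[(i)] The $\Id-\widetilde\Pi_{k,r}$ branch used by $\mathcal{A}_1$, after tracing out the ciphertext, equals the total map minus the simulated accept branch $\Lambda^{\mathrm{acc}}_{k,r}$. Hence it is negligibly close to $\Lambda^{\mathrm{rej}}$ plus the part of the real accept branch whose output on $\reg{M}$ is not $\Phi^+$-correlated with $\reg{L}$. The argument uses \cref{lem:preserve-randomness} and the decomposition $\Lambda = \Id\otimes\E\Lambda^{\mathrm{acc}}+\ketbra{\bot}\otimes\Lambda^{\mathrm{rej}}$.
  \item[(ii)] The reduced state on everything except $\reg{ALM}$ is the same in (a) and (b) before post-selection, because $\mathcal{T}_P$ is black-box and independent of the key. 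The rejecting weight under an independent key $k$ is then at least the $\Lambda^{\mathrm{rej}}$ weight.
\end{itemize}
Since $\mathcal{A}_1$ discards $\reg{ALM'CZ}$, only the marginal on $\reg{PT_V}$ and $\reg{E}$ matters. This is where the identification with $\Lambda^{\mathrm{rej}}$ lets us drop the key dependence.

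We expect step (i) to be the main obstacle. The post-selection in $\mathcal{A}_1$ is onto $\Id-\widetilde\Pi_{k,r}$ rather than $\Id-\Pi_k$, so it also retains valid ciphertexts that fail the EPR test. We must show this extra weight is not larger than what an honest rejection under an unrelated key gives. The first thing to try is the Bell-projection identity $\Theta(\widehat\Gamma^{\mathrm{acc}})=\Lambda^{\mathrm{acc}}$ together with the closeness $\Gamma\approx\Lambda$, which should show that, up to negligible error, the $\Id-\widetilde\Pi$ branch traced over the message has the same effect on the remaining registers as $\Lambda^{\mathrm{rej}}$. Failing that, we would directly bound the difference by $\|\Gamma-\Lambda\|_\diamond$ after applying the fixed post-processing of discarding $\reg{M}$, preparing $\ket{0}_{\reg{B}}$, and running $\mathcal{T}_{V,2}$, which is a valid data-processing step.
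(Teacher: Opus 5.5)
\paragraph{Verdict: the key inequality is never proved.} Your reduction skeleton can work: an adversary encrypts half of $\Phi^+$ under its own key $k'$, answers with $\mathcal{S}$, and is judged by $\starverify_k$ for an independent $k$. But the proposal never proves the inequality it rests on, namely that the $\Id-\Pi_k$ branch in experiment (b) carries at least the acceptance weight of $\mathcal{A}_1$, up to $\negl(n)$. Two ingredients are missing.

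\emph{Unforgeability under an independent key.} Your claim that such a proof is ``almost always rejected'' is exactly what your route needs, and it is not automatic. You must extract a key-independent bound $\|\E_k \Pi_k\|_\infty \leq \negl(n)$ from QCA. One way is the attack that discards the ciphertext and outputs a fixed state $\omega$. Its real acceptance weight is $\E_k\Tr[\Pi_k\omega]$, while its simulated acceptance weight is $d^{-2}\E_{k,r}\Tr[\Pi_{k,r}\omega] \leq d^{-2}\E_k\Tr[\Pi_k\omega]$, where $d = \dim\reg{M} \geq 2$.

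\emph{An operator-level comparison.} The last sentence of (ii) is asserted without argument. Equality of the pre-measurement marginals does not by itself compare two post-selected weights. Since $\mathcal{T}_{V,2}$ acts on the residual state, you need an operator inequality, not a comparison of traces. Positivity gives it directly: $\Tr_{\reg{ALM'CZ}}[(\Id-\widetilde{\Pi}_{k,r})\sigma_{k,r}(\Id-\widetilde{\Pi}_{k,r})] \leq \Tr_{\reg{ALM'CZ}}[\sigma_{k,r}]$, and $Y \mapsto \Tr[\ketbra{1}\mathcal{T}_{V,2}(Y \otimes \ketbra{0}_\reg{B})]$ is a positive functional. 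So $\mathcal{A}_1$'s weight is at most that of the un-post-selected state. That state is distributed exactly as in (b) before $\starverify_k$ acts. By the unforgeability bound, its weight is within $\negl(n)$ of the $\Id-\Pi_k$ branch. That branch is at most the full acceptance probability of $\mathcal{T}_V^{\starverify_k}$ on your adversary's proof, which is below $\delta$.

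\paragraph{Consequences for your plan.} With these two facts, your step (i), which you flag as the main obstacle, becomes unnecessary. The valid-but-EPR-failing part is simply dropped by positivity, and neither the Bell projection nor \cref{lem:preserve-randomness} is needed. As written, though, the crux is left at ``the first thing to try \ldots\ failing that'', so the argument is incomplete.

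\paragraph{How the paper does it.} The paper keeps a single key. It identifies $\mathcal{A}_1$'s retained branch with the simulated rejecting branch $\Lambda^{\mathrm{rej}}$. It then uses $\|\Gamma^{\mathrm{rej}} - \ketbra{\bot}_\reg{M}\otimes\Lambda^{\mathrm{rej}}\|_\diamond \leq \negl(n)$ to swap in the real rejecting branch for an encryption of the dummy plaintext $\ket{0}$ under that same $k$. It concludes from $\starverify_k^{\mathrm{rej}} \leq \starverify_k$ and $\delta$-soundness applied to the efficiently preparable, $k$-dependent proof $\psi_k$; this is legitimate because $k$ is part of the protocol description. You already cited that rejecting-branch closeness. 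Using it with the same key and a fixed plaintext closes the proof with no independent-key unforgeability. Your repaired variant buys a soundness adversary that does not depend on $k$ and never has to pin down $\Lambda^{\mathrm{rej}}$, at the cost of the extra unforgeability step.
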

\begin{proof}
  Fix an oracle $\oracle$.
  Let $\Gamma$ and $\Lambda$ be the real and simulated effective attack maps from \cref{def:qca} for the attack $\mathcal{T}_{P+V,1}^{\mathcal{S}}$, and let $\Gamma^{\mathrm{rej}}$ be the rejecting branch of $\Gamma$.
  By QCA security,
  \[
    \|\Gamma^{\mathrm{rej}}-\ketbra{\bot}_\reg{M}\otimes\Lambda^{\mathrm{rej}}\|_\diamond\leq\negl(n).
  \]
  Let $\mathcal{T}^{\mathrm{acc}}_{V,2}$ be the accepting branch of $\mathcal{T}_{V,2}$ and let $\mathcal{N}$ be the post-processing subchannel which post-selects onto $\Pi^\bot$, discards $\reg{MCZ}$ and any unused simulator output $\reg{M'}$, prepares $\reg{B}$ as $\ket{0}$, and applies $\mathcal{T}^{\mathrm{acc}}_{V,2}$.
  Applying $\mathcal{N}$ to the simulated rejecting branch on input $\ketbra{0,x}$ gives exactly the acceptance weight of $\mathcal{A}_1^{\oracle}(x)$.
  Applying it to the real rejecting branch instead uses an encryption of $\ketbra{0}_\reg{M}$, decrypts after the attack, and post-selects onto $\Pi^\bot$.
  By contractivity, these acceptance weights differ by at most $\negl(n)$.

  Let $\psi_k = \mathcal{T}_P^\mathcal{S}\circ\Enc_k(\ketbra{0}_\reg{M}\otimes\ketbra{x}_\reg{X})$.
  This is an efficiently preparable state because $\mathcal{T}_P$ and $\mathcal{S}$ are efficient, so the compiled argument's soundness applies to $\psi_k$ for every $k$.
  Averaging over $\oracle,k$, we obtain
  \begin{align*}
    \Pr_{\oracle}[\mathcal{A}_1^{\oracle}(x)=1]
    &\leq \E_{\oracle,k}\Tr[\ketbra{1}\mathcal{T}_{V,2}\circ\starverify_k^{\mathrm{rej}}\circ\mathcal{T}_{V,1}(\psi_k\otimes\ketbra{x})]+\negl(n) \\
    &\leq \E_{\oracle,k}\Tr[\ketbra{1}\mathcal{T}_{V,2}\circ\starverify_k\circ\mathcal{T}_{V,1}(\psi_k\otimes\ketbra{x})]+\negl(n) \\
    &< \delta(n) +\negl(n),
  \end{align*}
  where $\starverify_k^{\mathrm{rej}}(\rho)=\Tr[(\Id-\Pi_k)\rho]\ketbra{0}$ and the second inequality follows from $\starverify_k^{\mathrm{rej}}\leq\starverify_k$.
\end{proof}

\begin{lemma} \label{lem:a1-completeness}
  $|\Pr_\oracle[\mathcal{A}_1^{\oracle}(x) = 1] - \mu^x_2| \leq \negl(n)$ for $x \in \lang$.
\end{lemma}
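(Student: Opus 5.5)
The plan is to mirror the proof of \cref{lem:a1-soundness}. This time, though, the real rejecting branch will be instantiated with the \emph{honest} first message instead of a fixed dummy message, so that its acceptance weight is exactly $\mu^x_2$. I would pass through three hybrids, all of which output only the acceptance weight of $\mathcal{T}^{\mathrm{acc}}_{V,2}$ after $\reg{B}$ has been set to $\ket{0}$.

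\textbf{Step 1 (simulator to honest prover).} Start from $\mathcal{A}_1^{\oracle}(x)$ and replace the single simulator query $\mathcal{S}$ by the honest prover. Concretely, prepare a fresh copy of $\xi^x$ on $\reg{M''S}$, answer $\mathcal{T}_P$'s challenge with $\prover$ applied to $\reg{CS}$, and discard $\reg{M''}$. The challenge register is measured immediately, and the simulator's $\reg{M'}$ output is discarded in $\mathcal{A}_1$. So the only place the two hybrids differ is in the distribution of the pair $(\reg{M'},\reg{Z})$ given $c$, and special HVZK makes this computationally indistinguishable from $\pi_{x,c}$ for every $c$. Every other step of the hybrid is efficient given $k,r$: sampling $k,r$, encrypting an EPR half, running $\mathcal{T}_{P+V,1}$ with oracle access, measuring $\{\widetilde{\Pi}_{k,r},\Id-\widetilde{\Pi}_{k,r}\}$, and running $\mathcal{T}_{V,2}$. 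Copies of $\xi^x$ can be supplied as non-uniform quantum advice, which matches our notion of efficiency. A distinguisher between the two hybrids therefore breaks SHVZK, after averaging over $c$ (via a hybrid over the single query) and over $\oracle$. This step costs $\negl(n)$.

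\textbf{Step 2 (simulated to real rejecting branch).} Fix $\oracle$ and consider the attack $\mathcal{A}' = \mathcal{T}_{P+V,1}^{\prover}$, whose side register $\reg{R}$ holds $\reg{S}$ of $\xi^x$ together with $\ket{x}$. Exactly as in \cref{lem:a1-soundness}, the hybrid from Step 1 equals the post-processing $\mathcal{N}$ applied to the simulated rejecting branch $\ketbra{\bot}\otimes\Lambda^{\mathrm{rej}}$ on $\reg{R}$. Here $\mathcal{N}$ post-selects $\Pi^\bot$, discards $\reg{MCZ}$, sets $\reg{B}=\ket{0}$ and applies $\mathcal{T}^{\mathrm{acc}}_{V,2}$. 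Because $\Lambda^{\mathrm{rej}}$ acts only on $\reg{R}$, its output does not depend on what the message register contained. I may therefore feed the QCA maps the \emph{correlated} input $\xi^x_{\reg{MS}}\otimes\ketbra{x}$, with the message register equal to the true first message. The diamond-norm guarantee of \cref{def:qca}, restricted to the rejecting branch as in \cref{eq:acc-rej-both-negl}, allows arbitrary joint inputs. It gives that $\mathcal{N}\circ(\ketbra{\bot}\otimes\Lambda^{\mathrm{rej}})$ and $\mathcal{N}\circ\Gamma^{\mathrm{rej}}$ differ by $\negl(n)$ on this input, and contractivity transfers the bound to the acceptance weights.

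\textbf{Step 3 (identify $\mu^x_2$).} By \cref{fact:dec-before-after-proj}, post-selecting $\Pi^\bot$ after $\Dec_k$ equals projecting the ciphertext onto $\Id-\Pi_k$ before decryption. The real rejecting branch on input $\xi^x$ is therefore exactly the following: prepare $\sigma_k=\mathcal{T}^{\starprover}_{P+V,1}(\starxi^x_k)$, project it onto $\Id-\Pi_k$, and set $\reg{B}=\ket{0}$. The last step is precisely the action of $\starverify_k$ on the invalid branch, namely $\Tr[(\Id-\Pi_k)\rho]\ketbra{0}$. Applying $\mathcal{T}_{V,2}$ and averaging over $\oracle,k$ reproduces the definition of $\mu^x_2$ in \cref{eq:mu-defn}. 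The triangle inequality over the three steps then gives the lemma.

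I expect Step 1 to be the main obstacle. The ZK guarantee is only computational and holds per challenge, so the reduction must be efficient with oracle access. In particular it has to carry out the $\widetilde{\Pi}_{k,r}$ measurement and the QNIC algorithms itself, and it needs $\xi^x$ as advice. If the per-$c$ indistinguishability does not directly cover a challenge chosen adaptively by $\mathcal{T}_P$, I would first try to measure $\reg{C}$, note that the distinguisher can be non-uniformly fixed to the worst $c$ or can simply sample $c$ itself, and then absorb the averaging into the negligible loss.
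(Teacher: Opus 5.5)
Your proposal is correct and follows the paper's proof essentially step for step. You use an SHVZK reduction to replace $\mathcal{S}$ by $\starprover$ (the paper's $\Hyb_1$), then the QCA rejecting-branch bound on the correlated input $\xi^x \otimes \ketbra{x}$ to pass to real decryption (the paper's $\Hyb_2$), and finally \cref{fact:dec-before-after-proj} to identify the resulting weight with $\mu^x_2$; your extra care about the adaptively chosen challenge in Step~1 only makes explicit a point the paper's reduction leaves implicit.
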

\begin{proof}

This is proven through a series of hybrids such that $\Hyb_2$ accepts with probability $\mu^x_2$.

\begin{center}
\fbox{%
  \begin{minipage}{0.9\linewidth}
    $\Hyb_1^{\oracle}(x)$
    \begin{enumerate}
      \item Sample $k \gets \Gen(1^n)$ and $r \gets p_k$.
      \item Prepare $\mathcal{T}_{P+V,1}^{\changed{\starprover}} \circ \Enc_{k;r}(\ketbra{\Phi^+}_\reg{LM} \otimes \changed{\xi^{x}_\reg{XM'S}})$.
      \item Post-select onto $\Id-\widetilde{\Pi}_{k,r}$. \label{item:case-1-hyb-1-measure}
      \item Discard $\reg{ALM'CZ}$, prepare $\reg{B}$ as $\ket{0}$, apply $\mathcal{T}_{V,2}$, and forward the output. \label{item:case-1-hyb-1-output}
    \end{enumerate}
  \end{minipage}
}
\end{center}

\begin{claim} \label{claim:case-1-hyb-0-vs-1}
  $|\Pr_\oracle[\mathcal{A}_1^\oracle(x) = 1] - \Pr_\oracle[\Hyb_1^{\oracle}(x) = 1]| \leq \negl(n)$ for $x \in \lang$.
\end{claim}
\begin{proof}
  Fix an oracle $\oracle$.
  Observe that the $\reg{M'}$ register is only discarded in Steps \ref{item:case-1-hyb-1-measure} -- \ref{item:case-1-hyb-1-output} and so does not affect the probability that either hybrid outputs 1.
  Thus the only relevant difference between $\mathcal{A}_1$ and $\Hyb_1$ is in the oracle $\mathcal{S}$ and its internal register $\mathsf{S}$.
  Towards contradiction, assume there exists a polynomial $p$ such that
  \[
    |\Pr[\mathcal{A}_1(x) = 1] - \Pr[\Hyb_1(x) = 1]| > 1/p(n).
  \]
  Then we can break the SHVZK of the underlying $\Xi$-protocol.
  The reduction implements the measurement defining $\mathcal{A}_1$ (or $\Hyb_1$), outputs $0$ on the omitted outcome, and otherwise forwards the final output, with $\mathcal{T}_P$'s oracle simulated by the SHVZK challenger.
  That is to say, when $\mathcal{T}_P$ sends $\reg{C}$ to its oracle, measure it in the computational basis, send the outcome $c$ to the challenger, receive $\rho_\reg{M'Z}$ in response, and forward the $\reg{Z}$ register back to $\mathcal{T}_P$.
  If the SHVZK challenger sampled $\rho$ from an honest execution, the reduction accepts with exactly the acceptance weight of $\Hyb_1$.
  Otherwise, its acceptance probability is the acceptance weight of $\mathcal{A}_1$.
  By our assumption, this is an inverse poly distinguisher and we arrive at a contradiction.
\end{proof}

\begin{center}
\fbox{%
  \begin{minipage}{0.9\linewidth}
    $\Hyb_2^{\oracle}(x)$
    \begin{enumerate}
      \item Sample $k \gets \Gen(1^n)$ and $r \gets p_k$.
      \item Prepare $\changed{\Dec_k} \circ \mathcal{T}_{P+V,1}^{\starprover} \circ \Enc_{k;r}(\changed{\xi^x})$.
      \item \changed{Post-select onto $\Pi^\bot$.}
      \item Discard $\reg{MCZ}$, prepare $\reg{B}$ as $\ket{0}$, apply $\mathcal{T}_{V,2}$, and forward the output.
    \end{enumerate}
  \end{minipage}
}
\end{center}

\begin{claim} \label{claim:case-1-hyb-1-vs-2}
  $|\Pr_{\oracle}[\Hyb_1^{\oracle}(x) = 1] - \Pr_{\oracle}[\Hyb_2^{\oracle}(x) = 1]| \leq \negl(n)$ for $x \in \lang$.
\end{claim}
\begin{proof}
  Fix an oracle $\oracle$.
  Let $\Gamma$ (resp.\ $\Lambda$) be the real (resp.\ simulated) effective attack map from \cref{def:qca} for the fixed attack $\mathcal{T}_{P+V,1}^{\starprover}$.
  Let $\mathcal{N}$ be the post-processing subchannel from the proof of \cref{lem:a1-soundness}.
  Then
  \ifllncs
  \begin{align*}
    &|\Pr[\Hyb_2^{\oracle}(x) = 1] - \Pr[\Hyb_1^{\oracle}(x) = 1]| \\
    &\qquad = \left|\Tr\left[\mathcal{N} \circ \Gamma^{\mathrm{rej}}(\xi^x)\right] - \Tr\left[\mathcal{N} \circ (\ketbra{\bot}_\reg{M} \otimes \Lambda^{\mathrm{rej}})(\xi^x)\right]\right| \\
    &\qquad = \left\| \mathcal{N} \circ \Gamma^{\mathrm{rej}}(\xi^x) - \mathcal{N} \circ (\ketbra{\bot}_\reg{M} \otimes \Lambda^{\mathrm{rej}})(\xi^x) \right\|_1 \\
    &\qquad \leq \| \mathcal{N} \circ \Gamma^{\mathrm{rej}} - \mathcal{N} \circ (\ketbra{\bot}_\reg{M} \otimes \Lambda^{\mathrm{rej}}) \|_\diamond \\
    &\qquad \leq \| \Gamma^{\mathrm{rej}} - \ketbra{\bot}_\reg{M} \otimes \Lambda^{\mathrm{rej}} \|_\diamond \\
    &\qquad \leq \negl(n) \qedhere
  \end{align*}
  \else
  \begin{align*}
    |\Pr[\Hyb_2^{\oracle}(x) = 1] - \Pr[\Hyb_1^{\oracle}(x) = 1]| &= \left|\Tr\left[\mathcal{N} \circ \Gamma^{\mathrm{rej}}(\xi^x)\right] - \Tr\left[\mathcal{N} \circ (\ketbra{\bot}_\reg{M} \otimes \Lambda^{\mathrm{rej}})(\xi^x)\right]\right| \\
    &= \left\| \mathcal{N} \circ \Gamma^{\mathrm{rej}}(\xi^x) - \mathcal{N} \circ (\ketbra{\bot}_\reg{M} \otimes \Lambda^{\mathrm{rej}})(\xi^x) \right\|_1 \\
    &\leq \| \mathcal{N} \circ \Gamma^{\mathrm{rej}} - \mathcal{N} \circ (\ketbra{\bot}_\reg{M} \otimes \Lambda^{\mathrm{rej}}) \|_\diamond \\
    &\leq \| \Gamma^{\mathrm{rej}} - \ketbra{\bot}_\reg{M} \otimes \Lambda^{\mathrm{rej}} \|_\diamond \\
    &\leq \negl(n) \qedhere
  \end{align*}
  \fi
\end{proof}
Using \cref{fact:dec-before-after-proj}, we get
\ifllncs
\begin{align*}
  &\Pr_\oracle[\Hyb_2^{\oracle}(x) = 1] \\
  &\hspace{2mm} = \E_{\oracle,k} \Tr\left[\ketbra{1} \mathcal{T}_{V,2}\left(\Tr_\reg{MCZ}\left[\Pi^\bot(\Dec_k \circ \mathcal{T}_{P+V,1}^{\starprover}(\starxi^x_k))\Pi^\bot\right] \otimes \ketbra{0}_\reg{B}\right)\right] \\
  &\hspace{2mm} = \E_{\oracle,k} \Tr\left[\ketbra{1} \mathcal{T}_{V,2}\left(\Tr_\reg{MCZ}\left[\Dec_k((\Id - \Pi_k)(\mathcal{T}_{P+V,1}^{\starprover}(\starxi^x_k))(\Id - \Pi_k))\right] \otimes \ketbra{0}_\reg{B}\right)\right] \\
  &\hspace{2mm} = \E_{\oracle,k} \Tr\left[\ketbra{1} \mathcal{T}_{V,2} \circ \starverify_k((\Id - \Pi_k)\sigma_k(\Id - \Pi_k))\right] \\
  &\hspace{2mm} = \mu^x_2
\end{align*}
\else
\begin{align*}
  \Pr_\oracle[\Hyb_2^{\oracle}(x) = 1]
  &= \E_{\oracle,k} \Tr\left[\ketbra{1} \mathcal{T}_{V,2}\left(\Tr_\reg{MCZ}\left[\Pi^\bot(\Dec_k \circ \mathcal{T}_{P+V,1}^{\starprover}(\starxi^x_k))\Pi^\bot\right] \otimes \ketbra{0}_\reg{B}\right)\right] \\
  &= \E_{\oracle,k} \Tr\left[\ketbra{1} \mathcal{T}_{V,2}\left(\Tr_\reg{MCZ}\left[\Dec_k((\Id - \Pi_k)(\mathcal{T}_{P+V,1}^{\starprover}(\starxi^x_k))(\Id - \Pi_k))\right] \otimes \ketbra{0}_\reg{B}\right)\right] \\
  &= \E_{\oracle,k} \Tr\left[\ketbra{1} \mathcal{T}_{V,2} \circ \starverify_k((\Id - \Pi_k)\sigma_k(\Id - \Pi_k))\right] \\
  &= \mu^x_2
\end{align*}
\fi

\noindent Combining Claims \ref{claim:case-1-hyb-0-vs-1} -- \ref{claim:case-1-hyb-1-vs-2}, we have our result.
\end{proof}

\subsection{Case 2: Valid Ciphertexts} \label{sec:case-2}

On the valid ciphertext branch, we can use our results from \cref{sec:retrospective} to defer the first message until after receiving the challenge.
Because of that reordering, we can replace the prover with the simulator and create what should be a convincing proof without knowledge of the witness.

\begin{center}
\fbox{%
  \begin{minipage}{0.9\linewidth}
    $\mathcal{A}_2^{\oracle}(x)$
    \begin{enumerate}
      \item Sample $k \gets \Gen(1^n)$ and $r \gets p_k$.
      \item Prepare $\sigma_{k,r} = \mathcal{T}_{P+V,1}^{\mathcal{S}} \circ \Enc_{k;r}(\ketbra{\Phi^+}_\reg{LM} \otimes \ketbra{x}_\reg{X})$.
      \item Post-select onto $\widetilde{\Pi}_{k,r}$, then discard $\reg{AL}$.
      \item Encrypt $\reg{M'}$ with $\Enc_{k;r}$ yielding register $\reg{A'}$.
      \item Measure $\reg{A'CZE}$ with $\{\accpovm_k, \rejpovm_k\}$. If the outcome is $\accpovm_k$, output 1. Otherwise output 0.
        \label{step:hyb4-measure}
    \end{enumerate}
  \end{minipage}
}
\end{center}

\begin{lemma}\label{lem:a2-soundness}
  $\Pr[\mathcal{A}_2(x) = 1] < \delta(n)$ for $x \notin \lang$.
\end{lemma}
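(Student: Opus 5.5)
The plan is to turn $\mathcal{A}_2$ into a cheating prover against the compiled argument for a fixed protocol $(\starxi^x_k,\starprover,\starverify_k)$ of the family, and then invoke $\delta$-soundness of the QNIC. First, for every fixed key $k$, the triple $(\{\starxi^x_k\},\starprover,\starverify_k)$ is itself a $\Xi$-protocol for $\lang$, so the QNIC guarantees apply to it.
\begin{itemize}[leftmargin=*]
\item \emph{Completeness:} decryption with $k$ undoes $\Enc_k$ up to $\negl(n)$.
\item \emph{Soundness:} any (unbounded) cheating $\rho,\prover^*$ against $\starverify_k$ can be converted into a cheating prover against $\verify$. The converted prover applies $\Dec_k$ to $\reg{A}$ itself and replaces a $\bot$ outcome by an arbitrary message, which can only increase acceptance.
\item \emph{SHVZK:} the simulator runs $\Sim$ and encrypts the $\reg{M}$ output under $k$.
\end{itemize}
Hence $\delta$-soundness of $(\mathcal{T}_P,\mathcal{T}_V)$ holds with verifier oracle $\starverify_k$ for every $k$, and therefore also on average over $k\gets\Gen$.

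Next I would build, for each $k$ (and with $k$ sampled internally, which is allowed since soundness holds pointwise in $k$), a QPT adversary $\mathcal{F}^{\oracle}(x)$ that outputs a proof $\pi$. The adversary first runs Steps 1--4 of $\mathcal{A}_2$, implementing the post-selection onto $\widetilde{\Pi}_{k,r}$ as a two-outcome measurement; this is efficient because $U_k$ and $\ket{\psi^{(k,r)}}$ are efficient given $(k,r)$. On the failing outcome it outputs an arbitrary fixed proof. Otherwise it holds a state $\tau$ on $\reg{A'CZE}$ (plus $\mathcal{T}_V$'s workspace), and it applies the inverse dilation $T_{V,1}^\dagger$; this is the only place where access to $\oracle^\dagger$ is used. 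It then measures whether $\reg{E}$ is $\ket{0}$, outputs the remaining register as $\pi$ on success, and outputs junk on failure.

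Then I would show that the acceptance weight of $\mathcal{A}_2(x)$ is at most $\Pr[\mathcal{T}_V^{\oracle,\starverify_k}(\pi)=1]$. On the successful branch, feeding $\pi$ to $\mathcal{T}_V$ applies $T_{V,1}$ to $\pi\otimes\ketbra{0}_\reg{E}$, which reproduces exactly $\tvproj\,\tau\,\tvproj$. Then $\starverify_k$ followed by $\mathcal{T}_{V,2}$ accepts with probability
\[
  \Tr\bigl[(\mathcal{T}_{V,2}\circ\starverify_k)^\dagger(\ketbra{1})\,\tvproj\tau\tvproj\bigr]=\Tr[\accpovm_k\tau],
\]
which is precisely the weight with which Step~\ref{step:hyb4-measure} outputs 1. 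The aborted branches contribute nonnegative acceptance, so they only increase $\mathcal{F}$'s success probability. Averaging over $\oracle,k,r$ gives $\Pr[\mathcal{A}_2(x)=1]\le\Pr[\mathcal{T}_V(\mathcal{F}(x))=1]<\delta(n)$.

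The main point of care is the bookkeeping of the dilation. I need to check that $\tvproj=T_{V,1}(\Id\otimes\ketbra{0}_\reg{E})T_{V,1}^\dagger$ means that the sequence ``apply $T_{V,1}^\dagger$, project $\reg{E}$ onto $\ket{0}$, reapply $T_{V,1}$'' acts exactly as $\tvproj$. I also need to confirm that the $\reg{A'}$ register produced by re-encryption, together with $\reg{CZE}$, is exactly the register layout on which $T_{V,1}$'s image lives. A second point is making sure $\mathcal{F}$ is genuinely QPT. This holds because $\mathcal{T}_P$, $\mathcal{S}$, $\Enc$, the projector measurements and $T_{V,1}^\dagger$ are all efficient relative to $(\oracle,\oracle^\dagger)$, which is exactly the adversary class for which the QNIC soundness is assumed.
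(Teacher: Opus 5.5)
Your proposal is correct and follows essentially the same route as the paper. You run Steps 1--4 of $\mathcal{A}_2$ with the post-selection implemented as a physical measurement and uncompute $\mathcal{T}_{V,1}$; the paper measures $\{\tvproj, \Id - \tvproj\}$ and then applies $T_{V,1}^\dagger$, which is equivalent to your ``apply $T_{V,1}^\dagger$, then check that $\reg{E}$ is $\ket{0}$.'' You then output the residual proof and observe that aborted branches only add acceptance weight, so $\delta$-soundness bounds the weight of Step~\ref{step:hyb4-measure}. The remaining differences are cosmetic: the paper argues by contradiction and averages to fix a single key $k^*$, whereas you bound each $k$ directly and then average; you also explicitly verify that each $(\starxi^x_k, \starprover, \starverify_k)$ is itself a $\Xi$-protocol, which the paper leaves implicit.
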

\begin{proof}
  By the $\delta$-soundness assumption, for all $k$, efficiently preparable states $\psi_k \in \States{P}$, and $x \notin \lang$,
  \[
    \E_\oracle \Tr[(\mathcal{T}_{V,2} \circ \starverify_k)^\dagger(\!\ketbra{1}) T_{V,1} (\psi_k \otimes \ketbra{x}_\reg{X} \otimes \ketbra{0}_\reg{E})T_{V,1}^\dagger] < \delta
  \]
  Towards contradiction, assume $\Pr_{\oracle}[\mathcal{A}_2^{\oracle}(x) = 1] \geq \delta$.
  For fixed $k$, let $\psi_k$ be the subnormalized state before Step \ref{step:hyb4-measure} of $\mathcal{A}_2^{\oracle}$, averaged over $r$; its trace includes the probability of the retained outcome in Step 3.
  Then by our assumption, $\E_{\oracle,k} \Tr[\accpovm_k\psi_k] \geq \delta$.
  By an averaging argument, there exists some $k^*$ such that $\E_\oracle \Tr[\accpovm_{k^*}\psi_{k^*}] \geq \delta$.
  Let $\varphi$ be the state prepared in the following way:
  \begin{enumerate}
    \item Run Steps 1 -- 4 of $\mathcal{A}_2^{\oracle}$ with $k=k^*$, physically measuring $\{\widetilde{\Pi}_{k^*,r},\Id-\widetilde{\Pi}_{k^*,r}\}$ in Step 3. If the outcome is $\Id-\widetilde{\Pi}_{k^*,r}$, output any fixed state on $\reg{P}$ and stop.
    \item Measure with $\{\tvproj,\Id-\tvproj\}$. If the outcome is $\tvproj$, apply $T_{V,1}^\dagger$ and output the residual state on $\reg{P}$.
    \item Otherwise, output any fixed state on $\reg{P}$.
  \end{enumerate}
  This prepares a normalized state efficiently, without conditioning on either measurement outcome.
  If we apply $T_{V,1}$ to $\varphi$ and measure with $(\mathcal{T}_{V,2} \circ \starverify_{k^*})^\dagger(\!\ketbra{1})$, the branch retaining both outcomes contributes exactly $\Tr[\accpovm_{k^*}\psi_{k^*}]$ to acceptance.
  The other branches contribute nonnegative amounts, so the total accepting probability is $\geq\delta$ in expectation over $\oracle$.
  This contradicts $\delta$-soundness and the lemma holds.
\end{proof}

\begin{lemma}\label{lem:a2-completeness}
  $\Pr[\mathcal{A}_2(x) = 1] \geq \mu^x_1 - \frac{2}{3\sqrt{3}} - \negl(n)$ for $x \in \lang$.
\end{lemma}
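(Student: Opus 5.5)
The plan is to run the argument from the technical overview as a hybrid sequence. It starts at $\mathcal{A}_2$ and ends at a quantity that is directly comparable to $\mu^x_1$. Only the very last step loses a non-negligible amount, and Lemma~\ref{lem:alternating-projections} controls that loss. Throughout I fix $\oracle$ and average at the end. Since $\accpovm_k\ge 0$, every hybrid's acceptance weight is a trace against a fixed positive operator. Diamond-norm closeness of the preceding subchannels therefore transfers directly to closeness of acceptance weights.

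\textbf{Step 1 (simulator to honest prover).} Define $\Hyb_1$ to be $\mathcal{A}_2$ with $\mathcal{T}_{P+V,1}^{\mathcal{S}}$ replaced by $\mathcal{T}_{P+V,1}^{\starprover}$. The fresh input $\ketbra{x}$ is replaced by $\xi^x_{\reg{XM'S}}$, with the message half on $\reg{M'}$ held aside. Steps 3--5 are unchanged: post-select on $\widetilde\Pi_{k,r}$, discard $\reg{AL}$, encrypt $\reg{M'}$ under $\Enc_{k;r}$, and measure $\accpovm_k$. The register $\reg{M'}$ is untouched until Step 4. Hence the reduction works exactly as in Claim~\ref{claim:case-1-hyb-0-vs-1}. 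It measures $\mathcal{T}_P$'s query $\reg{C}$ to get $c$, obtains $\rho_{\reg{M'Z}}$ from the SHVZK challenger, and returns $\reg{Z}$ while keeping $\reg{M'}$ for the later re-encryption. SHVZK then gives a negligible difference in acceptance weight.

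\textbf{Step 2 (undo the deferral).} In $\Hyb_1$, Steps 2--4 are exactly $\Exp^{\mathcal{A}}_\mathsf{defer}(\xi^x)$ for the attack $\mathcal{A}=\mathcal{T}_{P+V,1}^{\starprover}$, with $\reg{R}$ comprising $\reg{S}$, $\reg{X}$ and all of $\mathcal{T}$'s workspace, and with $k$ output. Lemma~\ref{lem:exps-close} lets me replace this by $\Exp^{\mathcal{A}}_\mathsf{real}(\xi^x)$ up to negligible diamond distance. The resulting $\Hyb_2$ prepares $\sigma_k$ averaged over $r$, that is, $\mathcal{T}_{P+V,1}^{\starprover}(\starxi^x_k)$. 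It then post-selects on $\Pi_k$ and measures $\accpovm_k$. Its acceptance weight is therefore
\[
  \E_{\oracle,k}\Tr\bigl[M\,\tvproj\Pi_k\sigma_k\Pi_k\tvproj\bigr],
\]
where $M=(\mathcal{T}_{V,2}\circ\starverify_k)^\dagger(\ketbra{1})$.

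\textbf{Step 3 (comparison with $\mu^x_1$).} This is the main obstacle. The target is $\mu^x_1=\E_{\oracle,k}\Tr[M\Pi_k\sigma_k\Pi_k]$, but $\Hyb_2$ additionally re-projects onto $\tvproj$ after $\Pi_k$, and $\Pi_k$ may push $\sigma_k$ partly out of the image of $T_{V,1}$. I will apply Lemma~\ref{lem:alternating-projections} with $\Pi_1=\Pi_k$, $\Pi_2=\tvproj$ and $\rho=\sigma_k$. The hypothesis $\tvproj\sigma_k\tvproj=\sigma_k$ holds because $\sigma_k$ is the output of the dilation $T_{V,1}$. The hypothesis that $M$ commutes with $\Pi_k$ holds for the following reason. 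From the formula $\starverify_k(\rho)=\verify\circ\Dec_k(\Pi_k\rho\Pi_k)+\Tr[(\Id-\Pi_k)\rho]\ketbra{0}$, the adjoint $\starverify_k^\dagger$ of any operator is block diagonal with respect to $\{\Pi_k,\Id-\Pi_k\}$ on $\reg{A'}$. Composing with $\mathcal{T}_{V,2}^\dagger$ acts only on other registers, so the block structure is preserved. The lemma then gives
\[
  \Tr[M\Pi_k\sigma_k\Pi_k]-\Tr[M\tvproj\Pi_k\sigma_k\Pi_k\tvproj]\le\tfrac{2}{3\sqrt3}
\]
pointwise in $(\oracle,k)$. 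Averaging this bound and chaining it with Steps 1--2 yields $\Pr[\mathcal{A}_2(x)=1]\ge\mu^x_1-\frac{2}{3\sqrt3}-\negl(n)$.
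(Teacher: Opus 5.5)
Your proof is correct and follows essentially the same route as the paper: SHVZK to swap the simulator for the honest prover, Lemma~\ref{lem:exps-close} to pass from $\Exp_\mathsf{defer}$ to $\Exp_\mathsf{real}$, and Lemma~\ref{lem:alternating-projections} with $\Pi_1=\Pi_k$, $\Pi_2=\tvproj$ (using that $(\mathcal{T}_{V,2}\circ\starverify_k)^\dagger(\ketbra{1})$ is block diagonal with respect to $\Pi_k$) to lose at most $\frac{2}{3\sqrt{3}}$ against $\mu^x_1$. The only cosmetic differences are that the paper routes the final comparison through the honest-execution hybrid $\Hyb_3$ and subtracts $\mu^x_2$, whereas you compare with $\mu^x_1$ directly, and that you apply the alternating-projections bound to $\sigma_k$ rather than to each $\sigma_{k,r}$ before averaging.
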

\begin{proof}

This is proven through a series of subchannel hybrids where $\Hyb_3$ is the trace-preserving special case corresponding to the honest execution.

\begin{center}
\fbox{%
  \begin{minipage}{0.9\linewidth}
    $\Hyb_1^{\oracle}(x)$
    \begin{enumerate}
      \item Sample $k \gets \Gen(1^n)$ and $r \gets p_k$.
      \item Prepare $\sigma_{k,r} = \mathcal{T}_{P+V,1}^{\changed{\starprover}} \circ \Enc_{k;r}(\ketbra{\Phi^+}_\reg{LM} \otimes \changed{\xi^x_\reg{XM'S}})$.
      \item Post-select onto $\widetilde{\Pi}_{k,r}$, then discard $\reg{AL}$.
      \item Encrypt $\reg{M'}$ with $\Enc_{k;r}$ yielding register $\reg{A'}$. \label{item:case-2-hyb-1-before-measure}
      \item Measure $\reg{A'CZE}$ with $\{\accpovm_k, \rejpovm_k\}$. If the outcome is $\accpovm_k$, output 1. Otherwise output 0.
    \end{enumerate}
  \end{minipage}
}
\end{center}
\begin{claim} \label{claim:case-2-hyb-0-vs-1}
  $|\Pr_{\oracle}[\mathcal{A}_2^{\oracle}(x) = 1] - \Pr_{\oracle}[\Hyb_1^{\oracle}(x) = 1]| \leq \negl(n)$ for $x \in \lang$.
\end{claim}
\begin{proof}
  By the SHVZK argument in \cref{claim:case-1-hyb-0-vs-1}, with the reduction outputting $0$ on the omitted measurement outcome. Thus it compares the unconditional acceptance weights of the two subchannels.
\end{proof}

\begin{center}
\fbox{%
  \begin{minipage}{0.9\linewidth}
    $\Hyb_2^{\oracle}(x)$
    \begin{enumerate}
      \item Sample $k \gets \Gen(1^n)$ and $r \gets p_k$.
      \item Prepare $\sigma_{k,r} = \mathcal{T}_{P+V,1}^{\starprover} \circ \Enc_{k;r}(\changed{\xi^x_\reg{XMS}})$.
      \item \changed{Post-select onto $\Pi_k$.} \label{item:case-2-hyb-2-before-measure}
      \item Measure $\reg{\changed{A}CZE}$ with $\{\accpovm_k, \rejpovm_k\}$. If the outcome is $\accpovm_k$, output 1. Otherwise output 0.
    \end{enumerate}
  \end{minipage}
}
\end{center}

\begin{claim}
  $|\Pr_{\oracle}[\Hyb_1^{\oracle}(x) = 1] - \Pr_{\oracle}[\Hyb_2^{\oracle}(x) = 1]| \leq \negl(n)$ for $x \in \lang$.
\end{claim}
\begin{proof}
  Fix an oracle $\oracle$.
  Steps 1 -- \ref{item:case-2-hyb-1-before-measure} of $\Hyb_1^{\oracle}$ are exactly the steps in $\Exp_\mathsf{defer}^{\mathcal{A}}(\xi^x)$ for $\mathcal{A} = \mathcal{T}_{P+V,1}^{\starprover}$ and similarly Steps 1 -- \ref{item:case-2-hyb-2-before-measure} of $\Hyb_2^{\oracle}$ correspond to $\Exp_\mathsf{real}^\mathcal{A}(\xi^x)$.
  By \cref{lem:exps-close}, these subchannels are negligibly close in diamond distance\footnote{Note that both experiments output $k$ so the measurement can depend on $k$.}.
  Therefore, the final measurement accepts in the two hybrids with probabilities differing by at most $\negl(n)$.
\end{proof}

\begin{center}
\fbox{%
  \begin{minipage}{0.9\linewidth}
    $\Hyb_3^{\oracle}(x)$
    \begin{enumerate}
      \item Sample $k \gets \Gen(1^n)$ and $r \gets p_k$.
      \item Prepare $\sigma_{k,r} = \mathcal{T}_{P+V,1}^{\starprover} \circ \Enc_{k;r}(\xi^x_\reg{XMS})$.
      \stepcounter{enumi}
      \item[\changed{\labelenumi}] Measure $\reg{ACZE}$ with $\{\accpovm_k, \rejpovm_k\}$. If the outcome is $\accpovm_k$, output 1. Otherwise output 0.
    \end{enumerate}
  \end{minipage}
}
\end{center}

\begin{claim} \label{claim:case-2-hyb-2-vs-3}
  $\Pr_{\oracle}[\Hyb_3^{\oracle}(x) = 1] - \Pr_{\oracle}[\Hyb_2^{\oracle}(x) = 1] \leq \mu^x_2 + \frac{2}{3\sqrt{3}}$ for $x \in \lang$.
\end{claim}
\begin{proof}
  Fix $\oracle,k,r$.
  By construction, $\tvproj\sigma_{k,r}\tvproj = \sigma_{k,r}$.
  Moreover, $(\mathcal{T}_{V,2} \circ \starverify_k)^\dagger(\!\ketbra{1})$ commutes with $\Pi_k$, since $\starverify_k$ measures ciphertext validity before proceeding.
  Apply \cref{lem:alternating-projections} with projectors $\Pi_k,\tvproj$, measurement operator $(\mathcal{T}_{V,2} \circ \starverify_k)^\dagger(\!\ketbra{1})$, and state $\sigma_{k,r}$ to obtain
  \[
    \Tr[\ketbra{1}\mathcal{T}_{V,2}\circ\starverify_k(\Pi_k\sigma_{k,r}\Pi_k)] - \Tr[\accpovm_k\Pi_k\sigma_{k,r}\Pi_k] \leq \frac{2}{3\sqrt{3}}.
  \]
  Taking the expectation over $(\oracle, k, r)$ and using $\E_r[\sigma_{k,r}] = \sigma_k$ gives
  \[
    \mu^x_1 - \Pr_{\oracle}[\Hyb_2^{\oracle}(x)=1]
    \leq \frac{2}{3\sqrt{3}}.
  \]
  Since $\Pr_{\oracle}[\Hyb_3^{\oracle}(x)=1]=\mu^x_1+\mu^x_2$, the claim follows.
\end{proof}

$\Hyb_3^{\oracle}$ is equivalent to the honest execution of the compiled non-interactive argument for $x$ with a protocol randomly sampled from our family.
Combining Claims \ref{claim:case-2-hyb-0-vs-1} -- \ref{claim:case-2-hyb-2-vs-3}, we get
\[
  \Pr_{\oracle}[\mathcal{A}_2^{\oracle}(x) = 1]
  \geq \Pr_{\oracle}[\Hyb_3^{\oracle}(x)=1] - \mu^x_2 - \frac{2}{3\sqrt{3}} - \negl(n) \\
  = \mu^x_1 - \frac{2}{3\sqrt{3}} - \negl(n). \qedhere
\]
\end{proof}

\subsection{Proof of Theorem 1} \label{sec:combine-cases}

The subchannels $\mathcal{A}_1$ and $\mathcal{A}_2$ both begin by preparing $\sigma_{k,r}$.
They then retain complementary outcomes of the same projective measurement, and their subsequent steps are trace preserving.
Consequently, for every fixed $(\oracle,x)$,
\[
  \Tr[\mathcal{A}_1^{\oracle}(x)]+\Tr[\mathcal{A}_2^{\oracle}(x)]
  =\E_{k,r}\Tr[(\Id-\widetilde{\Pi}_{k,r})\sigma_{k,r}]
  +\E_{k,r}\Tr[\widetilde{\Pi}_{k,r}\sigma_{k,r}]
  =1.
\]
Thus their sum defines a channel $\mathcal{A}^{\oracle}=\mathcal{A}_1^{\oracle}+\mathcal{A}_2^{\oracle}$.
Operationally, $\mathcal{A}$ performs the common preparation once and measures $\{\widetilde{\Pi}_{k,r},\Id-\widetilde{\Pi}_{k,r}\}$.
On outcome $\Id-\widetilde{\Pi}_{k,r}$, it continues as in $\mathcal{A}_1$; on outcome $\widetilde{\Pi}_{k,r}$, it continues as in $\mathcal{A}_2$.
It always forwards the resulting output bit, without post-selection.
In particular, for all $x$, the acceptance weights add exactly:
\begin{equation} \label{eq:combined-acceptance}
  \Pr_{\oracle}[\mathcal{A}^{\oracle}(x)=1]
  =\Pr_{\oracle}[\mathcal{A}_1^{\oracle}(x)=1]
  +\Pr_{\oracle}[\mathcal{A}_2^{\oracle}(x)=1].
\end{equation}
All steps are efficient, so $\mathcal{A}$ is itself efficient.

\begin{lemma}
  $\Pr_{\oracle}[\mathcal{A}^{\oracle}(x) = 1] \leq 2\delta + \negl(n)$ if $x \notin \lang$.
\end{lemma}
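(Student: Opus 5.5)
This follows directly from the additivity of acceptance weights established in \cref{eq:combined-acceptance}, combined with the two soundness lemmas already proved for the branches. Fix $x \notin \lang$. By \cref{eq:combined-acceptance}, $\Pr_{\oracle}[\mathcal{A}^{\oracle}(x)=1]$ is exactly the sum of the unconditional acceptance weights of the subchannels $\mathcal{A}_1$ and $\mathcal{A}_2$. So it suffices to bound each weight by $\delta + \negl(n)$ separately.

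For the invalid-ciphertext branch, I will invoke \cref{lem:a1-soundness}, which gives $\Pr_{\oracle}[\mathcal{A}_1^{\oracle}(x)=1] < \delta(n) + \negl(n)$. For the valid-ciphertext branch, I will invoke \cref{lem:a2-soundness}, which gives $\Pr_{\oracle}[\mathcal{A}_2^{\oracle}(x)=1] < \delta(n)$. Adding the two bounds yields
\[
  \Pr_{\oracle}[\mathcal{A}^{\oracle}(x)=1] < 2\delta(n) + \negl(n),
\]
which is the claim (in fact with strict inequality).

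The only point needing care is that both lemmas measure the same kind of quantity that \cref{eq:combined-acceptance} adds: unconditional acceptance weights, $\E_{\oracle}\Tr[\ketbra{1}\mathcal{A}_i^{\oracle}(x)]$, rather than probabilities conditioned on the retained outcome. This matches the convention fixed at the end of \cref{sec:compiled-notation}. Both soundness proofs were also phrased in terms of these weights: \cref{lem:a1-soundness} bounds the weight through the post-processing $\mathcal{N}$ on the rejecting branch, and \cref{lem:a2-soundness} works with the subnormalized state $\psi_k$. So the addition is legitimate, and I expect no real obstacle; the substantive work was done in those two lemmas.
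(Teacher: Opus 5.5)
Your proposal is correct and is exactly the paper's argument: bound the acceptance weights of $\mathcal{A}_1$ and $\mathcal{A}_2$ using \cref{lem:a1-soundness} and \cref{lem:a2-soundness}, then add them via \cref{eq:combined-acceptance}. You were also right to check that both lemmas bound unconditional (subnormalized) acceptance weights, since those are the quantities the equation adds.
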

\begin{proof}
  By Lemmas \ref{lem:a1-soundness} and \ref{lem:a2-soundness}, each subchannel has acceptance weight $<\delta+\negl(n)$.
  The result follows from \cref{eq:combined-acceptance}.
\end{proof}

\begin{lemma}
  $\Pr_{\oracle}[\mathcal{A}^{\oracle}(x) = 1] \geq 2\delta + \kappa - \negl(n)$ if $x \in \lang$.
\end{lemma}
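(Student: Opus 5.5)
We combine \cref{eq:combined-acceptance} with the two completeness lemmas for the subchannels. By \cref{lem:a1-completeness}, $\Pr_\oracle[\mathcal{A}_1^\oracle(x)=1] \geq \mu^x_2 - \negl(n)$. By \cref{lem:a2-completeness}, $\Pr_\oracle[\mathcal{A}_2^\oracle(x)=1] \geq \mu^x_1 - \frac{2}{3\sqrt{3}} - \negl(n)$. Summing the two and using \cref{eq:combined-acceptance} gives
\[
  \Pr_{\oracle}[\mathcal{A}^{\oracle}(x)=1] \geq \mu^x_1 + \mu^x_2 - \frac{2}{3\sqrt{3}} - \negl(n).
\]

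Next we lower bound $\mu^x_1 + \mu^x_2$ using completeness of the QNIC. By \cref{eq:mu-defn}, $\mu^x_1+\mu^x_2$ equals the acceptance probability of the honestly compiled argument for $x$, averaged over $\oracle$ and over the protocol index $k$. For each fixed $k$, the triple $(\starxi^x_k,\starprover,\starverify_k)$ is a genuine $\Xi$-protocol for $\lang$, and the QNIC is universal, so $(1-\varepsilon)$-completeness applies to it. We should check this briefly: completeness of $\starverify_k$ follows from correctness of the SKQES (decryption of an honest encryption succeeds up to negligible error). Soundness holds because any cheating first message can be decrypted by a reduction holding $k$, which yields a cheating strategy against $(\xi,\prover,\verify)$. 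SHVZK follows by encrypting the simulator's output. Applying completeness for every $k$ and averaging gives $\mu^x_1+\mu^x_2 \geq 1-\varepsilon$.

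Substituting, we get $\Pr[\mathcal{A}(x)=1] \geq 1-\varepsilon-\frac{2}{3\sqrt{3}}-\negl(n) = \kappa^*-\varepsilon-\negl(n)$. The hypothesis $\varepsilon+2\delta<\kappa^*-\kappa$ rearranges to $\kappa^*-\varepsilon > 2\delta+\kappa$, which yields the claimed bound $2\delta+\kappa-\negl(n)$.

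The only step that is not bookkeeping is confirming that each $k$-indexed protocol is a valid $\Xi$-protocol, so that QNIC completeness may be invoked uniformly in $k$. The soundness reduction is immediate there, since soundness is required against unbounded provers and the reduction may use $k$ freely.
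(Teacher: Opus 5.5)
Your proof is correct and follows the paper's argument: sum the bounds from \cref{lem:a1-completeness} and \cref{lem:a2-completeness} via \cref{eq:combined-acceptance}, lower bound $\mu^x_1+\mu^x_2$ by $1-\varepsilon$ using QNIC completeness, and rearrange the hypothesis $\varepsilon+2\delta<\kappa^*-\kappa$. The only difference is that you explicitly check that each $k$-indexed counterexample protocol is a genuine $\Xi$-protocol, which is what lets QNIC completeness apply for every $k$; the paper uses this implicitly.
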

\begin{proof}
  By \cref{eq:combined-acceptance} and Lemmas \ref{lem:a1-completeness} and \ref{lem:a2-completeness},
  \begin{align*}
    \Pr_{\oracle}[\mathcal{A}^{\oracle}(x)=1]
    &\geq \mu^x_2+\mu^x_1-\frac{2}{3\sqrt{3}}-\negl(n) \\
    &\geq 1-\varepsilon-\frac{2}{3\sqrt{3}}-\negl(n) \\
    &= \kappa^*-\varepsilon-\negl(n) \\
    &\geq 2\delta+\kappa-\negl(n).
  \end{align*}
  where the second inequality uses our completeness bound and the last uses the hypothesis $\varepsilon+2\delta<\kappa^*-\kappa$.
\end{proof}

\bibliographystyle{alpha}
\bibliography{references}

\appendix

\end{document}